\newtheorem{theorem}{Theorem}
\newtheorem{lemma}[theorem]{Lemma}
\newtheorem{proposition}[theorem]{Proposition}
\newtheorem{assumption}{Assumption}
\theoremstyle{definition}\newtheorem{remark}{Remark}
\def\T{\mathbb{T}}
\def\m{\mathbf{m}}
\def\v{\mathbf{v}}
\def\x{\mathbf{x}}
\def\n{\mathbf{n}}
\def\R{\mathbb{R}}
\def\domain{I}
\def\C{C(\bar{\domain})}
\def\V{V_\ep}
\def\U{U}
\def\Lf{L_{f,\ep}}
\def\Lg{L_g}
\def\H{H(\domain)}
\def\L2{L^2(\domain)}
\def\gmax{g_M}
\def\Gmax{\Gamma_M}
\def\phione{\phi^{(1)}}
\def\czero{c^{(0)}}
\renewcommand{\div}[1]{\nabla\cdot\left(#1\right)}
\newcommand{\jump}[1]{\llbracket #1\rrbracket}
\newcommand{\intgrl}{\int\limits}
\newcommand{\ep}{\varepsilon}
\title[Multiphase modeling of tumor cords]{Multiphase modeling and qualitative analysis of the growth of tumor cords}
\author{Andrea Tosin}
\address{Department of Mathematics \\
		Politecnico di Torino \\
		Turin, Italy}
\email{andrea.tosin@polito.it}
\subjclass[2000]{Primary: 35R35; Secondary: 92B05, 92C37}
\keywords{Tumor growth, mathematical modeling, theory of mixtures, free boundary problems.}
\newcounter{savedenumi}
\begin{document}
\maketitle

\begin{abstract}
In this paper a macroscopic model of tumor cord growth is developed, relying
on the mathematical theory of deformable porous media. Tumor is modeled as a
saturated mixture of proliferating cells, extracellular fluid and extracellular
matrix, that occupies a spatial region close to a blood vessel whence cells get
the nutrient needed for their vital functions. Growth of tumor cells takes place
within a healthy host tissue, which is in turn modeled as a saturated mixture of
non-proliferating cells. Interactions between these two regions are accounted
for as an essential mechanism for the growth of the tumor mass. By weakening the
role of the extracellular matrix, which is regarded as a rigid non-remodeling
scaffold, a system of two partial differential equations is derived, describing
the evolution of the cell volume ratio coupled to the dynamics of the nutrient,
whose higher and lower concentration levels determine proliferation or death of
tumor cells, respectively. Numerical simulations of a reference two-dimensional
problem are shown and commented, and a qualitative mathematical analysis of some
of its key issues is proposed.
\end{abstract}

\section{Introduction}
Mathematical modeling of tumor growth dates back to many years, as the
considerable literature on the subject demonstrates. Recent surveys by Araujo
and McElwain \cite{MR2253816} and by Preziosi and Graziano \cite{MR2093306}
highlight the increasing interest by biologists and mathematicians toward the
qualitative and quantitative comprehension of this complex phenomenon, with the
common aim of devising suitable descriptive tools to predict the behavior of the
system and to virtually test the effect of different, possibly new therapies.

As a matter of fact, it is all but an easy task to conceive a mathematical
model capable to take into account the wide variety of biomechanical and
biochemical factors, in most part intimately correlated to each other, that
contribute to the overall dynamics of the system. This is particularly evident
if one considers that tumor growth is essentially a multiscale phenomenon, in
which subcellular, cellular, and tissue levels are strongly interconnected in
terms of cause and effect. In many cases, the reason for some macroscopic
outcome relies on the activation of specific internal mechanisms to the cells.
These usually trigger particular collective behaviors of groups of cells,
resulting finally in an observable effect at the macroscopic scale. In addition,
when dealing with living matter, as cells and tissues of human body are,
classical laws of Newtonian mechanics are not in principle the only rules that
the evolution of the system obeys to, due to some sort of intelligence that
makes cells able to self-organize according to possible changes in their state
or in the environment they live in.

Without claiming to be thorough, and referring instead to the above-cited papers
for a comprehensive review on the state of the art, we briefly outline here some
basic facts about mathematical modeling of tumor growth at the macroscopic
scale, in order to sketch the context which the present work fits in.

A first thread of models of tumor growth to be mentioned (see e.g., Byrne
\cite{MR2005055} and the main references listed therein) is based upon a
particular geometry of the tumor mass, the so-called spheroid, which
essentially consists in a spherically-shaped three-dimensional aggregate of
cells, primarily produced \emph{in vitro} for experimental purposes. In this
specific configuration, growth is addressed by focusing on the time evolution of
the external radius $R(t)$ of the spheroid. An integro-differential equation is
obtained by equating the time variation of the volume of the spheroid to the
overall cell proliferation, under the assumption that the latter is somehow
related to the concentration of a chemical, usually oxygen, which provides the
cells with the nutrient needed for their vital functions. This way it is
unnecessary to explicitly track the evolution of the cell density, and the model
is closed by simply adding a stand-alone reaction-diffusion equation for the
oxygen concentration, however posed in a domain which changes in time according
to the evolution of $R(t)$. The resulting mathematical problem is an
integro-differential free boundary problem, which has been proved to admit
solutions with suitable regularity properties and to predict, in some proper
ranges of the parameters, the evolution of the system toward a steady state
characterized by a finite steady growth size of the spheroid (Bueno \emph{et
al.} \cite{MR2150346}, Friedman and Reitich \cite{MR1684873}).

A slight variant of this framework entails the introduction of two regions in
the spheroid, namely an outer viable zone in which cells proliferate and a
central necrotic core in which cells starve and die due to an insufficient
delivery of oxygen from the periphery of the tumor. In this case, specific
reaction-diffusion equations for the nutrient concentration are set up in either
region, accounting for the fact that living cells take up oxygen from the
environment, while necrotic cells only receive a small amount of oxygen by means
of diffusion. In addition, an integro-differential equation is written for the
evolution of the outer proliferating shell, in the same spirit as the
corresponding equation of the previous model. The resulting problem describes
once again the evolution of the nutrient concentration across the spheroid, but
it is characterized by two free boundaries delimiting the necrotic core and the
external periphery of the spheroid, respectively. From the mathematical point of
view, the problem has been extensively studied by Cui and Friedman
\cite{MR1815805}, who addressed the existence of a stationary solution and the
convergence to it of the time dependent model under suitable assumptions on the
parameters.

An alternative class of mathematical models of tumor growth relies instead on
the theory of deformable porous media. In such a context, the tumor is regarded
as a mixture of various mutually interacting components, which obey standard
coupled mass and momentum balances. One of the main novelties with respect to
the above-described spheroid models is that now no particular geometry is in
principle required to write the equations, so that the global conformation of
the tumor is in turn an unknown of the problem which can be duly studied by the
model itself. However, the assumption of specific geometrical properties, like
e.g., spherical or axial symmetry, or even the reference to one-dimensional
settings are customary in the literature also in this case, in order to address
simplified explorative models.

Mixture-based models usually include one or more state variables specifically
devoted to track the evolution in time and space of the cell population. In the
early models of spheroids this was not a major issue since, as recalled above,
the growth of the tumor was described at an essentially geometrical level by
invoking a volume balance. Ambrosi and Preziosi \cite{MR1909425} pointed out
that this amounts to assuming a constant cell volume ratio within the tumor, so
that the growth of the spheroid would be essentially dictated by the necessity
to preserve such a constraint in spite of cell proliferation. Therefore, those
models can be viewed as particular approximations of a more general multiphase
setting. In many cases, the tumor is modeled as a biphasic mixture consisting of
cells and extracellular material. Frequently, the former are further divided
into several subpopulations of proliferating, quiescent, and necrotic cells,
with possible interchanges among the populations determined by the availability
of some nutrients that cells need in order to carry out their vital functions.
The extracellular material is commonly identified with extracellular fluid or
extracellular matrix (stroma). It fills the interstices among the cells, so that
no voids are left within the mixture and the latter can be regarded as
saturated. In some models of vascular tumors, the extracellular material also
includes a distribution of blood vessels mimicking the co-opted vasculature
under angiogenic effects (Breward \emph{et al.} \cite{BRa}).

In the theory of deformable porous media, the velocities of the constituents of
the mixture are determined from suitable momentum equations accounting for the
internal stress of each phase, as well as for the mutual external interactions
among different phases. Therefore, as usual in continuum mechanics, suitable
constitutive relations for the components of the tumor have to be specified, in
order to render at the tissue level the proper mechanical behavior. It is plain
that this poses the major modeling problems, because living tissues can be
assimilated to classical materials only on first approximation. For a detailed
study of the mechanics involved in tumor growth we refer the interested reader
to Ambrosi and Mollica \cite{MR1914120}. Here we simply remark that in some
particular geometries mass balance equations may be sufficient to determine, at
a purely kinematic level, the velocities of the components, at least under
suitably reinforced assumptions on the saturation of the mixture (Bertuzzi
\emph{et al.} \cite{MR2111919}).

A rigorous derivation of model equations for avascular tumor growth according
to the principles of the theory of mixtures can be found in Byrne and Preziosi
\cite{BYc} and in Breward \emph{et al.} \cite{MR1920584}. Following analogous
theoretical guidelines, in this paper we are concerned instead with vascular
tumor structures, that develop mainly in the axial direction along the wall of a
blood vessel. Due to the particular shape they take, they are named tumor
cords. For such tumors no particular form of angiogenesis is needed, as they
directly catch the nutrient by diffusion from the vessel which they grow
around. Mathematical modeling of tumor cord growth has already been addressed
by a number of papers by Bertuzzi and coworkers
\cite{MR2111919,MR2180715,MR2167659}. In particular, they use a mixture
theory approach supplemented by proper kinematic constraints that allow to
express the velocity of the components of the mixture without invoking any
momentum balance. In addition, they assume cylindrical symmetry of the cords
around the blood vessels, further neglecting axial variations of the state
variables or performing suitable averages in the axial direction when needed,
so as to reduce the domain of the problem to an annulus representing the
two-dimensional cross-section of the cord. They also add a sophisticated
kinematic constraint to describe the evolution of an inner viable core in
interaction with an outer necrotic ring, which are modeled as two distinct
zones by means of specific equations for each of them. Finally, they propose a
detailed mathematical analysis of both the stationary and the evolution
problems generated by their model. Conversely, we are interested in modeling
tumor cord growth starting from sufficiently general principles of the theory
of mixtures, so as to provide a model able in principle to face many different
physical and geometrical configurations. A basic (i.e., minimal) version of the
model is developed and studied here. For further extensions and improvements, as
well as for more applications to different systems, we refer to Preziosi and
Tosin \cite{PT}.

The paper is organized into six more sections that follow this Introduction.
In Sect. \ref{sect-model} the equations of the model and the proper boundary
conditions are derived, then in Sect. \ref{sect-nondim} the mathematical
problem is stated in non-dimensional form. In Sect. \ref{sect-numres} numerical
simulations are shown and commented, with the aim of testing the ability of the
model to reproduce some basic relevant features of the system. Sections
\ref{sect-statprob}, \ref{sect-proofs} address the qualitative mathematical
analysis of some key issues of the problem, and Sect. \ref{sect-conclusions}
finally draws conclusions and briefly sketches research perspectives.

\section{The mathematical model}
\label{sect-model}
In this section we propose a mathematical model for the description of the
growth of a tumor mass along a source of nutrient, working in the framework of
the theory of mixtures. In doing this, one of our main goals will be the
minimality of the model, meaning that we will constantly aim at taking into
account only those mechano-chemical mechanisms strictly relevant for an
essential representation of the macroscopic phenomenon. Of course, we are aware
that in this way many interesting aspects, whose inclusion would make the model
closer to physical reality and perhaps less academic, are left out of the study.
However, we believe that a minimal model along with a related qualitative
mathematical analysis are necessary starting points before tackling more
complicated situations, and that such a minimal model may be eventually used as
the core of future more accurate models.

\begin{figure}[t]
    \begin{center}
        \includegraphics[width=9cm]{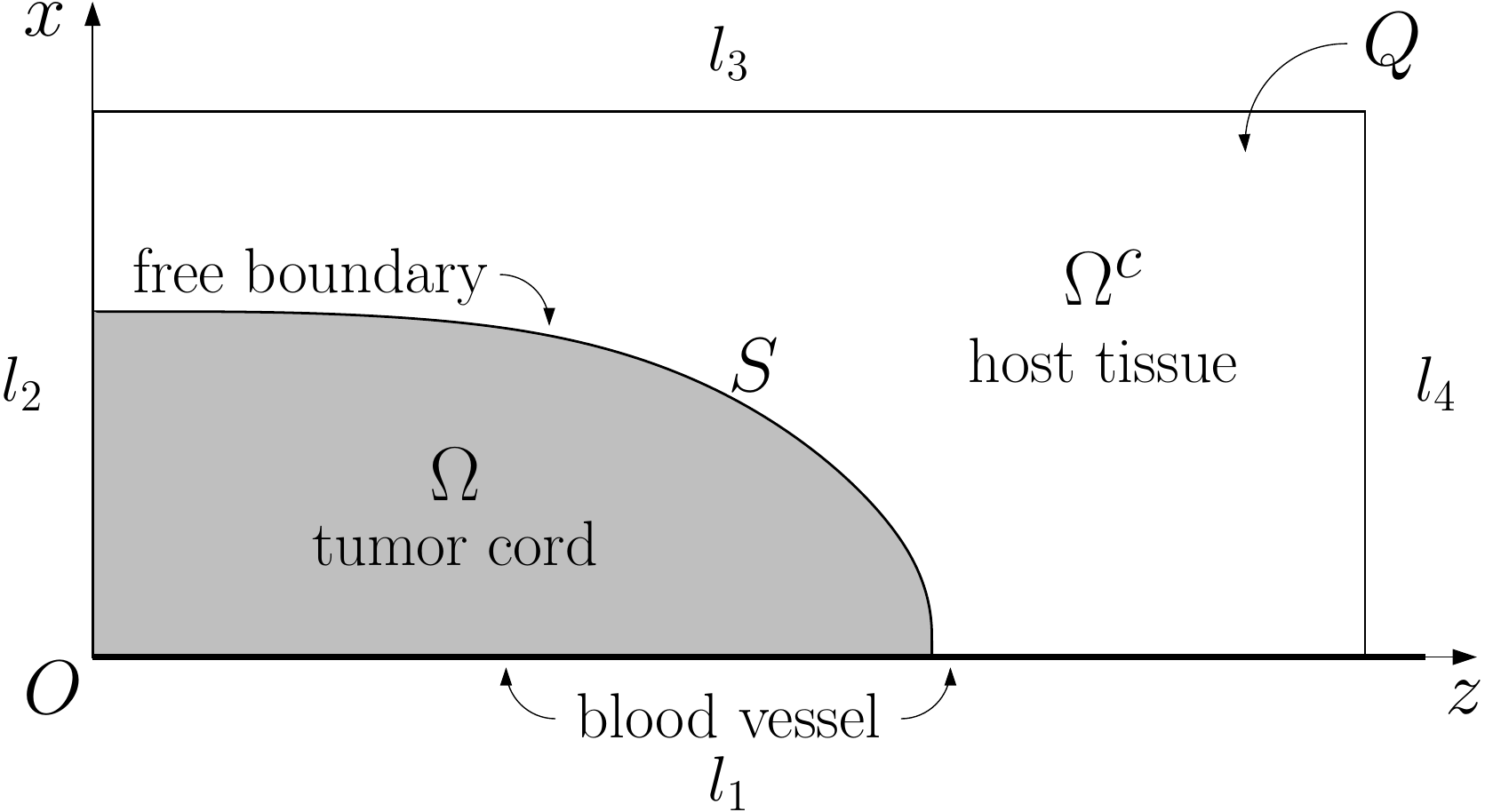}
    \end{center}
    \caption{The two-dimensional domain of the problem. A spatial region $Q$ is
        divided into two subregions $\Omega$ and $\Omega^c$, representing the
        tumor cord and the host tissue respectively, separated by a free
        boundary $S$. The blood vessel coincides with the horizontal $z$ axis.}
    \label{fig-domain}
\end{figure}

For the sake of definiteness, we will deduce the model having in mind the
two-dimensional geometry depicted in Fig. \ref{fig-domain}. Actually, since the
model is intrinsically multidimensional, any further extension to different
geometries and possibly to higher dimensions is mostly technical, requiring in
principle the same ideas up to a suitable reinterpretation of the symbolic form
of the differential equations that we will write.

Let us then consider a spatial region $Q\subset\R^2$ occupied by a mixture of
cells, extracellular fluid, and extracellular matrix (ECM), that we suppose
saturated. Denoting by $\phi_c$, $\phi_\ell$, $\phi_m$ the respective volume
ratios of the constituents, this condition is expressed by
\begin{equation}
    \phi_c+\phi_\ell+\phi_m=1 \qquad \text{in\ } Q.
    \label{saturation}
\end{equation}
The region $Q$ is internally divided into two subregions $\Omega$, $\Omega^c$
separated by a one-dimensional manifold $S$, so that $Q=\Omega\cup\Omega^c$ and
$\Omega\cap\Omega^c=S$. We assume that $\Omega$ represents the growing tumor
cord and $\Omega^c$ the surrounding host tissue, while $S$ plays in this context
the role of a free boundary delimiting the cord. As we will see, the model
guarantees that tumor cells located in $\Omega$ and normal cells located in
$\Omega^c$ never mix, hence the sole volume ratio $\phi_c$ is sufficient to
track the evolution of the whole cell population. The blood vessel around which
the tumor cord develops is schematized by the horizontal $z$ axis, and its
presence will be incorporated into the equations of the model in terms of
boundary conditions. This is possible, and is indeed useful, because we are not
interested in the interactions of the vessel with the tumor, so that its
geometry and mechanics can be duly ignored.

\subsection{Momentum equations}
It is assumed that normal cells and tumor cells differ only in that the latter
undergo an unregulated proliferation, while mechanical properties are
the same for both. Hence, one can write a unique momentum equation of the form
\begin{equation}
    -\div{\phi_c\T_c}+\phi_c\nabla{p}=\m_c \qquad \text{in\ } Q
    \label{momentum-cells}
\end{equation}
where:
\begin{itemize}
\item $p$ is introduced as a Lagrange multiplier to satisfy the saturation
constraint \eqref{saturation}, and is then interpreted as the interstitial
pressure of the extracellular fluid;

\item $\T_c$ is the excess stress tensor of the cellular matter, accounting for
cell-cell internal stress;

\item $\m_c$ is the resultant of the actions on the cellular matter due to the
interactions with the extracellular fluid and the extracellular matrix.
\end{itemize}
Notice that in Eq. \eqref{momentum-cells} inertia is neglected, in view of the
relatively slow dynamics of the cells during their growth process.

Assuming that cells behave like an elastic fluid, we introduce a scalar,
possibly nonlinear function $\Sigma=\Sigma(\phi_c)$ such that
\begin{equation}
    \T_c=-\Sigma(\phi_c)\mathbb{I},
    \label{sigma}
\end{equation}
$\mathbb{I}$ being the identity matrix. Positive values of $\Sigma$ for high
$\phi_c$ denote compression of the cells, while negative values for low $\phi_c$
imply a packing tendency. In addition, we postulate the existence of a value
$\phi_0>0$ such that $\Sigma(\phi_0)=0$, identifying a configuration in which
cells get in touch without exerting any stress on each other. In the sequel, we
will occasionally refer to $\phi_0$ as the stress-free cell volume ratio (see
Fig. \ref{fig-sigma_draft}).

\begin{figure}[t]
    \centering
    \includegraphics[width=6cm]{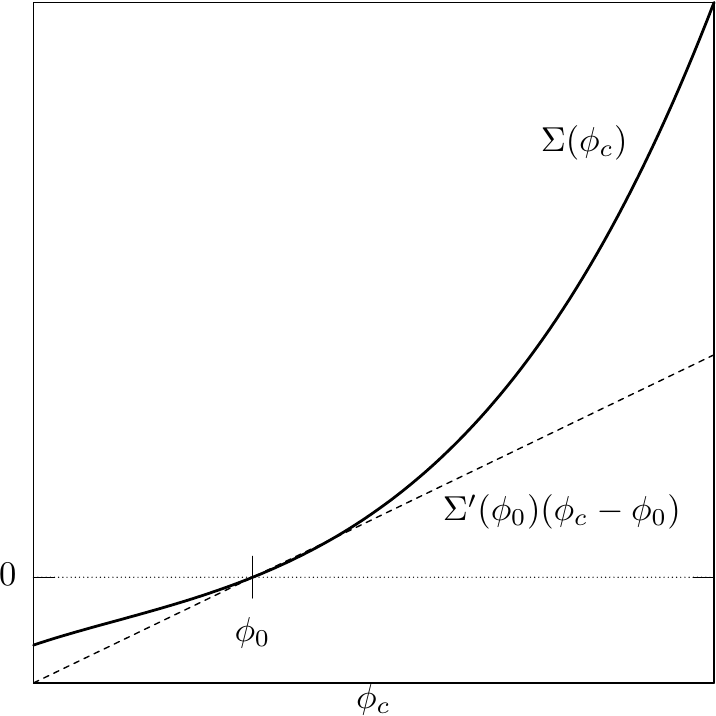}
    \caption{Cell stress function $\Sigma(\phi_c)$ and its first order
        approximation (tangent line) at the cell stress-free volume ratio
        $\phi_c=\phi_0$.}
    \label{fig-sigma_draft}
\end{figure}

Concerning the external actions, it is convenient to split $\m_c$ into two
contributions $\m_{c\ell}$, $\m_{cm}$ specifically related to the interactions
of the cells with the fluid and the ECM. More in detail, we can imagine a simple
viscous friction among the components of the mixture, that is
\begin{equation*}
    \m_{c\alpha}\propto \v_\alpha-\v_c, \qquad \alpha=\ell,\,m
\end{equation*}
where $\v_c,\,\v_\ell,\,\v_m$ denote the velocities of the constituents. In
particular, in the classical Darcy-like theory of porous media it is customary
to express the interaction with the fluid phase as
\begin{equation}
    \m_{c\ell}=\frac{\phi_\ell^2}{K}(\v_\ell-\v_c),
    \label{mcl}
\end{equation}
with $K>0$ representing the permeability of the mixture. Regarding the
interaction of the cells with the extracellular matrix, the theory is instead by
far less classical and consolidated, and certainly viscous friction can be
possibly accepted only as a first approximation. Indeed, cells are known to
attach to the ECM via suitable adhesion sites, and to detach only in presence
of sufficiently strong forces (Baumgartner \emph{et al.} \cite{BAa}, Canetta
\emph{et al.} \cite{CAa}, Sun \emph{et al.} \cite{SUa}), therefore such an
interaction is definitely not a pure sliding. However, to keep things simple we
set
\begin{equation}
    \m_{cm}=\Lambda_{cm}(\v_m-\v_c)
    \label{mcm}
\end{equation}
for a certain constant $\Lambda_{cm}>0$, referring the interested reader to
Preziosi and Tosin \cite{PT} for a more detailed mathematical modeling of this
term of the equations.

Introducing Eqs. \eqref{sigma}, \eqref{mcl}, \eqref{mcm} into Eq.
\eqref{momentum-cells} we find, after some standard manipulations,
\begin{equation}
    \nabla{(\phi_c\Sigma(\phi_c))}+\phi_c\nabla{p}=
        \frac{\phi_\ell^2}{K}(\v_\ell-\v_c)+\Lambda_{cm}(\v_m-\v_c)
        \qquad \text{in\ } Q.
    \label{momentum-cells-subst}
\end{equation}

For the extracellular fluid, an equation similar to Eq. \eqref{momentum-cells}
can be written:
\begin{equation}
    -\div{\phi_\ell\T_\ell}+\phi_\ell\nabla{p}=\m_\ell \qquad \text{in\ } Q
    \label{momentum-fluid}
\end{equation}
with $\m_\ell=\m_{\ell c}+\m_{\ell m}$, that is the sum of the interactions
with the cells and the ECM, respectively. However, due to the interpretation of
$p$ provided above, the excess stress tensor $\T_\ell$ is commonly neglected
and the pressure is regarded as the main internal stress of the fluid, so that
Eq. \eqref{momentum-fluid} actually simplifies as
\begin{equation}
    \phi_\ell\nabla{p}=\m_{\ell c}+\m_{\ell m} \qquad \text{in\ } Q.
    \label{momentum-fluid-simpl}
\end{equation}

Relying again on a Darcy-like framework, we think of the interaction terms as
proportional to the relative velocities of the interacting constituents, with
specifically
\begin{equation}
    \m_{\ell c}=-\m_{c\ell}=-\frac{\phi_\ell^2}{K}(\v_\ell-\v_c)
    \label{mlc}
\end{equation}
due to an action-reaction principle, and
\begin{equation}
    \m_{\ell m}\approx 0
    \label{mlm}
\end{equation}
to emphasize the fact that the interaction of the extracellular fluid with the
extracellular matrix is actually negligible with respect to the corresponding
one of the cells (cf. Eq. \eqref{mcm}). It might be inferred that a more
accurate modeling of the latter may allow for a better expression of the former.
However, such topics are beyond the scope of a minimal model of tumor growth and
will therefore not be addressed in the present work.

Putting Eqs. \eqref{mlc}, \eqref{mlm} into Eq. \eqref{momentum-fluid-simpl}
yields the relation
\begin{equation}
    \v_\ell-\v_c=-\frac{K}{\phi_\ell}\nabla{p},
    \label{Darcy}
\end{equation}
which represents the classical form of the well-known Darcy's law.

Finally, an equation similar to Eqs. \eqref{momentum-cells},
\eqref{momentum-fluid} holds in principle also for the ECM:
\begin{equation}
    -\div{\phi_m\T_m}+\phi_m\nabla{p}=\m_m \qquad \text{in\ } Q,
    \label{momentum-ECM}
\end{equation}
with $\m_m=\m_{mc}+\m_{m\ell}=-\m_{cm}-\m_{\ell m}$ in view of the
action-reaction principle. However, it requires to specify the excess stress
tensor $\T_m$, which appears to be a very complicated object due to the fibrous
nature of the extracellular matrix and to the continuous remodeling it undergoes
when cells move within it. Therefore, wishing to focus mainly on the growth of
the tumor cord in interaction with a source of nutrient, we choose to consider
the ECM as a rigid non-remodeling scaffold in which cells live and grow and
extracellular fluid flows. As a consequence, we replace Eq. \eqref{momentum-ECM}
by
\begin{equation}
    \v_m=0 \qquad \text{in\ } Q
    \label{vm0}
\end{equation}
and we simply observe that $\T_m$ plays now the role of a (tensor) Lagrangian
multiplier to satisfy the constraint \eqref{vm0}. In addition, we take the
volume ratio $\phi_m$ as a constant, say
\begin{equation}
    \phi_m=1-\phi_\ast, \qquad 0<\phi_\ast<1,
    \label{phim}
\end{equation}
so that the saturation constraint \eqref{saturation} specializes as
\begin{equation}
    \phi_c+\phi_\ell=\phi_\ast \qquad \text{in\ } Q.
    \label{saturation-reduced}
\end{equation}

It is interesting to distinguish in the momentum equation
\eqref{momentum-cells-subst} for the cells the contributions due to the
interaction with the extracellular fluid, represented by the terms
$\phi_c\nabla{p}$ and $\m_{\ell c}$, from the remaining ones. In most cases, one
can assume that the former are negligible, in terms of order of magnitude, with
respect to the latter, i.e.,
\begin{equation}
    \phi_c\vert\nabla{p}\vert,\ \vert\m_{\ell c}\vert
        = o\left(\vert\div{\phi_c\T_c}\vert,\ \vert\m_{mc}\vert\right)
        \qquad \text{in\ } Q,
    \label{sub-balance}
\end{equation}
so that the main balance is between the intercellular stress and the
interaction force with the extracellular matrix. Owing to this, Eq.
\eqref{momentum-cells-subst} along with Eq. \eqref{vm0} provides a particularly
simple expression for the velocity $\v_c$ of the cells:
\begin{equation}
    \v_c=-\frac{1}{\Lambda_{mc}}\nabla{(\phi_c\Sigma(\phi_c))}.
    \label{vc}
\end{equation}
By consequence, Eqs. \eqref{Darcy}, \eqref{saturation-reduced} give in turn
\begin{equation}
    \v_\ell=-\left[\frac{1}{\Lambda_{mc}}\nabla{(\phi_c\Sigma(\phi_c))}
        +\frac{K}{\phi_\ast-\phi_c}\nabla{p}\right].
    \label{vl}
\end{equation}

\subsection{Mass balance equations}
For the system constituted by cells and extracellular fluid, one can also write
a balance of mass under the assumption that they form a closed mixture, that is
a mixture which does not exchange matter with the external environment. In more
detail, we have to take into account that tumor cells, unlike normal cells of
the host tissue, are characterized by an intense proliferation, which is mainly
responsible for the macroscopic growth of the tumor cord. In order to focus
specifically on this phenomenon, we can reasonably disregard the standard
reproduction activity of the host cells, so that our mass balance equations
read
\begin{equation}
    \left.
    \begin{array}{rcl}
        \dfrac{\partial\phi_c}{\partial t}+\div{\phi_c\v_c} & = &
            G(\phi_c,\,c) \\
        \\[-0.2cm]
        \dfrac{\partial\phi_\ell}{\partial t}+\div{\phi_\ell\v_\ell} & = &
            -G(\phi_c,\,c)
    \end{array}
    \right\}
    \qquad \text{in\ } \Omega
    \label{mass-cord}
\end{equation}
\begin{equation}
    \left.
    \begin{array}{rcl}
        \dfrac{\partial\phi_c}{\partial t}+\div{\phi_c\v_c} & = & 0 \\
        \\[-0.2cm]
        \dfrac{\partial\phi_\ell}{\partial t}+\div{\phi_\ell\v_\ell} & = & 0
    \end{array}
    \right\}
    \qquad \text{in\ } \Omega^c.
    \label{mass-host}
\end{equation}
In Eqs. \eqref{mass-cord}, $G(\phi_c,\,c)$ represents a source/sink of cell
mass due to reproduction and death of tumor cells in connection with the
availability of some nutrient $c$. No contribution of ECM to the cell growth is
explicitly accounted for, though it is known that the extracellular matrix
contains growth factors. The reason is that, in the present context, ECM is not
supposed to play a primary role in the dynamics of the system, hence its
inclusion in the function $G$ would result at most in constant coefficients
related to the volume ratio $\phi_m$ (cf. Eq. \eqref{phim}).

It is assumed that dead tumor cells are degraded into extracellular fluid, and
conversely that the latter is consumed whenever a tumor cell duplicates to
originate new cells. We incidentally notice that Eqs. \eqref{mass-cord},
\eqref{mass-host} contain the implicit hypothesis that the density of the cells
and of the extracellular fluid is the same, and moreover that it is constant in
the whole mixture.

Denoting by $\chi_\Omega=\chi_\Omega(t,\,\x)$, $\x=(x,\,z)\in Q$, the
indicator function of the set $\Omega$, i.e.
\begin{equation*}
    \chi_\Omega(t,\,\x)=
    \begin{cases}
        1 & \text{if\ }\x\in\Omega\ \text{at time\ } t \\
        0 & \text{otherwise,}
    \end{cases}
\end{equation*}
we can substitute Eq. \eqref{vc} into the first of Eqs. \eqref{mass-cord} to get
a single equation for the cell volume ratio $\phi_c$:
\begin{equation}
    \frac{\partial\phi_c}{\partial t}-\div{\frac{\phi_c}{\Lambda_{mc}}
        \nabla{(\phi_c\Sigma(\phi_c))}}=
        G(\phi_c,\,c)\chi_\Omega \qquad \text{in\ } Q.
    \label{phic}
\end{equation}
The fluid volume ratio $\phi_\ell$ is then determined algebraically \emph{a
posteriori} from Eq. \eqref{saturation-reduced} as $\phi_\ell=\phi_\ast-\phi_c$.

In addition, summing term by term Eqs. \eqref{mass-cord}, \eqref{mass-host} and
recalling again Eq. \eqref{saturation-reduced} we further discover
\begin{equation*}
    \div{\phi_c\v_c+\phi_\ell\v_\ell}=0 \qquad \text{in\ } Q,
\end{equation*}
which, together with Eqs. \eqref{vc}, \eqref{vl}, implies
\begin{equation*}
    -\div{K\nabla{p}}=\div{\frac{\phi_\ast}{\Lambda_{mc}}
        \nabla{(\phi_c\Sigma(\phi_c))}}
        \qquad \text{in\ } Q.
\end{equation*}
This means that, in view of assumption \eqref{sub-balance}, we allow for a
nonconstant pressure field $p$ that can be in turn recovered \emph{a
posteriori} as a by-product of the integration of Eq. \eqref{phic}. Finally,
once also $p$ is known, Eq. \eqref{vl} yields the velocity $\v_\ell$ of the
extracellular fluid.

\subsection{Cell growth and nutrient diffusion}
The function $G=G(\phi_c,\,c)$ appearing in Eqs. \eqref{mass-cord},
\eqref{phic} describes gain (if $G>0$) or loss (if $G<0$) of cell mass caused
by reproduction or death of tumor cells, respectively. These are triggered by
the presence of some nutrient within the mixture, whose concentration per unit
volume is denoted by $c$. As a matter of fact, such a nutrient can be mainly
identified with the oxygen carried by the blood, which diffuses through the
mixture from the wall of the vessel around which tumor cells grow. It is assumed
that oxygen molecules occupy no space within the mixture.

It can be inferred that the basic mechanism by which the available amount of
oxygen affects the vital functions of the cells consists in the existence of a
critical threshold $c_0>0$ of the concentration $c$. When the latter is above
this value, cells are sufficiently fed and can duplicate. Conversely, when
$c$ falls below $c_0$, cells starve and die. A more sophisticated process may
include also a quiescency stage, taking place when oxygen concentration is below
$c_0$ but above a second threshold $c_1<c_0$. In this state, tumor cells neither
reproduce nor die, they simply enter a survival state waiting for being
reactivated, if $c$ grows above $c_0$, or for definitely dying, in case $c$
falls further below $c_1$.

>From the modeling point of view, it is convenient to express the function $G$
in the form
\begin{equation}
    G(\phi_c,\,c)=g(\phi_c)\Gamma(c),
    \label{G}
\end{equation}
so that the new function $g$ carries the information about the specific growth
dynamics of the cells, while the function $\Gamma$ accounts for growth
regulation by oxygen concentration.

\begin{figure}[t]
    \begin{center}
        \includegraphics[width=5.9cm]{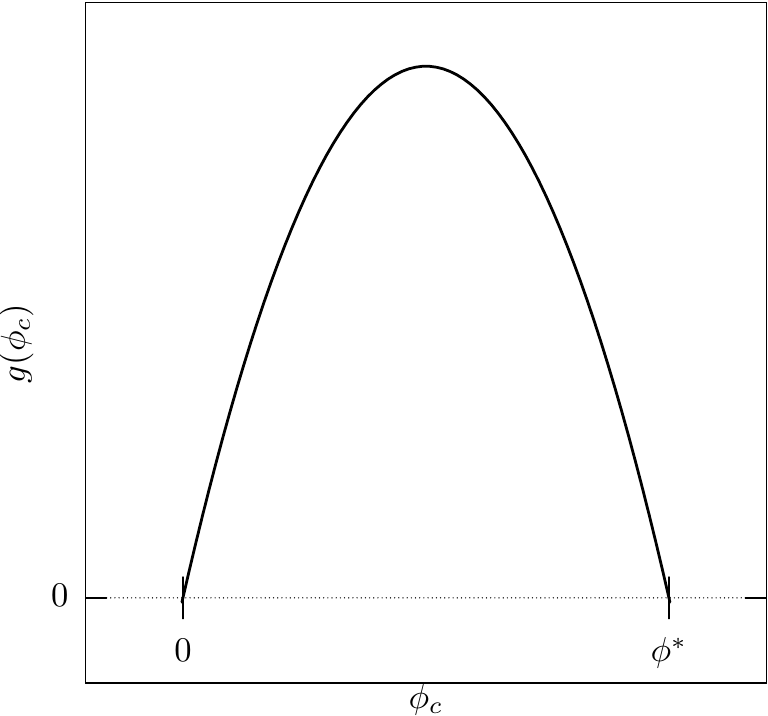}
        \qquad
        \includegraphics[width=5.9cm]{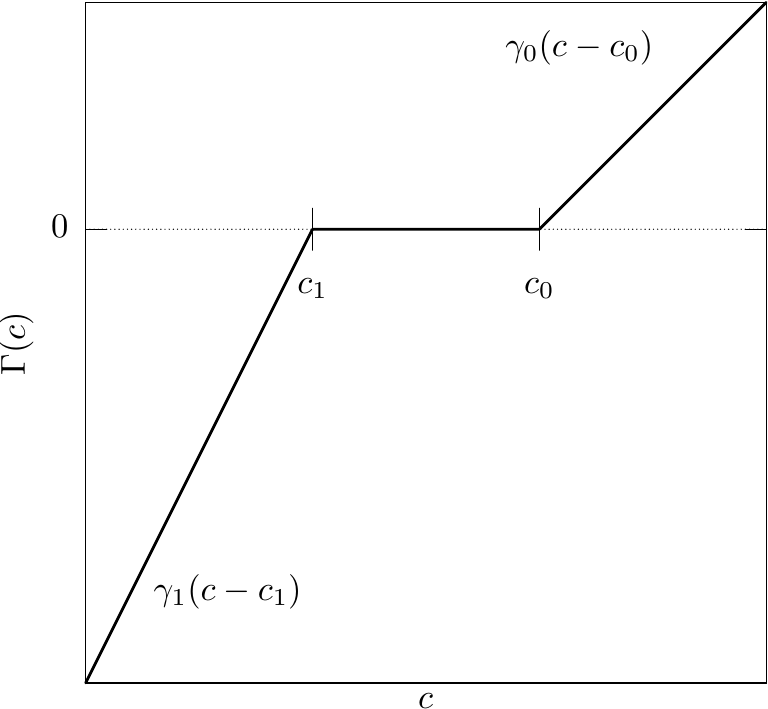}
    \end{center}
    \caption{Logistic cell growth function $g(\phi_c)$ (left) and growth
        regulation function $\Gamma(c)$ with two thresholds (right).}
    \label{fig-g_gamma}
\end{figure}

We want to point out that a major requirement on the cell volume ratio $\phi_c$
is that it should be bounded between $0$ and $\phi_\ast$, in order to guarantee
physical consistency of the solutions returned by the model (cf. Eq.
\eqref{saturation-reduced}). Cell growth mechanisms play undoubtedly a
fundamental role in this respect, and the choice of $g$ turns out to be crucial
in order to generate physically significant mathematical models. Therefore, it
should rely not only upon physical intuition but also on a qualitative knowledge
of the mathematical properties of the model itself. Specific features of $g$
leading to mathematically well-posed problems will be discussed in detail
in Sect. \ref{sect-statprob}. Here we simply anticipate that a logistic term
(see Fig. \ref{fig-g_gamma}, left)
\begin{equation*}
    g(\phi_c)=\phi_c(\phi_\ast-\phi_c)
\end{equation*}
is a possible, physically reasonable choice that indeed works.

Concerning $\Gamma$, it will be clear at the end of Sect. \ref{sect-statprob}
that it mainly influences the distribution of tumor cells far from the blood
vessel at the steady state. Specifically, it determines the maximum size reached
by the cord in the transverse $x$ direction. For the moment, we just observe
that the two scenarios previously discussed may correspond to
\begin{equation*}
    \Gamma(c)=\gamma(c-c_0), \qquad \gamma>0,
\end{equation*}
and to
\begin{equation*}
    \Gamma(c)=
        \begin{cases}
            \gamma_0(c-c_0) & \text{if\ } c\geq c_0 \\
            0 & \text{if\ } c_1<c<c_0, \\
            \gamma_1(c-c_1) & \text{if\ } c\leq c_1
        \end{cases}
    \qquad \gamma_0,\,\gamma_1>0,
\end{equation*}
respectively (see Fig. \ref{fig-g_gamma}, right).

As already mentioned, oxygen reaches cells by diffusion from the vessel wall.
Actually, to some extent it is also advected through the mixture by the
extracellular fluid, but it is known that advection is globally quite
inefficient with respect to diffusion, due to the relatively slow dynamics of
the constituents of the mixture compared to the high diffusivity of oxygen
molecules (Netti and Jain \cite{NJ}). Therefore we can write for $c$ a standard
reaction-diffusion equation of the form
\begin{equation}
    \frac{\partial c}{\partial t}-D\Delta{c}=-\alpha\phi_c c\chi_\Omega
        \qquad \text{in\ } Q,
    \label{diffusion}
\end{equation}
where $D>0$ is the oxygen diffusion coefficient within the mixture, and the term
at the right-hand side models the uptake of nutrient by the sole tumor cells
(this is the reason for the indicator function $\chi_\Omega$) as proportional to
the product $\phi_c c$ via a phenomenological constant $\alpha>0$.

\begin{remark}
It is worth noting that in the cell growth function $G$, as well as in the
right-hand side of the first of Eqs. \eqref{mass-host}, we have not included a
term related to the natural death of cells (e.g., a term of the form
$-\delta\phi_c$, $\delta>0$). Similarly, in Eq. \eqref{diffusion} neither oxygen
absorption by normal host cells nor possible chemical degradation of the
nutrient (e.g., by a term like $\tau^{-1}c$, with $\tau>0$ proportional to the
characteristic half-life of oxygen molecules) has been taken into account. Of
course, these effects could be straightforwardly incorporated in the model, but
in the present context they seem negligible with respect to the main dynamics we
are concerned with.
\end{remark}

\subsection{Boundary and interface conditions}
The system of equations \eqref{phic}, \eqref{diffusion} needs to be supplemented
by suitable conditions for the two main state variables of the problem, i.e.,
$\phi_c$ and $c$, at the boundary of the domain as well as on the interface $S$
between the regions $\Omega$ and $\Omega^c$.

With reference to Fig. \ref{fig-domain}, we see that the domain $Q$ has four
boundaries, whose just one is physical, namely the edge $l_1$ representing the
blood vessel, while the remaining three serve uniquely to confine geometrically
the domain of the problem. In particular, $l_2$ can be considered as a symmetry
boundary, in the sense that one can imagine a symmetric evolution of the system
for $z<0$, while the upper and right boundaries $l_3,\,l_4$ should be thought of
as `sufficiently far' (ideally, $x,\,z\to+\infty$) in the host tissue to be
unaffected by the dynamics of the growing tumor cord.

In the sequel, the symbol $\n$ will denote a generic outward normal unit
vector, which has to be referred from time to time to the appropriate boundary
or interface under consideration.

At the vessel wall we assume no detachment of both tumor and normal cells, which
essentially amounts to setting their normal velocity to zero. Moreover, we
prescribe the typical oxygen concentration $c_b$ found in the blood. Taking Eq.
\eqref{vc} into account, we have therefore the following conditions:
\begin{equation}
    \nabla{(\phi_c\Sigma(\phi_c))}\cdot\n=0, \quad c=c_b
        \qquad \text{on\ } l_1,
    \label{bc-vessel}
\end{equation}
that in our two-dimensional geometry specialize as
$\partial_x(\phi_c\Sigma(\phi_c))=0$, $c=c_b$ for $x=0$.

On the symmetry boundary we assign canonical symmetry conditions on $\phi_c$ and
$c$:
\begin{equation*}
    \nabla{\phi_c}\cdot\n=\nabla{c}\cdot\n=0
        \qquad \text{on\ } l_2,
\end{equation*}
which in our specific case take the form $\partial_z\phi_c=\partial_zc=0$ for
$z=0$.

Finally, at the `far' upper and right boundaries we assume that the
host tissue is unperturbed, i.e., unstressed, and we set zero flux of oxygen:
\begin{equation*}
    \phi_c=\phi_0, \quad \nabla{c}\cdot\n=0
        \qquad \text{on\ } l_3,\,l_4.
\end{equation*}
Note in particular that the second of these conditions is interpreted in the
present context as $\partial_xc=0$ on $l_3$, $\partial_zc=0$ on $l_4$.

According to the classical theory of deformable porous media (see e.g.,
Preziosi \cite{MR1422885}), at the interface $S$ between the tumor cord and
the host tissue suitable conditions on the continuity of the internal cell
stress and of the flux of oxygen have to be imposed:
\begin{equation}
    \jump{\phi_c\T_c\n}=0, \quad \jump{\nabla{c}}\cdot\n=0
        \qquad \text{on\ } S,
    \label{bc-interface}
\end{equation}
where $\jump{\cdot}$ denotes the jump across the interface $S$. In particular,
given the constitutive relation \eqref{sigma}, the condition on the cell stress
corresponds to $\jump{\phi_c\Sigma(\phi_c)}=0$, and, if one assumes that
$\Sigma$ is a continuous function of $\phi_c$, this further reduces to
$\jump{\phi_c}=0$, that is the continuity of the cell volume ratio across $S$.

In addition, the motion of $S$ as a material interface for the cellular matter
is regulated by the following ordinary differential equation:
\begin{equation*}
    \frac{d\x(t)}{dt}\cdot\n=\v_c(t,\,\x(t))\cdot\n
        \qquad \forall\,\x\in S,
\end{equation*}
where $\v_c$ is given by Eq. \eqref{vc}, which simply accounts for the fact
that, in a Lagrangian framework, the points of $S$ move with the normal velocity
of the cells themselves.

\section{Nondimensional statement of the problem}
\label{sect-nondim}
Let $L$, $\tau$, $\Phi$, $C$ denote characteristic values of length, time,
cell volume ratio, and nutrient concentration respectively, that we choose such
that
\begin{equation*}
    L=\sqrt{D\tau}, \quad \Phi=\phi_\ast, \quad C=c_b.
\end{equation*}
Let us moreover introduce the non-dimensional variables $\tilde\x$, $\tilde t$,
$\tilde\phi_c$, $\tilde c$ defined by the relations
\begin{equation*}
    \x=L\tilde\x, \quad t=\tau\tilde t, \quad \phi_c=\phi_\ast\tilde\phi_c,
        \quad c=c_b\tilde c,
\end{equation*}
the non-dimensional parameters
\begin{equation*}
    \tilde\alpha = \alpha\phi_\ast c_b, \qquad
        \tilde\phi_0=\frac{\phi_0}{\phi_\ast},
\end{equation*}
and the non-dimensional functions
\begin{equation*}
    \tilde\Sigma(\tilde\phi_c)=\frac{\phi_\ast}{\Lambda_{mc}D}
        \Sigma(\phi_\ast\tilde\phi_c), \qquad
    \tilde g(\tilde\phi_c)=\frac{1}{\phi_\ast}g(\phi_\ast\tilde\phi_c), \qquad
    \tilde\Gamma(\tilde c)=\tau\Gamma(c_b\tilde c).
\end{equation*}
Dropping from now on the tildes from the dimensionless variables and the
subscript `$c$' from $\phi_c$ to simplify the notation, the problem originated
by Eqs. \eqref{phic}, \eqref{diffusion}, along with Eq. \eqref{G} and the
boundary and interface conditions \eqref{bc-vessel}-\eqref{bc-interface}, can be
formulated in non-dimensional form as
\begin{equation}
    \begin{cases}
        \dfrac{\partial\phi}{\partial t}-\nabla\cdot
            [\phi\nabla(\phi\Sigma(\phi))]=g(\phi)\Gamma(c)\chi_{\Omega}
            & \text{in\ }  Q \\
        \\[-0.3cm]
        \dfrac{\partial c}{\partial t}-\Delta c=
            -\alpha\phi c\chi_{\Omega}
            & \text{in\ } Q \\
        \\[-0.3cm]
        \partial_{x}(\phi\Sigma(\phi))=0, \quad
             c=1 & \text{for\ } x=0 \\
        \\[-0.3cm]
        \partial_{ z}\phi=0, \quad \partial_{z} c=0 &
            \text{for\ } z=0 \\
        \\[-0.3cm]
        \phi=\phi_0, \quad \partial_{x} c=0 &
            \text{on\ } l_3 \\
        \\[-0.3cm]
        \phi=\phi_0, \quad \partial_{z} c=0 &
            \text{on\ } l_4 \\
        \\[-0.3cm]
        \jump{\phi}=0, \quad
            \jump{\nabla c}\cdot\n=0 &
            \text{on\ } S,
    \end{cases}
    \label{nondim_problem}
\end{equation}
plus the free boundary condition
\begin{equation}
    \frac{d\x(t)}{dt}\cdot\n=[-\nabla(\phi\Sigma(\phi))](t,\,\x(t))\cdot\n
        \qquad\forall\,\x\in S.
    \label{nondim_free_bound_cond}
\end{equation}

The choice of $\phi_\ast$, namely the per cent amount of free space left by the
extracellular matrix, as reference value for the cell volume ratio implies that
at a non-dimensional level we should expect $0\leq\phi\leq 1$ in $Q$ in view of
Eq. \eqref{saturation-reduced}, as well as $0<\phi_0<1$ coherently with the fact
that in the dimensional formulation of the model $\phi_0$ is implicitly assumed
to be positive and lower than $\phi_\ast$ for physical consistency reasons.
Similarly, since we expect $c_b$ to be the maximum value of nutrient
concentration found in the system, we consider $0\leq c\leq 1$ in $Q$.

\begin{figure}[t]
    \begin{center}
        $t=100$
        \includegraphics[width=11cm]{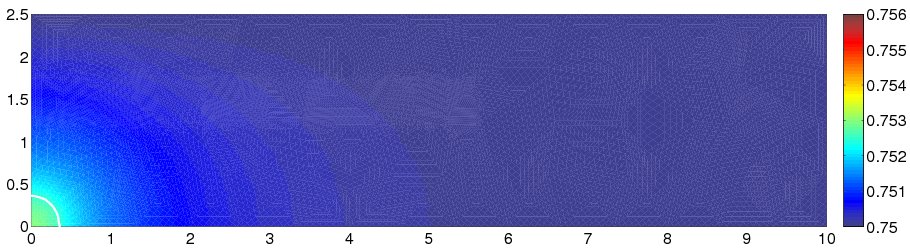}
        \\[0.5cm]
        $t=325$
        \includegraphics[width=11cm]{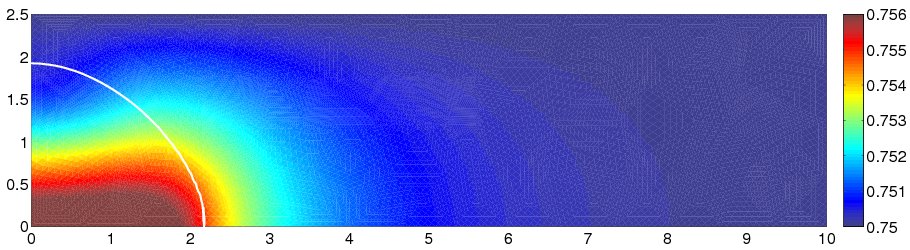}
        \\[0.5cm]
        $t=650$
        \includegraphics[width=11cm]{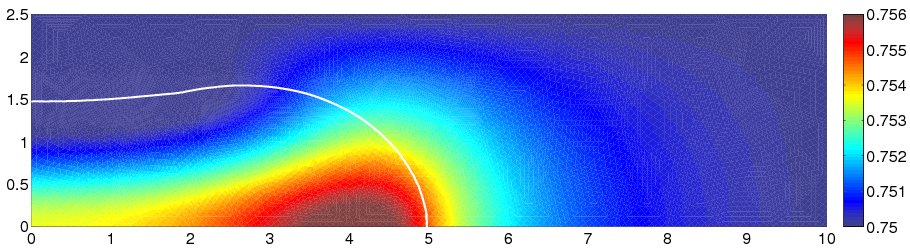}
        \\[0.5cm]
        $t=900$
        \includegraphics[width=11cm]{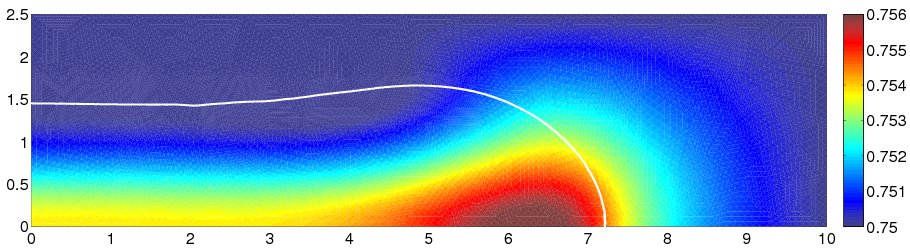}
    \end{center}
    \caption{Contour plot of the cell volume ratio $\phi$ at different
        times, with values increasing from blue to red (range from
        $0.75$ to $0.756$). White line indicates the position of the interface
        $S$.}
    \label{fig-density_2D}
\end{figure}

\section{Numerical results}
\label{sect-numres}
Equations \eqref{nondim_problem} have been numerically integrated over the
reference domain $Q=[0,\,2.5]\times [0,\,10]$ via a standard Finite Element
discretization, coupled with a level set technique to track the evolution of the
free boundary $S$. Specifically, the following functions have been used:
\begin{align}
    \Sigma(\phi) &= \frac{\mu}{\mu-1}\cdot
        \frac{\phi^{\mu-1}-\phi_0^{\mu-1}}{\phi}, \label{sigma-pormed}
    \\[0.3cm]
     g(\phi) &= \phi(1-\phi), \label{g} \\[0.3cm]
    \Gamma(c) &= \gamma(c-c_0), \label{Gamma}
\end{align}
along with this set of non-dimensional parameters:
\begin{equation}
    \mu=3,\qquad \phi_0=0.75,\qquad \gamma=0.7,\qquad c_0=0.8,\qquad
        \alpha=0.5.
    \label{nondim_param-numbers}
\end{equation}
We refer in particular the reader to Subsect. \ref{notex} for further
motivations on the choice of $\Sigma$.

\begin{figure}[t]
    \begin{center}
        $t=100$
        \includegraphics[width=11cm]{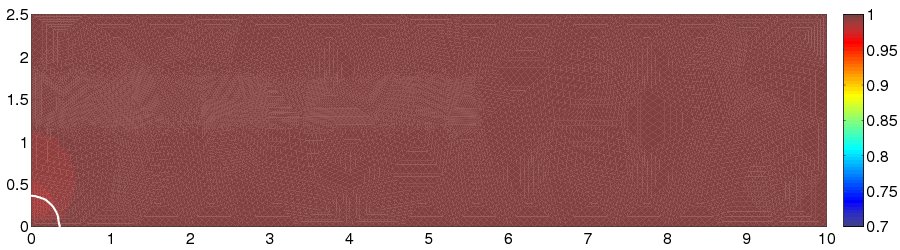}
        \\[0.5cm]
        $t=325$
        \includegraphics[width=11cm]{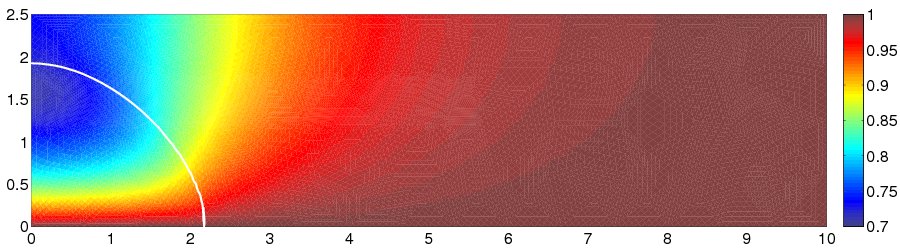}
        \\[0.5cm]
        $t=650$
        \includegraphics[width=11cm]{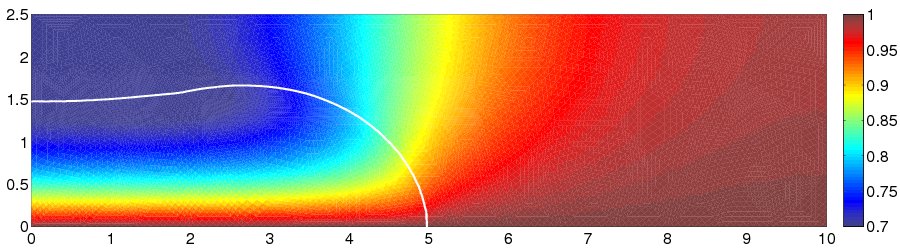}
        \\[0.5cm]
        $t=900$
        \includegraphics[width=11cm]{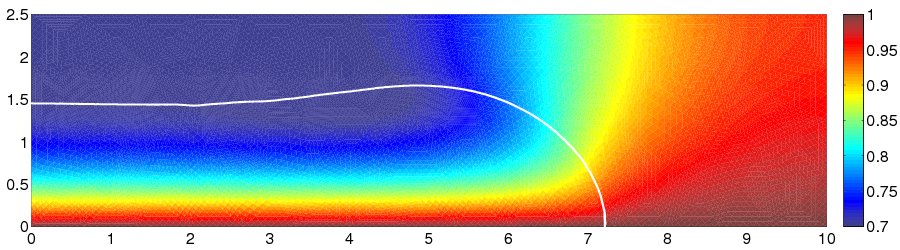}
    \end{center}
    \caption{Contour plot of the nutrient concentration $c$ at the
        corresponding times of the previous Figure \ref{fig-density_2D}, with
        values increasing from blue to red (range from $0.7$ to $1$). White line
        indicates the position of the interface $S$.}
    \label{fig-nutrient_2D}
\end{figure}

Figures \ref{fig-density_2D}, \ref{fig-nutrient_2D} show the evolution of the
cell volume ratio $\phi$ and the nutrient concentration $c$, respectively, at
different non-dimensional times listed on the left. The white line identifies in
each plot the position of the interface $S$, thus it visualizes the outer
boundary of the tumor cord. Initial conditions are set to $\phi(t=0)=\phi_0$,
$c(t=0)=1$ on the whole domain $Q$, with $\Omega(t=0)$ constituted by a quarter
of disk of small radius centered at the origin.

At early times ($t=100$) the cord begins to grow keeping a round homogeneous
shape, due to the high availability of nutrient across the domain. However, the
progressive consumption of nutrient by the cells makes oxygen concentration fall
below the critical threshold $c_0$ at the top of the cord ($t=325$), that is in
the farthest zone from the blood vessel. As a consequence, the transverse width
of the cord starts decreasing ($t=650$) and the whole structure takes a more
elongated shape. At later times ($t=900$), a front and a rear regions can be
clearly distinguished along the cord. The first one is a sort of growing rounded
head, where most living tumor cells are concentrated because they are globally
fed by a sufficient amount of nutrient. The second one is a kind of straight
tail of nearly constant transverse width, in which the distribution of cells and
oxygen depends uniquely on the distance from the blood vessel and has
substantially reached a steady state.

The dynamics of the system depicted by the model highlights two main issues
to be investigated, namely the axial and the transverse growth of the cord. In
more detail, the former is concerned with the determination of the shape and the
growth speed of the head in its motion along the blood vessel, possibly in
connection with traveling waves describing the evolution of the front
part of the interface $S$. The latter is instead aimed at characterizing the
typical transverse width of the tail, taking advantage of the above
consideration that the rear part of the cord is basically stationary with an
evident special dependence of $\phi$ and $c$ on the space variables $x$ and $z$.
In this paper, specifically in Sect. \ref{sect-statprob} below, we address this
second problem, leaving the first one for a possible forthcoming work.

We begin from a purely descriptive point of view, observing that in the tail of
the cord the cell volume ratio attains its maximum value at the vessel wall
($x=0$), as it can be expected for in that zone cell proliferation is constantly
sustained by a direct nutrient supply. Conversely, at the interface $S$ and in
the corresponding upper region inside the host tissue it decreases toward the
stress-free value $\phi_0$. Hence the expansion of the tail toward a transverse
stationary configuration causes tumor and host cells to achieve an equilibrium
spatial distribution characterized by a null mutual stress as well as a null
internal stress of the surrounding tissue. We incidentally notice that an
analogous unstressed state cannot be reached at the head of the cord, which is
never stationary for external cells are continuously compressed and pushed
away by growing tumor cells that penetrate into the host tissue.

With respect to oxygen distribution, we note that the tail can be ideally
divided into two stripes: An inner one, extending from the vessel to a certain
depth inside the cord, where $c\geq c_0$, and an outer one, reaching the
periphery of the cord, in which $c<c_0$. It is straightforward to assimilate
them to the so-called viable and necrotic regions, respectively, whose existence
is experimentally confirmed and explicitly described by means of suitable
equations in the mathematical model of tumor cords by Bertuzzi \emph{et al.}
\cite{MR2111919}. However, we want to point out that in the present context the
formation of these two distinct regions is not postulated \emph{a priori} as a
modeling assumption, but is recovered \emph{a posteriori} as a result of the
specific dynamics entailed by the model itself. Oxygen concentration, which is
maximum at its source, namely the blood vessel, decreases while diffusing
through the cord due to the absorption by tumor cells. The farther from the
vessel cells are the lower the quantity of nutrient they are reached by is, so
that some of them starve and can no longer proliferate. The process continues
until in a certain portion of the cord the net proliferation rate is zero: Then
that portion of cord is in equilibrium, it stops growing transversally and
finally reaches a steady state.

\section{The stationary problem}
\label{sect-statprob}
This part of the paper is devoted to the analysis of the mathematical model
presented in the previous Sects. \ref{sect-model} and \ref{sect-nondim}. In
particular, referring to the nondimensional form \eqref{nondim_problem} of the
equations, we consider a steady state in which the cord is supposed to be
infinitely long in the longitudinal $z$ direction, and to have grown up to a
width $w>0$ in the transverse $x$ direction (see Fig. \ref{fig-domain_1D}). In
such a configuration, the state variables $\phi$ and $c$ depend only on $x$ in
$\Omega$.

\begin{figure}[t]
    \begin{center}
        \includegraphics[width=9cm]{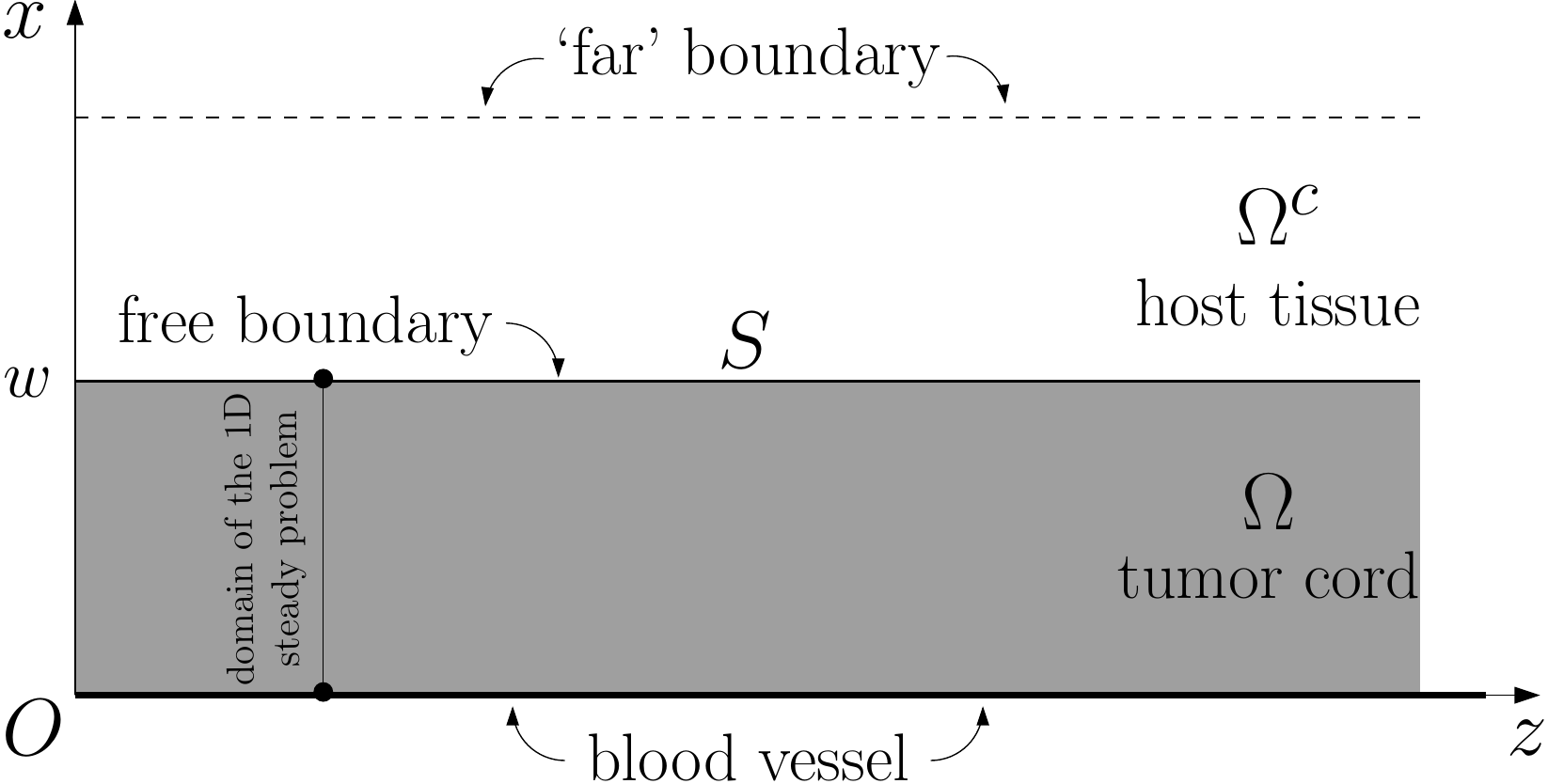}
    \end{center}
    \caption{Geometry used to study the one-dimensional stationary problem. The
        solution $(\phi,\,c)$ is assumed to be independent of $z$, which reduces
        the domain of the equations to the interval $[0,\,w]$ for the variable
        $x$.}
    \label{fig-domain_1D}
\end{figure}

In this geometrical setting, the free boundary $S$ coincides with the line
$x=w$, while the tumor cord and the host tissue are represented by the
stripe $0<x<w$ and the half-plane $x>w$, respectively. According to Eqs.
\eqref{nondim_problem}, an admissible steady solution $(\phi,\,c)$ in $\Omega^c$
satisfies $\phi=\phi_0$, $\nabla{c}=0$, hence we can confine ourselves to the
region $\Omega$ assuming the continuity conditions $\phi=\phi_0$,
$\partial_xc=0$ for $x=w$.

The objectives of the present study are twofold:

\begin{enumerate}
\item [(a)] As a first step (cf. Subsect. \ref{exreg}), to establish existence
of physically significant solutions with appropriate regularity. In this stage,
the width $w$ of the cord is regarded to all purposes as a parameter of the
model.

\item [(b)] As a second step (cf. Subsect. \ref{solfreebound}), to solve, via a
perturbative approach compatible with the results obtained in the previous
point, the free boundary problem by identifying a steady growth value of the
cord width.
\end{enumerate}

Specifically, we consider the following class of boundary value
problems\footnote{The condition $\partial_x\phi=0$ for $x=0$ of Problem
\eqref{bvp} is equivalent to the original boundary condition
$\partial_x(\phi\Sigma(\phi))=0$ of Problem \eqref{nondim_problem} in a sense
that will be made precise later in Remark \ref{remark-bc}.}:
\begin{equation}
    \left\{
    \begin{array}{ll}
        -\partial_x^2F(\phi)=w^2g(\phi)\Gamma(c) & \text{in\ } \domain \\
        \\[-0.3cm]
        -\partial_x^2c+\alpha w^2\phi c=0 & \text{in\ } \domain \\
        \\[-0.3cm]
        \partial_x\phi=0,\ c=1 & x=0 \\
        \\[-0.3cm]
        \phi=\phi_0,\ \partial_xc=0 & x=1
    \end{array}
    \right.
    \label{bvp}
\end{equation}
in the unknowns $\phi,\,c:\bar\domain\to\R_+$, where:

\begin{enumerate}
\item [(i)] $\domain=(0,\,1)$ is the rescaled domain after the substitution
$x\to wx$, which reduces the free boundary problem to a fixed boundary
problem. Notice that $w$ appears now among the coefficients of the
equations.

\item [(ii)] $F=F(\phi)$ is a `generalized' stress function linked to the cell
stress function $\Sigma=\Sigma(\phi)$ by the relation
\begin{equation}
    F'(\phi)=\phi{(\phi\Sigma(\phi))}',
    \label{F-Sigma}
\end{equation}
where the superscript $'$ stands for derivation with respect to $\phi$. This
way it results
\begin{equation*}
    -\Delta F(\phi)=-\nabla\cdot[\phi\nabla(\phi\Sigma(\phi))]
\end{equation*}
in accordance with the first of Eqs. \eqref{nondim_problem}.

\item [(iii)] $g=g(\phi)$ is the specific cell growth function, which
determines how cells proliferate on the basis of their current distribution.

\item [(iv)] $\Gamma=\Gamma(c)$ is the growth regulation function, which
expresses promotion or inhibition of cellular proliferation as a consequence of
the availability of nutrient $c$.

\item [(v)] $\alpha>0$, $\phi_0\in (0,\,1)$ are phenomenological parameters
related to the consumption rate of nutrient by the cells and to the stress-free
cell density, respectively.
\end{enumerate}
In particular, concerning the functions $F,\,g,\,\Gamma$ we formulate the
following assumptions:

\begin{assumption}
\label{assF}
We assume $F\in C^1([0,\,1])$ with
$$ F'(\xi)\geq 0,\ \forall\,\xi\in[0,\,1], \qquad
    F'(\xi)>0,\ \forall\,\xi\in(0,\,1], $$
and in addition $F(0)=0$, $F(1)=1$.
\end{assumption}
\begin{remark}
Given the dimensionless cell stress function $\Sigma$, condition $F(0)=0$ is
readily obtained by properly choosing the integration constant in Eq.
\eqref{F-Sigma}. Conversely, we observe that the fulfillment of condition
$F(1)=1$ may simply require a suitable rescaling of $\Sigma$.
\end{remark}

\begin{assumption}
\label{assg}
We assume that $g$ is Lipschitz continuous on $[0,\,1]$, with Lipschitz
constant $\Lg>0$:
$$ \vert g(\xi_2)-g(\xi_1)\vert\leq \Lg\vert\xi_2-\xi_1\vert,
    \qquad \forall\,\xi_1,\,\xi_2\in[0,\,1]. $$
Moreover, we require
$$  g(\xi)\geq 0\ \text{for\ } \xi\in[0,\,1], \qquad
    g(\xi)<0\ \text{otherwise}. $$
\end{assumption}

\begin{assumption}
\label{assGamma}
We assume that $\Gamma$ is Lipschitz continuous on $[0,\,1]$, with Lipschitz
constant $L_\Gamma>0$:
$$ \vert\Gamma(\xi_2)-\Gamma(\xi_1)\vert\leq L_\Gamma\vert\xi_2-\xi_1\vert,
    \qquad \forall\,\xi_1,\,\xi_2\in[0,\,1]. $$
\end{assumption}

From Assumption \ref{assF} it follows that $F$ is monotonically increasing
and invertible on $[0,\,1]$, and that its inverse function, henceforth denoted
by $f$ for brevity, is in turn increasing and differentiable on every compact
subset of $(0,\,1]$. We agree to denote by $\Lf$ the Lipschitz constant of $f$
on every interval of the form $[F(\ep),\,1]$, $0<\ep<1$. Notice that
the assumptions on $F$ amount mainly to a request of ellipticity of the
nonlinear equation for $\phi$, which is however potentially degenerate at
$\phi=0$.

On the other hand, Assumptions \ref{assg}, \ref{assGamma} entail the boundedness
of $g,\,\Gamma$ on $[0,\,1]$, hence there exist constants $\gmax,\,\Gmax>0$ such
that $\vert g(\xi)\vert\leq \gmax$, $\vert\Gamma(\xi)\vert\leq\Gmax$ for all
$\xi\in[0,\,1]$.

For modeling reasons discussed in Sect. \ref{sect-nondim}, we are interested in
nonnegative continuous solutions $\phi,\,c$ bounded from above by $1$. In more
detail, in view of the possible degeneracy of the equation for $\phi$ in zero,
we require that $\phi$ be strictly positive in the domain $\bar\domain$,
therefore in the sequel it will be instrumental to refer to the following sets:
\begin{align*}
    \V &= \left\{\phi\in\C\,:\,\ep\leq\phi(x)\leq 1,\
        \forall\,x\in\bar\domain\right\}, \\
    \\[-0.3cm]
    \U &= \left\{c\in\C\,:\,0\leq c(x)\leq 1,\ \forall\,x\in\bar\domain\right\},
\end{align*}
where $\ep>0$ is a fixed parameter. Notice that we cannot expect solutions
$\phi>\phi_0$ on $\bar\domain$, for they would not match the boundary
condition of Problem \eqref{bvp} at $x=1$, hence we assume $\ep<\phi_0$.

\begin{remark}
\label{remark-epsilon}
We anticipate that, as far as the solution to the free boundary problem is
concerned, the specific value of the parameter $\ep$ in the range
$(0,\,\phi_0)$ is irrelevant, because the cell density $\phi$ can be shown to
satisfy automatically $\phi\geq\phi_0$ on the whole interval $\bar\domain$ (cf.
Theorem \ref{phi-phi0-freebound}). The introduction of $\ep$ is due to the
necessity to obtain existence in $\V\times\U$ of solutions to Problem
\eqref{bvp}, among which to look later for possible solutions to the free
boundary problem. However, the theory we are going to develop will provide as
by-product an optimal criterion on whose basis to select the value of
$\ep$.
\end{remark}

\begin{remark}
\label{remark-bc}
The boundary condition for $\phi$ in $x=0$ stated by model
\eqref{nondim_problem} can be rewritten, according to Eq. \eqref{F-Sigma}, as
$$ 0=\partial_x(\phi\Sigma(\phi))=\frac{F'(\phi)}{\phi}\partial_x\phi. $$
Since for $\phi\in\V$ one has $\phi\geq\ep>0$, thus $F'(\phi)>0$ in view
of Assumption \ref{assF}, we deduce
$$ \partial_x(\phi\Sigma(\phi))=0 \quad \text{for\ } x=0
    \quad \ \Longleftrightarrow \quad \
    \partial_x\phi=0 \quad \text{for\ } x=0, $$
which gives the equivalence of boundary conditions between Problems
\eqref{nondim_problem} and \eqref{bvp}, at least for functions belonging to the
class $\V$.
\end{remark}

In the following, we will consider the main results of our analysis without
going into the details of their proofs, so as to give an overall picture of the
mathematical problem and of its solution. The interested reader can find all
technical proofs postponed in Sect. \ref{sect-proofs}. We simply point out here
that the constant $C_P$, occasionally appearing in several forthcoming
formulas, is the Poincar\'e constant of the domain $\domain$.

\subsection{Existence and regularity of the solutions}
\label{exreg}
We begin by addressing the question of existence and regularity of solutions to
Problem \eqref{bvp} for fixed positive $w$. Under suitable assumptions on the
parameters of the equations, we will identify a family of solutions
$(\phi,\,c)\in\V\times\U$ attached to each $w$ ranging in an appropriate set of
values.

Fix then $w>0$. The strategy we adopt to solve Problem \eqref{bvp} consists of
the following steps:

\begin{enumerate}
\item We choose any $\varphi\in\V$ and study the boundary value problem
\begin{equation}
    \left\{
    \begin{array}{rcll}
        -\partial_x^2c+\alpha w^2\varphi c & = & 0 & \text{in\ } \domain \\
        \\[-0.3cm]
        c & = & 1 & x=0 \\
        \\[-0.3cm]
        \partial_xc & = & 0 & x=1,
    \end{array}
    \right.
    \label{bvp-c}
\end{equation}
establishing existence and uniqueness of a solution $c\in\U$.

\item We then choose any $\sigma\in U$ and study the boundary value problem
\begin{equation}
    \left\{
    \begin{array}{rcll}
        -\partial_x^2F(\phi) & = & w^2g(\phi)\Gamma(\sigma) &
            \text{in\ } \domain \\
        \\[-0.3cm]
        \partial_x\phi & = & 0 & x=0 \\
        \\[-0.3cm]
        \phi & = & \phi_0 & x=1,
    \end{array}
    \right.
    \label{bvp-phi}
\end{equation}
finding conditions on the parameters that guarantee existence and uniqueness of
a solution $\phi\in\V$.

\item As a consequence of the previous steps, we can define an operator
$A:\V\to\V$ that to every function $\varphi\in\V$ associates the solution
$\phi\in\V$ of Problem \eqref{bvp-phi} through the solution $c\in\U$ of Problem
\eqref{bvp-c}. By showing that $A$ satisfies the hypotheses of Schauder Fixed
Point Theorem, we finally find a function $\phi\in\V$, and consequently also a
function $c\in\U$, solving Problem \eqref{bvp} as desired.

\item At last, we prove that under the same conditions on the parameters
established in the previous steps it is possible to control the distance
between any solution $\phi\in\V$ to Problem \eqref{bvp} and $\phi_0$, so that a
perturbative expansion of $\phi$ about $\phi_0$ be justifiable in addressing
next the free boundary problem.
\end{enumerate}

\subsubsection*{Steps 1, 2: Well-posedness of Problems \eqref{bvp-c},
\eqref{bvp-phi}}
First, we state the main properties of the functions $c$, $\phi$ as they result
from any admissible data $\varphi$, $\sigma$ on the basis of Problems
\eqref{bvp-c}, \eqref{bvp-phi}.

\begin{proposition}
\label{wellpos-c}
To every $\varphi\in\V$ there exists a unique solution $c\in\U$ of Problem
\eqref{bvp-c}, which is further twice continuously differentiable and
monotonically decreasing on $\bar\domain$, with a maximum value equal to $1$
attained for $x=0$.
\end{proposition}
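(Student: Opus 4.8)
\emph{Plan of proof.} The idea is to read \eqref{bvp-c} as a linear, uniformly elliptic one-dimensional boundary value problem: existence and uniqueness will come from the Lax--Milgram lemma, $C^2$ regularity from a one-step bootstrap, and the qualitative features (the two-sided bound and the monotonicity) from maximum-principle/convexity arguments. Observe first that for $\varphi\in\V$ the zeroth-order coefficient satisfies $\alpha w^2\varphi\geq\alpha w^2\ep>0$ on $\bar\domain$ and is continuous, so no degeneracy occurs here and the problem is genuinely coercive.

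I would homogenize the Dirichlet datum by setting $c=1+u$ and looking for $u$ in $H:=\{v\in H^1(\domain):v(0)=0\}$, on which the Poincar\'e inequality holds with constant $C_P$. On $H$ consider the bilinear form $a(u,v)=\int_\domain u'v'\,dx+\alpha w^2\int_\domain\varphi uv\,dx$ and the functional $\ell(v)=-\alpha w^2\int_\domain\varphi v\,dx$. Since $0<\ep\leq\varphi\leq 1$, the form $a$ is bounded on $H\times H$ and coercive (already $a(u,u)\geq\|u'\|_{\L2}^2\geq(1+C_P^2)^{-1}\|u\|_{H^1}^2$, the zeroth-order term only helping), while $\ell$ is plainly bounded on $H$; Lax--Milgram then gives a unique $u\in H$, hence a unique weak solution $c=1+u$, which automatically carries the natural (Neumann) condition $\partial_xc=0$ at $x=1$. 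Uniqueness within $\U$ follows at once, because the difference of two solutions lies in $H$ and makes $a(\cdot,\cdot)$ vanish. For regularity, the weak formulation says $\partial_x^2c=\alpha w^2\varphi c$ in $\mathcal D'(\domain)$; since $\varphi,c\in\C$ (using the embedding $H^1(\domain)\hookrightarrow\C$ in one dimension) the right-hand side is continuous, whence $c\in C^2(\bar\domain)$ and $c$ solves \eqref{bvp-c} classically.

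For the bound $0\leq c\leq 1$ I argue by contradiction. If $M:=\max_{\bar\domain}c>1$, the maximum is not at $x=0$ (where $c=1$); at an interior maximum point $x_0$ one has $\partial_x^2c(x_0)\leq 0$, against $\partial_x^2c(x_0)=\alpha w^2\varphi(x_0)M>0$; and if the maximum is attained only at $x=1$, then $c>0$ near $x=1$, so there $\partial_x^2c=\alpha w^2\varphi c>0$, i.e. $c$ is strictly convex, and combined with $\partial_xc(1)=0$ this forces $\partial_xc<0$ on a left neighbourhood of $1$, contradicting that $x=1$ maximizes $c$. The lower bound $c\geq 0$ is symmetric: a negative interior minimum contradicts the sign of $\partial_x^2c$ dictated by the equation, and a negative minimum at $x=1$ is excluded because there $c$ would be concave with vanishing derivative, hence strictly increasing near $1$. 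Hence $c\in\U$.

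Finally, monotonicity. With $c\geq 0$ and $\varphi\geq\ep>0$ the equation gives $\partial_x^2c=\alpha w^2\varphi c\geq 0$ on $\bar\domain$, so $c$ is convex; since $\partial_xc(1)=0$, this yields $\partial_xc(x)=-\int_x^1\partial_x^2c\,ds\leq 0$ on $\bar\domain$, i.e. $c$ is non-increasing, with maximum $c(0)=1$. To see that the maximum is attained only at $x=0$ (indeed that $c$ is strictly decreasing), suppose $c$ were constant on some $[x_0,x_1]\subset\bar\domain$ with $x_0<x_1$; there $\partial_x^2c=0$, so the equation forces $c\equiv 0$ on $[x_0,x_1]$, hence $c(x_1)=\partial_xc(x_1)=0$, and uniqueness for the linear Cauchy problem $\partial_x^2c=\alpha w^2\varphi c$ propagates this to $c\equiv 0$ on all of $\bar\domain$, against $c(0)=1$. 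This gives the claim. There is no genuine obstacle in this lemma; the only mildly delicate points are the treatment of the Neumann endpoint $x=1$ in the maximum-principle step and the strict-monotonicity argument just sketched, everything else being routine for a one-dimensional linear equation.
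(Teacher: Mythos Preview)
Your argument is correct. The existence--uniqueness--regularity part (homogenize the Dirichlet datum, apply Lax--Milgram on $H^1_{0,0}(\domain)$, bootstrap via $\partial_x^2 c=\alpha w^2\varphi c\in\C$) is exactly what the paper does.

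The genuine difference lies in how the bounds $0\le c\le 1$ and the monotonicity are obtained. The paper first proves $c\ge 0$ by testing the equation against $c^-$ (an energy argument), and then, having $c\ge 0$, integrates the equation once to get $\partial_x c(x)=-\alpha w^2\int_x^1\varphi c\,d\xi\le 0$; this single step yields both the monotonicity and the upper bound $c\le c(0)=1$ simultaneously. You instead argue the two bounds separately by a pointwise maximum/minimum principle, with a case analysis at the Neumann endpoint $x=1$, and only afterwards recover the monotonicity from convexity. Your route is a bit longer (more cases to dispatch), but it buys you something the paper does not prove here: the strict monotonicity of $c$, via the Cauchy-uniqueness argument in your last paragraph. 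That strictness is in fact invoked later in the paper (in the analysis of $\czero_w$), so your addition is not wasted.
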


\begin{proposition}
\label{wellpos-phi}
Set
\begin{equation}
    \beta_1=\beta_1(\ep):=
        \sqrt{\frac{C_P\gmax\Gmax}{F(\phi_0)-F(\ep)}}, \qquad
    \beta_2=\beta_2(\ep):=C_P\sqrt{\Lg\Gmax\Lf}
    \label{beta12}
\end{equation}
and define
\begin{equation}
    \beta=\beta(\ep):=\max{\left(\beta_1(\ep),\,
        \beta_2(\ep)\right)}.
    \label{beta}
\end{equation}
If $\beta w<1$, then for every $\sigma\in\U$ Problem \eqref{bvp-phi} admits a
unique solution $\phi\in\V$.
\end{proposition}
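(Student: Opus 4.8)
The plan is to solve Problem \eqref{bvp-phi} by passing to the unknown $u = F(\phi)$ and setting up a fixed point argument on a suitable closed ball, exploiting the two constants $\beta_1$, $\beta_2$ in \eqref{beta12} which govern, respectively, the location of the solution (membership in $\V$) and the contraction estimate (uniqueness). Concretely, with $u = F(\phi)$ the equation becomes $-\partial_x^2 u = w^2 g(f(u))\Gamma(\sigma)$ on $\domain$, with $\partial_x u = 0$ at $x=0$ (using $F'(\phi)>0$ on $\V$, as in Remark \ref{remark-bc}) and $u = F(\phi_0)$ at $x=1$. For fixed $\sigma\in\U$, define the map $T$ sending $v\in\C$ to the solution $u$ of the linear problem $-\partial_x^2 u = w^2 g(f(v))\Gamma(\sigma)$ with the same boundary conditions; a fixed point of $T$ yields $\phi = f(u)\in\V$ solving \eqref{bvp-phi}.

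First I would record the explicit representation and a priori bounds for the linear problem: writing $u = F(\phi_0) + \int_x^1\!\!\int_0^s w^2 g(f(v))\Gamma(\sigma)\,dy\,ds$ (or the analogous Green's-function form), one gets $\|u - F(\phi_0)\|_\infty \le C_P w^2 \gmax\Gmax$ using $|g|\le\gmax$, $|\Gamma|\le\Gmax$ and the Poincaré-type constant $C_P$ of $\domain$. The condition $\beta_1 w < 1$, i.e. $C_P\gmax\Gmax w^2 < F(\phi_0) - F(\ep)$, is exactly what forces $F(\ep) \le u \le F(\phi_0) \le 1$ on $\bar\domain$, hence $\ep\le\phi = f(u)\le\phi_0\le 1$, so that $T$ maps the closed convex set $\{v\in\C : F(\ep)\le v\le 1\}$ into itself. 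Next, for the contraction estimate, given $v_1, v_2$ in this set, $u_1 - u_2$ solves a linear problem with right-hand side $w^2[g(f(v_1)) - g(f(v_2))]\Gamma(\sigma)$; estimating $|g(f(v_1)) - g(f(v_2))| \le \Lg\Lf|v_1 - v_2|$ (here $\Lf$ is the Lipschitz constant of $f$ on $[F(\ep),1]$, which applies since the $u_i$ stay in that interval) and $|\Gamma(\sigma)|\le\Gmax$ gives $\|u_1 - u_2\|_\infty \le C_P w^2\Lg\Lf\Gmax\|v_1 - v_2\|_\infty = (\beta_2 w)^2\|v_1-v_2\|_\infty$. Wait — I should check the power carefully; more precisely the bound is $\|u_1-u_2\|_\infty\le C_P w^2\Lg\Lf\Gmax\|v_1-v_2\|_\infty$, and since $\beta_2 = C_P\sqrt{\Lg\Gmax\Lf}$ one has $C_P w^2\Lg\Lf\Gmax = \beta_2^2 w^2/C_P \cdot C_P = (\beta_2 w)^2$ only if the Poincaré factor enters once; the correct reading is that one application of the inverse Laplacian contributes $C_P$ and the estimate reads $\beta_2^2 w^2 \|v_1-v_2\|_\infty$, so $\beta w < 1$ (hence $\beta_2 w < 1$, so $\beta_2^2 w^2 < 1$) yields a strict contraction. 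The Banach fixed point theorem then gives a unique $u$, hence a unique $\phi = f(u)\in\V$; smoothness of $\phi$ away from any degeneracy follows since $u\in C^2$ and $f$ is differentiable on $[F(\ep),1]\subset(0,1]$.

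The main obstacle is getting the two quadratic-in-$w$ constants to line up correctly — in particular tracking exactly how many times the Poincaré constant $C_P$ enters each estimate (once, for a single integration against the Green's function of $-\partial_x^2$ on $(0,1)$ with these mixed Neumann–Dirichlet conditions) so that the hypothesis $\beta w < 1$ is precisely the right threshold for both the invariance of the ball ($\beta_1 w < 1$) and the contraction ($\beta_2 w < 1$). A secondary delicate point is the boundary condition at $x=0$: one must verify that solving the problem for $u$ with $\partial_x u=0$ there genuinely corresponds to $\partial_x\phi = 0$, which is legitimate precisely because $\phi\ge\ep>0$ keeps $F'(\phi)>0$, per Remark \ref{remark-bc}; and that the degeneracy of $F'$ at $0$ never bites because the a priori bound keeps $\phi\ge\ep$ throughout. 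Everything else — existence and regularity for the linear auxiliary problem, boundedness of $g$ and $\Gamma$ — is routine given Assumptions \ref{assF}, \ref{assg}, \ref{assGamma}.
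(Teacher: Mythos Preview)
Your contraction-mapping approach is different from the paper's, which establishes existence for \emph{all} $w>0$ by a variational argument on a truncated problem (replacing $g\circ f$ by its positive part $\tilde g=[g(f(\cdot+u_0))]^+$ and minimising the associated energy), then uses an energy estimate to force $\phi\ge\ep$ when $\beta_1 w\le 1$, and a separate energy estimate for uniqueness when $\beta_2 w<1$. A direct Banach fixed point is a legitimate alternative, but your execution has two genuine gaps.

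First, the invariance step is wrong. From $\|u-F(\phi_0)\|_\infty\le C_P w^2\gmax\Gmax$ you only get $F(\ep)\le u\le 2F(\phi_0)-F(\ep)$, not $u\le F(\phi_0)$ as you write. The upper bound $2F(\phi_0)-F(\ep)$ need not be $\le 1$, so $f(u)$ may be undefined and your map $T$ does not obviously send $\{F(\ep)\le v\le 1\}$ into itself. The paper avoids this precisely by the truncation $\tilde g$, which forces any solution of the auxiliary problem into $[-u_0,1-u_0]$ \emph{a priori}; only afterwards does the bound $\beta_1 w\le 1$ enter to push $u\ge F(\ep)-F(\phi_0)$. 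If you want to run a fixed-point argument, you will need some analogous truncation or extension of $g\circ f$ outside $[0,1]$ to close the invariance.

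Second, your constants do not line up. In $L^\infty$ the solution operator for $-u''=h$ with your boundary data satisfies $\|u\|_\infty\le C_P\|h\|_\infty$ (via $\|u\|_{\H}\le C_P\|h\|_{\L2}$ and Morrey), so the contraction constant you obtain is $C_P w^2\Lg\Lf\Gmax=(\beta_2 w)^2/C_P$, not $(\beta_2 w)^2$. Since $C_P=2/\pi<1$, the hypothesis $\beta_2 w<1$ is \emph{not} enough to make this a contraction. The paper's factor $C_P^2$ (hence the exact threshold $\beta_2 w<1$) comes from working in the $\H$-norm: one gets $\|F(\phi_2)-F(\phi_1)\|_{\H}^2\le w^2\Lg\Gmax\Lf\|F(\phi_2)-F(\phi_1)\|_{\L2}^2\le (\beta_2 w)^2\|F(\phi_2)-F(\phi_1)\|_{\H}^2$ via Poincar\'e squared. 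If you insist on a fixed-point proof matching the stated $\beta$, you must do the contraction in the $\H$-norm, not in $\C$.
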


>From the proof of Proposition \ref{wellpos-phi} (cf. Sect. \ref{sect-proofs}),
it turns out that the constants $\beta_1(\ep)$, $\beta_2(\ep)$
determine the existence and the uniqueness, respectively, of the solution
$\phi\in\V$ to Problem \eqref{bvp-phi}. Notice that both of them tend to blow
when $\ep$ approaches its limiting values. Indeed, Eqs. \eqref{beta12} show
on the one hand that $\beta_1\to+\infty$ when $\ep\to\phi_0^-$, and they
may imply on the other hand $\beta_2\to+\infty$ for $\ep\to 0^+$ due to the
possible degeneracy of Problem \eqref{bvp-phi} for $\phi=0$. In fact, observing
that the Lipschitz constant $\Lf$ of $f$ on $[F(\ep),\,1]$ can be
characterized as
\begin{equation}
    \Lf=\max_{\xi\in[F(\ep),\,1]}f'(\xi)=
        \max_{\xi\in[F(\ep),\,1]}\frac{1}{F'(f(\xi))}=
        \frac{1}{\displaystyle{\min_{\eta\in[\ep,\,1]}}F'(\eta)},
    \label{Lf}
\end{equation}
we deduce that if $F'(0)=0$ then $\min_{\eta\in[\ep,\,1]}F'(\eta)\to 0$
for $\ep\to 0^+$, thus $\Lf\to+\infty$.

\subsubsection*{Step 3: Solution to Problem \eqref{bvp}}
We will henceforth always assume $\beta w<1$, even when not explicitly stated.
Let us introduce the operator $A_1:\V\to\U$ that to every $\varphi\in\V$
associates the unique solution $c=A_1(\varphi)\in\U$ of Problem \eqref{bvp-c}.
Analogously, let $A_2:\U\to\V$ be the operator that to such $c\in\U$ associates
the unique solution $\phi=A_2(c)\in\V$ of Problem \eqref{bvp-phi} with
$\sigma=c$. By composition, we define the operator $A=A_2\circ A_1:\V\to\V$ that
to every $\varphi\in\V$ associates the unique solution $\phi\in\V$ of the
problem
\begin{equation*}
    \left\{
    \begin{array}{ll}
        -\partial_x^2F(\phi)=w^2g(\phi)\Gamma(c) & \text{in\ } \domain \\
        \\[-0.3cm]
        -\partial_x^2c+\alpha w^2\varphi c=0 & \text{in\ } \domain \\
        \\[-0.3cm]
        \partial_x\phi=0,\ c=1 & x=0 \\
        \\[-0.3cm]
        \phi=\phi_0,\ \partial_xc=0 & x=1.
    \end{array}
    \right.
\end{equation*}
Clearly, if $A$ has a fixed point in $\V$, that is if there exists a
function $\phi\in\V$ such that $A(\phi)=\phi$, then the pair
$(\phi,\,c=A_1(\phi))\in\V\times\U$ is a solution to Problem \eqref{bvp}. Our
task is therefore to show that $A$ admits such a fixed point.

For this, it is first useful to know that

\begin{proposition} \label{lipschitz-A12}
The operators $A_1:\V\to\U$, $A_2:\U\to\V$ are Lipschitz continuous, and so is
$A:\V\to\V$.
\end{proposition}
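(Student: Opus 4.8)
The plan is to prove Lipschitz continuity of $A_1$, then of $A_2$, and conclude that the composite $A = A_2 \circ A_1$ is Lipschitz since a composition of Lipschitz maps is Lipschitz. I would work throughout in the sup-norm $\|\cdot\|_{\C}$, exploiting the one-dimensional elliptic estimates together with the Poincar\'e inequality on $\domain$.

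First I would handle $A_1$. Take $\varphi_1,\varphi_2\in\V$ with corresponding solutions $c_i = A_1(\varphi_i)\in\U$ of Problem \eqref{bvp-c}. Subtracting the two equations, the difference $\psi := c_1 - c_2$ satisfies $-\partial_x^2\psi + \alpha w^2\varphi_1 c_1 - \alpha w^2\varphi_2 c_2 = 0$ in $\domain$, with homogeneous boundary data $\psi(0)=0$, $\partial_x\psi(1)=0$. Rewriting the reaction term as $\alpha w^2\varphi_1\psi + \alpha w^2(\varphi_1-\varphi_2)c_2$, I would multiply by $\psi$, integrate by parts over $\domain$, and use the boundary conditions to kill the boundary term. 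The $\varphi_1\psi^2$ term has a favorable sign (since $\varphi_1\geq\ep>0$) and can be discarded; what remains is $\int_\domain |\partial_x\psi|^2\,dx \leq \alpha w^2\int_\domain |\varphi_1-\varphi_2|\,|c_2|\,|\psi|\,dx \leq \alpha w^2\|\varphi_1-\varphi_2\|_{\C}\|\psi\|_{L^2}$, using $0\leq c_2\leq 1$. Poincar\'e (valid because $\psi(0)=0$) gives $\|\psi\|_{L^2}\leq C_P\|\partial_x\psi\|_{L^2}$, whence $\|\partial_x\psi\|_{L^2}\leq \alpha w^2 C_P\|\varphi_1-\varphi_2\|_{\C}$, and then a one-dimensional Sobolev embedding (again using $\psi(0)=0$, so $\|\psi\|_{\C}\leq \|\partial_x\psi\|_{L^1}\leq \|\partial_x\psi\|_{L^2}$) yields $\|c_1-c_2\|_{\C}\leq \alpha w^2 C_P\|\varphi_1-\varphi_2\|_{\C}$. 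This is the desired Lipschitz bound for $A_1$.

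Next I would treat $A_2$. Take $c_1,c_2\in\U$ and let $\phi_i = A_2(c_i)\in\V$ solve Problem \eqref{bvp-phi} with $\sigma = c_i$. Set $\zeta := F(\phi_1) - F(\phi_2)$; then $-\partial_x^2\zeta = w^2[g(\phi_1)\Gamma(c_1) - g(\phi_2)\Gamma(c_2)]$ in $\domain$, with $\partial_x\zeta(0)=0$ (the Neumann condition transferred through $F$ as in Remark \ref{remark-bc}) and $\zeta(1) = F(\phi_0) - F(\phi_0) = 0$. Testing with $\zeta$, integrating by parts, and splitting the right-hand side as $g(\phi_1)[\Gamma(c_1)-\Gamma(c_2)] + [g(\phi_1)-g(\phi_2)]\Gamma(c_2)$, I would estimate using $|g|\leq\gmax$, $|\Gamma|\leq\Gmax$, the Lipschitz constants $L_\Gamma$ and $\Lg$, and the fact that $\phi_i\in\V$ lie in $[\ep,1]$ so that $|\phi_1-\phi_2| = |f(F(\phi_1)) - f(F(\phi_2))|\leq \Lf|\zeta|$ pointwise (using that $f$ is $\Lf$-Lipschitz on $[F(\ep),1]$). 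This gives $\int_\domain|\partial_x\zeta|^2\,dx \leq w^2(\gmax L_\Gamma\|c_1-c_2\|_{\C} + \Lg\Gmax\Lf\|\zeta\|_{\C})\|\zeta\|_{L^1}$, and Poincar\'e together with the Sobolev embedding (now using $\zeta(1)=0$) produces a bound of the form $\|\zeta\|_{\C}\leq w^2 C_P^2(\gmax L_\Gamma\|c_1-c_2\|_{\C} + \Lg\Gmax\Lf\|\zeta\|_{\C})$; under the standing hypothesis $\beta w < 1$, which forces $\beta_2 w < 1$, i.e. $w^2 C_P^2\Lg\Gmax\Lf < 1$, the term $\|\zeta\|_{\C}$ can be absorbed into the left-hand side, leaving $\|\zeta\|_{\C}\leq K\|c_1-c_2\|_{\C}$ for an explicit constant $K$. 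Finally, $\|\phi_1-\phi_2\|_{\C}\leq \Lf\|\zeta\|_{\C}\leq \Lf K\|c_1-c_2\|_{\C}$, which is Lipschitz continuity of $A_2$. The composite statement then follows immediately, with Lipschitz constant bounded by the product of the two constants above.

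The main obstacle, and the step requiring the most care, is the estimate for $A_2$: one must handle the nonlinearity of $F$ (equivalently, of $g$ and of the inverse $f$), correctly transfer the Neumann condition at $x=0$ through the change of variable $\phi\mapsto F(\phi)$ as justified in Remark \ref{remark-bc}, and — crucially — verify that the coefficient multiplying $\|\zeta\|_{\C}$ on the right is strictly less than one so that the absorption argument closes. This is precisely where the hypothesis $\beta w < 1$ (specifically the $\beta_2$ part, compare \eqref{beta12}) enters, tying the Lipschitz bound to the same smallness condition already used in Proposition \ref{wellpos-phi} for well-posedness. By contrast the $A_1$ estimate is routine linear elliptic energy estimation and needs no smallness assumption.
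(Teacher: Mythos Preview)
Your proof is correct and follows essentially the same route as the paper: for $A_1$ the paper invokes its preliminary Lemma~\ref{lemma-uhk} (whose content is precisely the energy-plus-Morrey argument you carry out directly), and for $A_2$ it tests the difference $F(\phi_2)-F(\phi_1)$ against itself, splits the source term exactly as you do, and absorbs using $(\beta_2 w)^2<1$. The only point to watch is the bookkeeping in the absorption step: keep the pointwise inequality $|\phi_1-\phi_2|\le\Lf|\zeta|$ under the integral so that the quadratic term is bounded by $w^2\Lg\Gmax\Lf\|\zeta\|_{L^2}^2\le(\beta_2 w)^2\|\partial_x\zeta\|_{L^2}^2$ (this is what the paper does), rather than passing through $\|\zeta\|_\infty\|\zeta\|_{L^1}$, which loses a factor of $C_P$ and would not quite close under the stated hypothesis $\beta w<1$.
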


Moreover, since in applying fixed point techniques a key feature is usually
the compactness of the involved operator, the following result is particularly
welcome.

\begin{proposition} \label{compact-A12}
The operator $A_1:\V\to\U$ is compact, and so is $A:\V\to\V$.
\end{proposition}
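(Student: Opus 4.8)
The plan is to establish compactness of $A_1$ by an elliptic regularity argument producing uniform $C^1$ bounds on its range, and then to obtain compactness of $A$ essentially for free from the factorization $A=A_2\circ A_1$ together with the continuity of $A_2$.

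First I would record the a priori information already available. By Proposition \ref{wellpos-c}, for every $\varphi\in\V$ the function $c=A_1(\varphi)$ satisfies $0\le c\le 1$ on $\bar\domain$. Reading the equation of Problem \eqref{bvp-c} as $\partial_x^2c=\alpha w^2\varphi c$ and using $0\le\varphi\le 1$, $0\le c\le 1$, one gets the uniform bound $\vert\partial_x^2c(x)\vert\le\alpha w^2$ for all $x\in\domain$, independently of $\varphi$. Integrating from the Neumann endpoint, $\partial_xc(x)=-\int_x^1\partial_x^2c(s)\,ds$ since $\partial_xc(1)=0$, whence $\vert\partial_xc(x)\vert\le\alpha w^2(1-x)\le\alpha w^2$ on $\bar\domain$, again uniformly in $\varphi$. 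Thus $A_1(\V)$ is bounded in $C^1(\bar\domain)$ with equi-Lipschitz derivatives; in particular it is equibounded and equicontinuous, so by the Ascoli--Arzel\`a theorem it is relatively compact in $\C$. Since $\V$ is itself a bounded subset of $\C$, this shows that $A_1$ maps bounded sets to relatively compact sets, i.e.\ $A_1$ is compact.

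For $A=A_2\circ A_1$ I would invoke the standard fact that the composition of a compact operator with a continuous one is compact. Let $K=\overline{A_1(\V)}$, which is a compact subset of $\C$ by the previous step, and which is contained in $\U$ because $A_1(\V)\subseteq\U$ (Proposition \ref{wellpos-c}) and $\U$ is closed. By Proposition \ref{lipschitz-A12} the operator $A_2:\U\to\V$ is Lipschitz continuous, hence $A_2(K)$ is a compact subset of $\V$. As $A(\V)=A_2(A_1(\V))\subseteq A_2(K)$, the set $A(\V)$ is relatively compact in $\C$, and again because $\V$ is bounded this proves that $A$ is compact.

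I do not expect a genuine obstacle here: the degeneracy of the $\phi$-equation at $\phi=0$, which complicates the earlier well-posedness results, plays no role since we work inside $\V$, where $\phi\ge\ep>0$; the only points deserving a line of care are the uniformity in $\varphi$ of the constants in the a priori estimate for $c$ — immediate from the explicit bounds above — and the observation that $A_1(\V)$ lands in the closed set $\U$, so that $A_2$ may legitimately be applied to its closure. Alternatively one could bypass the factorization and argue directly that, by the first equation of Problem \eqref{bvp}, $\partial_x^2F(\phi)$ is uniformly bounded by $w^2\gmax\Gmax$, so $\{F(\phi)\}$ is relatively compact in $\C$, and then compose with the locally Lipschitz inverse $f$ to deduce relative compactness of $\{\phi\}=\{f(F(\phi))\}$; but the composition argument is shorter and uses only results already at hand.
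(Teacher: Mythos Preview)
Your argument is correct and follows essentially the same route as the paper: uniform pointwise bounds on $c$ and $\partial_x^2 c$ from the equation, integration from the Neumann endpoint $x=1$ to get a uniform bound on $\partial_x c$, Ascoli--Arzel\`a for compactness of $A_1$, and then the factorization $A=A_2\circ A_1$ with the continuity of $A_2$ to conclude for $A$. The only cosmetic difference is that the paper phrases the Ascoli--Arzel\`a step via sequences, while you speak directly of the set $A_1(\V)$; your explicit remark that $\overline{A_1(\V)}\subseteq\U$ because $\U$ is closed is a small clarification the paper leaves implicit.
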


Thanks to Propositions \ref{lipschitz-A12}, \ref{compact-A12}, we can apply
Schauder Fixed Point Theorem to solve Problem \eqref{bvp}. For the sake of
completeness, we explicitly recall in the statement of the theorem all
hypotheses that bring to the result.

\begin{theorem} \label{solution}
Let Assumptions \ref{assF}, \ref{assg}, \ref{assGamma} hold with $\beta w<1$,
where $\beta=\beta(\ep)$ is the constant defined by Eq. \eqref{beta}.
Then there exists a solution $(\phi,\,c)\in\V\times\U$ to Problem \eqref{bvp}
satisfying the a priori estimate
$$ \|1-c\|_\infty\leq\alpha w^2\|\phi\|_{\L2}. $$
\end{theorem}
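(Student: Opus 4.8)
The plan is to produce the solution $(\phi,\,c)$ by a fixed point argument for the operator $A=A_2\circ A_1:\V\to\V$ introduced above, and then to read off the a priori estimate directly from Problem \eqref{bvp-c}. First I would check that $\V$ is an admissible set for Schauder's theorem in the Banach space $\C$: it is nonempty (the constant function $\phi\equiv\phi_0$ belongs to it, since $\ep<\phi_0<1$); it is convex (a convex combination of continuous functions with values in $[\ep,\,1]$ is again continuous with values in $[\ep,\,1]$); it is closed in $\C$ (a uniform limit of functions bounded between $\ep$ and $1$ satisfies the same bounds); and it is bounded in $\C$ (every element has sup-norm at most $1$).

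Next I would assemble the results already established. Under the hypothesis $\beta w<1$, Propositions \ref{wellpos-c} and \ref{wellpos-phi} guarantee that $A_1:\V\to\U$ and $A_2:\U\to\V$ are well defined, hence so is $A=A_2\circ A_1:\V\to\V$; Proposition \ref{lipschitz-A12} gives its continuity (indeed Lipschitz continuity); Proposition \ref{compact-A12} gives its compactness, so that $A(\V)$, being the image of the bounded set $\V$ under a compact operator, is relatively compact in $\C$. Schauder's Fixed Point Theorem then applies and yields a function $\phi\in\V$ with $A(\phi)=\phi$. Setting $c:=A_1(\phi)\in\U$, the pair $(\phi,\,c)$ satisfies, by the very construction of $A_1$ and $A_2$, both equations and all four boundary conditions of Problem \eqref{bvp}, and is therefore the desired solution in $\V\times\U$.

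It remains to prove the a priori bound. Since $\phi\in\V$ and $c=A_1(\phi)$, the function $c$ solves Problem \eqref{bvp-c} with $\varphi=\phi$; by Proposition \ref{wellpos-c} one has $0\leq c\leq 1$ on $\bar\domain$ with $c(0)=1$. Put $u:=1-c$, so that $u\geq 0$, $u(0)=0$, $\partial_xu(1)=0$ and $-\partial_x^2u=\alpha w^2\phi c$ in $\domain$. Integrating this identity twice and using the two boundary conditions gives
$$ u(x)=\alpha w^2\int_0^x\!\!\int_s^1\phi(t)c(t)\,dt\,ds, \qquad x\in\bar\domain. $$
Bounding $c\leq 1$ pointwise and then enlarging the intervals of integration to $\domain$ (all integrands being nonnegative), we obtain $u(x)\leq\alpha w^2\int_0^1\phi(t)\,dt$ for every $x\in\bar\domain$; a final application of the Cauchy--Schwarz inequality on the unit-length interval $\domain$ yields $\int_0^1\phi\leq\|\phi\|_{\L2}$, whence $\|1-c\|_\infty=\|u\|_\infty\leq\alpha w^2\|\phi\|_{\L2}$.

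As for difficulties: since Propositions \ref{wellpos-c}--\ref{compact-A12} already encapsulate all the analytical work — well-posedness of the nutrient and cell sub-problems, and the continuity and compactness of the associated solution maps — there is essentially no remaining obstacle, the argument being a bookkeeping assembly of those facts into Schauder's theorem together with the elementary double integration above. The only points needing a little care are that the compactness of $A$ must be invoked on the bounded set $\V$ (so that $A(\V)$ is relatively compact, as Schauder requires) and that the fixed point automatically lies in $\V$ because $A$ maps $\V$ into itself, which in turn is the content of Proposition \ref{wellpos-phi} under $\beta w<1$.
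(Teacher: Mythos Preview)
Your argument is correct and, for the fixed-point part, essentially identical to the paper's: you verify that $\V$ is nonempty, convex, closed (and bounded) in $\C$, invoke Propositions \ref{wellpos-c}--\ref{compact-A12} to have $A:\V\to\V$ continuous and compact, and apply Schauder exactly as the paper does. The only difference is in the derivation of the a priori estimate: the paper obtains it by the energy method (Lax--Milgram applied to the equation for $\bar c=c-1$, giving $\|\bar c\|_{\H}\leq\alpha w^2\|\phi\|_{\L2}$, followed by Morrey's inequality $\|\bar c\|_\infty\leq\|\bar c\|_{\H}$), whereas you integrate the equation twice using the $C^2$ regularity of $c$ from Proposition \ref{wellpos-c} and then apply Cauchy--Schwarz. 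Your route is slightly more elementary and makes the positivity of $1-c$ and the role of $0\leq c\leq 1$ more transparent; the paper's route is more in the spirit of the surrounding variational framework and recycles the machinery already set up in Lemma \ref{lemma-uhk}. Both give the same bound.
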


Theorem \ref{solution} gives existence but not uniqueness of solutions
$(\phi,\,c)\in\V\times\U$ to Problem \eqref{bvp}. Observe however that, owing to
Proposition \ref{wellpos-c}, to \emph{any} fixed point $\phi\in\V$ of the
operator $A$ there corresponds a \emph{unique} $c=A_1(\phi)\in\U$ such that the
pair $(\phi,\,c)\in\V\times\U$ solves Problem \eqref{bvp}.

\subsubsection*{Step 4: Controlling the distance between $\phi$ and $\phi_0$}
For any solution $(\phi,\,c)\in\V\times\U$ to Problem \eqref{bvp}, we are
interested now in controlling the distance between $\phi$ and the constant
$\phi_0$, in view of the application of a perturbative technique to solve the
free boundary problem. We find out that the quantity $\beta_2w$ plays a
fundamental role, in fact we have:
\begin{theorem}
\label{distance}
Let $\phi\in\V$ be any solution to Problem \eqref{bvp}. Then
$$ \|\phi-\phi_0\|_\infty\leq\frac{\gmax}{\Lg C_P}(\beta_2w)^2. $$
\end{theorem}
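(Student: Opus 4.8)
The plan is to linearize the equation for $\phi$ through the substitution $v:=F(\phi)-F(\phi_0)$, bound $v$ by a standard energy estimate, and then transfer the bound to $\phi$ via the Lipschitz continuity of $f=F^{-1}$. Since $g,\,\Gamma$ are continuous and $\phi,\,c\in\C$, the right-hand side $w^2g(\phi)\Gamma(c)$ of the first equation of \eqref{bvp} is continuous on $\bar\domain$, so $F(\phi)$ — and hence $v$ — is of class $C^2(\bar\domain)$, exactly as in the well-posedness analysis of Problem \eqref{bvp-phi}, and $v$ solves the homogeneous linear problem
\[
-v''=w^2g(\phi)\Gamma(c)\ \text{in\ }\domain,\qquad v'(0)=0,\quad v(1)=0,
\]
the Neumann condition at $x=0$ coming from $\partial_xF(\phi)=F'(\phi)\partial_x\phi$ with $F'(\phi)>0$ on $\V$ (cf. Remark \ref{remark-bc}), and the Dirichlet condition at $x=1$ from $\phi(1)=\phi_0$.

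Next I would multiply the equation by $v$, integrate over $\domain$ and integrate by parts — the boundary terms vanish since $v'(0)=0$ and $v(1)=0$ — and use $|g(\phi)|\leq\gmax$, $|\Gamma(c)|\leq\Gmax$ together with the Cauchy--Schwarz inequality on the unit-length interval $\domain$ to get
\[
\|v'\|_{\L2}^2=w^2\int_0^1 g(\phi)\Gamma(c)\,v\,dx\leq w^2\gmax\Gmax\|v\|_{\L2}.
\]
Since $v(1)=0$, the Poincar\'e inequality gives $\|v\|_{\L2}\leq C_P\|v'\|_{\L2}$, whence $\|v'\|_{\L2}\leq C_P\gmax\Gmax w^2$; and writing $v(x)=-\int_x^1 v'(s)\,ds$ and applying Cauchy--Schwarz once more yields $\|v\|_\infty\leq\|v'\|_{\L2}\leq C_P\gmax\Gmax w^2$.

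To conclude, recall that $\ep<\phi_0<1$, so $\phi(x),\,\phi_0\in[\ep,\,1]$ and therefore $F(\phi(x)),\,F(\phi_0)\in[F(\ep),\,1]$; the Lipschitz continuity of $f$ on this interval then gives
\[
\|\phi-\phi_0\|_\infty=\|f(F(\phi))-f(F(\phi_0))\|_\infty\leq\Lf\|v\|_\infty\leq C_P\Lf\gmax\Gmax w^2,
\]
and it only remains to observe, from the definition \eqref{beta12} of $\beta_2$, that $(\beta_2w)^2=C_P^2\Lg\Gmax\Lf w^2$, so the last quantity equals $\frac{\gmax}{\Lg C_P}(\beta_2w)^2$, which is the claimed estimate.

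I do not expect a serious obstacle: the only points needing a little care are checking that $F(\phi)$ is regular enough to justify the integration by parts (which follows from the continuity of the data, as already used for Problem \eqref{bvp-phi}) and correctly handling the mixed Neumann/Dirichlet conditions when invoking the Poincar\'e constant $C_P$ of $\domain$. Note also that the factor $\Lg$ in the statement is purely cosmetic — it cancels the $\Lg$ built into $\beta_2$ — and is retained only to phrase the bound through the quantity $\beta_2w$ that governs uniqueness in Proposition \ref{wellpos-phi}.
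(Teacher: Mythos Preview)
Your proof is correct and follows essentially the same route as the paper: the paper also sets $v=F(\phi)-F(\phi_0)$, derives the energy bound $\|v\|_{\H}\leq C_P\gmax\Gmax w^2$ (which it quotes from the proof of Proposition~\ref{wellpos-phi} rather than rederiving), passes to $\|v\|_\infty$ via Morrey's inequality, and then applies the Lipschitz bound for $f$ on $[F(\ep),1]$. The only cosmetic difference is that the paper rearranges the constants in terms of $\beta_2$ first and then applies Morrey and Lipschitz, whereas you do the algebra at the end --- but the content is identical.
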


>From this result we infer that, since $\beta_2 w\leq\beta w$, for even
moderately small values of $\beta w$ (with respect to $1$) the solution $\phi$
becomes close to $\phi_0$ provided $\gmax$ (maximum cell proliferation) and
$\Lg$ (maximum proliferation rate) satisfy $\gmax/\Lg=O(C_P)$, $C_P\simeq
0.64$ (cf. Sect. \ref{sect-proofs}).

\subsection{Solution to the free boundary problem}
\label{solfreebound}
So far we have obtained existence and regularity of solutions to Problem
\eqref{bvp} under a suitable condition on the admissible steady values of the
cord width $w$. However, we are mainly interested in finding stationary
solutions satisfying in addition the free boundary condition
\eqref{nondim_free_bound_cond}, that in the present setting rewrites as
$$ \partial_x(\phi\Sigma(\phi))=0 \qquad \text{for\ } x=1 $$
or, on the basis of the same considerations proposed in Remark \ref{remark-bc},
more explicitly as
\begin{equation}
    \partial_x\phi=0 \qquad \text{for\ } x=1.
    \label{bc-freebound}
\end{equation}
To this end, we formulate additional assumptions on $\Gamma$:
\begin{assumption}  \label{assGamma-2}
In addition to Assumption \ref{assGamma}, we assume that $\Gamma$ is
nondecreasing on $[0,\,1]$:
$$ \Gamma(\xi_1)\leq\Gamma(\xi_2), \quad \forall\,\xi_1,\,\xi_2\in[0,\,1],\
    \xi_1\leq\xi_2 $$
with $\Gamma(0)<0<\Gamma(1)$.
\end{assumption}
Notice that Assumptions \ref{assGamma} and \ref{assGamma-2} imply the existence
of a point $c_0\in (0,\,1)$ such that $\Gamma(c_0)=0$, with moreover
$\Gamma(\xi)\leq 0$ for $\xi\in[0,\,c_0)$, $\Gamma(\xi)>0$ for
$\xi\in(c_0,\,1]$.

We state first of all a crucial result that clarifies the role of the parameter
$\ep$.
\begin{theorem} \label{phi-phi0-freebound}
Let Assumption \ref{assGamma-2} hold and assume that a solution
$(\phi,\,c)\in\V\times\U$ exists to Problem \eqref{bvp}, such that $\phi$
fulfills the free boundary condition \eqref{bc-freebound}. Then
$\phi(x)\geq\phi_0$ for all $x\in\bar\domain$.
\end{theorem}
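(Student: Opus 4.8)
The plan is to argue by contradiction using a maximum-principle / barrier argument adapted to the degenerate quasilinear equation $-\partial_x^2 F(\phi)=w^2 g(\phi)\Gamma(c)$. First I would change the unknown: set $u=F(\phi)$, so that $\phi=f(u)$ with $f$ increasing, and the equation becomes $-\partial_x^2 u = w^2 g(f(u))\Gamma(c)$ on $\domain=(0,1)$, subject to $\partial_x u=0$ at $x=0$ (equivalent to $\partial_x\phi=0$ there, by Remark \ref{remark-bc}), and at $x=1$ we have both $\phi=\phi_0$, i.e.\ $u=F(\phi_0)$, and the free boundary condition $\partial_x\phi=0$, i.e.\ $\partial_x u=0$. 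Thus $u$ satisfies Neumann conditions at both endpoints and equals $F(\phi_0)$ at $x=1$. The claim $\phi\geq\phi_0$ on $\bar\domain$ is equivalent to $u\geq F(\phi_0)$ on $\bar\domain$, since $F$ is increasing.

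Next I would exploit the sign structure coming from Assumption \ref{assGamma-2}. Suppose for contradiction that $m:=\min_{\bar\domain}u < F(\phi_0)$, attained at some interior point $x_0\in[0,1)$ (it cannot be attained only at $x=1$ since $u(1)=F(\phi_0)>m$; if attained at $x=0$ the Neumann condition still lets the argument run, or one reflects evenly across $x=0$). At a minimum point, $\phi(x_0)=f(m)<\phi_0<c_0$... wait — rather, I would first establish that $c\geq c_0$ at the relevant points. From Proposition \ref{wellpos-c}, $c$ is monotonically decreasing with $c(0)=1$; combined with the equation $-\partial_x^2 c+\alpha w^2\phi c=0$ and the comparison with the constant $c_0$, one shows $c(x)> c_0$ wherever $\phi$ is small, hence $\Gamma(c)\geq 0$ there by Assumption \ref{assGamma-2}. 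More robustly: consider the set $\{x:\phi(x)<\phi_0\}$; on this set I want to show the right-hand side $w^2 g(f(u))\Gamma(c)$ has a definite sign, so that $u$ is sub/superharmonic there and the minimum principle applies. Since $g\geq 0$ on $[0,1]$, the sign of the source equals the sign of $\Gamma(c)$, so the crux is to prove $c\geq c_0$ on $\{\phi<\phi_0\}$, which in turn I would get from the following observation: oxygen $c$ is decreasing in $x$ and the largest depletion occurs where cell density is largest; but if $\phi<\phi_0$ somewhere, intuitively that is a region where proliferation has not yet outpaced depletion, forcing $c\geq c_0$ there. The clean way is to combine Theorem \ref{distance} (which gives $\|\phi-\phi_0\|_\infty$ small, hence $\phi$ stays near $\phi_0$, in particular bounded below by something) with the a priori estimate $\|1-c\|_\infty\le\alpha w^2\|\phi\|_{\L2}$ from Theorem \ref{solution} to force $c$ close to $1>c_0$ throughout — but that would make the conclusion trivial and is presumably \emph{not} the intended generality. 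The intended argument is purely qualitative: on the (relatively open) set $\domain_-=\{x\in\domain:\phi(x)<\phi_0\}$, I claim $\partial_x^2 u\le 0$, i.e.\ $u$ is concave, because there $\Gamma(c)\ge 0$; granting this, $u$ has no interior minimum strictly below its boundary values on each connected component of $\domain_-$, and tracking the endpoints of such a component (each endpoint is either $x=0$, where $\partial_x u=0$, or $x=1$, where $u=F(\phi_0)$, or an interior point where $\phi=\phi_0$ hence $u=F(\phi_0)$) yields $u\ge F(\phi_0)$ on $\bar\domain_-$, a contradiction with the definition of $\domain_-$ unless $\domain_-=\emptyset$.

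The main obstacle is establishing the sign claim $\Gamma(c)\ge 0$ on $\domain_-$ — equivalently $c\ge c_0$ wherever $\phi<\phi_0$ — without collapsing into the trivial regime. The argument I expect to use: let $x_0$ be a point where $\phi$ attains a value $<\phi_0$; consider the leftmost such point or a boundary point of $\domain_-$ adjacent to the set $\{\phi\ge\phi_0\}$. On $[0,x_0]$ one has $\phi\le\phi_0$ (or can arrange this by choosing $x_0$ appropriately), so the depletion coefficient $\alpha w^2\phi$ in the $c$-equation is bounded by $\alpha w^2\phi_0$, and comparing $c$ on $[0,x_0]$ with the solution of $-v''+\alpha w^2\phi_0 v=0$, $v(0)=1$, $v'(x_0)=0$ — or more simply using that $c$ solves a linear ODE with the Neumann condition pushed out to $x=1$ — gives a lower bound $c(x)\ge \cosh(\sqrt{\alpha w^2\phi_0}(1-x))/\cosh(\sqrt{\alpha w^2\phi_0})$; since $\beta w<1$ controls $\alpha w^2$ (indeed $\beta_2 w<1$ forces $w$ small), this lower bound exceeds $c_0$ under the standing hypotheses, completing the sign verification. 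Once the sign is secured, the concavity / minimum-principle step on $\domain_-$ is routine. I would close by noting the endpoint $x=1$ is handled cleanly precisely because the free boundary condition \eqref{bc-freebound} supplies the extra Neumann datum, which is what distinguishes genuine solutions of the free boundary problem from arbitrary solutions of Problem \eqref{bvp}.
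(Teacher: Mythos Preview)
Your proposal has a genuine gap at the step you yourself flag as ``the main obstacle'': showing that $\Gamma(c)\ge 0$ on the set $\domain_-=\{\phi<\phi_0\}$. The comparison arguments you sketch for $c$ either rely on smallness of $w$ (which, as you note, collapses the result into the trivial regime) or on an unjustified structural claim like ``on $[0,x_0]$ one has $\phi\le\phi_0$''. There is no a priori reason why the set where $\phi$ is small should coincide with the set where $c$ is large; since $c$ is decreasing in $x$ (Proposition~\ref{wellpos-c}), the set $\{c\ge c_0\}$ is an initial interval $[0,\bar x]$, and nothing you have written forces $\domain_-\subset[0,\bar x]$. Without this sign information your concavity/minimum-principle step on $\domain_-$ does not start.

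The idea you are missing is to use the two Neumann conditions \emph{together} to obtain a global integral identity, rather than to analyse $\domain_-$ locally. Integrating $-\partial_x^2 F(\phi)=w^2 g(\phi)\Gamma(c)$ once over $[0,1]$ and using $\partial_x F(\phi)=0$ at \emph{both} endpoints yields
\[
\int_0^1 g(\phi)\,\Gamma(c)\,dx=0.
\]
Since $g(\phi)\ge 0$ is continuous and not identically zero, and since $x\mapsto\Gamma(c(x))$ is continuous, nonincreasing, and satisfies $\Gamma(c(0))=\Gamma(1)>0$ (Assumption~\ref{assGamma-2} plus the monotonicity of $c$), this forces the existence of a single switching point $\bar x\in\domain$ with $\Gamma(c)>0$ on $[0,\bar x)$ and $\Gamma(c)\le 0$ on $[\bar x,1]$. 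Now the sign of $\partial_x^2 F(\phi)$ is determined globally: $F(\phi)$ is concave on $[0,\bar x]$ with $\partial_x F(\phi)(0)=0$, hence $\partial_x F(\phi)\le 0$ there, and convex on $[\bar x,1]$ with $\partial_x F(\phi)(1)=0$, hence $\partial_x F(\phi)\le 0$ there as well. Thus $F(\phi)$ is nonincreasing on all of $\bar\domain$, so $F(\phi(x))\ge F(\phi(1))=F(\phi_0)$ and the monotonicity of $F$ gives $\phi\ge\phi_0$. No smallness of $w$ and no analysis of $\domain_-$ is needed.
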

In view of Theorem \ref{phi-phi0-freebound}, the specific value of
$\ep\in(0,\,\phi_0)$ is not relevant to solve the free boundary problem,
indeed any possible solution, provided it exists, is `naturally' bounded from
below by $\phi_0$. Considering that Theorem \ref{solution} guarantees existence
of solutions if $\beta w<1$ and that, on the other hand, it results
$\beta\to+\infty$ when both $\ep\to 0^+$ and $\ep\to\phi_0^-$ (cf.
Proposition \ref{wellpos-phi}), we deduce that the optimal criterion for the
choice of $\ep$ aims at minimizing $\beta(\ep)$, so as to get the
widest possible range of admissible values of $w$.

Let us look for a first order expansion of $\phi$ in a neighborhood of the
constant stress-free value $\phi_0$:
\begin{equation}
    \phi(x)=\phi_0+\nu\phione(x),
    \label{phi-expansion}
\end{equation}
where $\nu>0$ is a `small' parameter to be identified from Problem \eqref{bvp}.
Correspondingly, also $c$ is expanded with respect to $\nu$ in a similar way:
\begin{equation}
    c(x)=\czero(x)+\nu c^{(1)}(x).
    \label{c-expansion}
\end{equation}

Inserting Eqs. \eqref{phi-expansion}, \eqref{c-expansion} into the second of
Eqs. \eqref{bvp}, and retaining only the terms up to the zeroth order in
$\nu$, yields the following problem for $\czero$:
\begin{equation}
    \left\{
    \begin{array}{rcll}
        -\partial_x^2\czero+\alpha w^2\phi_0\czero & = & 0 &
            \text{in\ } \domain \\
        \\[-0.3cm]
        \czero & = & 1 & x=0 \\
        \\[-0.3cm]
        \partial_x\czero & = & 0 & x=1,
    \end{array}
    \right.
    \label{bvp-c0}
\end{equation}
whose solution reads explicitly
\begin{equation}
    \czero_w(x)=\frac{\cosh{\left(w\sqrt{\alpha\phi_0}(1-x)\right)}}
        {\cosh\left(w\sqrt{\alpha\phi_0}\right)},
    \label{czero}
\end{equation}
where we have emphasized the dependence of $\czero$ on $w$ as a parameter. It is
an easy task to check that $\czero_w\in U$, indeed Problem \eqref{bvp-c0} falls
into Proposition \ref{wellpos-c} with the special choice $\varphi\equiv\phi_0\in
\V$.

In order to get an equation satisfied by $\phione$, and to detect at the same
time the parameter $\nu$, we proceed in a similar fashion, introducing
expressions \eqref{phi-expansion}, \eqref{c-expansion} into the
first of Eqs. \eqref{bvp}. Assuming formal differentiability of $g,\,\Gamma$ we
find, after some standard algebra,
$$ -\nu F'(\phi_0)\partial_x^2\phione+o(\nu)=
    w^2g(\phi_0)\Gamma(\czero_w)+O(\nu). $$
As $\Gamma$ is not identically zero, another term of zeroth order in $\nu$
should be found in this equation, besides the one at the right-hand side.
Estimating the order of magnitude of $\Gamma(\czero_w)$ via $\Gmax$ and looking
then at the left-hand side we discover $\nu=O(w^2\Gmax/F'(\phi_0))$. Moreover,
recalling the expression \eqref{Lf} for the Lipschitz constant $\Lf$ of $f$ on
$[F(\ep),\,1]$, we see that, at least for suitable functions $F$, we may
have $F'(\phi_0)=O(\Lf^{-1})$, hence
\begin{equation}
    \nu=O(w^2\Gmax\Lf)=O({(\beta_2w)}^2).
    \label{order_nu}
\end{equation}
As anticipated above, the parameter $\nu$ must be `small' in order for the
perturbative expansion to be meaningful. Equation \eqref{order_nu} shows that
this requirement corresponds in essence to the same condition prescribed by
Theorem \ref{solution} for the existence of solutions to Problem \eqref{bvp}.
Therefore, the existence theory and the perturbative approach are consistent
with one another, as they impose the same conditions on the parameters $\beta$,
$w$ of the model. In other words, for a certain value of the quantity $\beta
w$, they either apply or fail both.

Retaining only the terms up to the zeroth order in
$\nu$, the problem for the perturbation $\phione$ reads:
\begin{equation}
    \left\{
    \begin{array}{rcll}
        -\partial_x^2\phione & = & Cg(\phi_0)\Gamma(\czero_w) &
            \text{in\ } \domain \\
        \\[-0.3cm]
        \partial_x\phione & = & 0 & x=0 \\
        \\[-0.3cm]
        \phione & = & 0 & x=1,
    \end{array}
    \right.
    \label{bvp-phione}
\end{equation}
where $C=O(\Gmax^{-1})$ is an appropriate positive constant. In particular,
boundary conditions for $\phione$ are deduced from those prescribed on $\phi$
taking Eq. \eqref{phi-expansion} into account.

Unlike the previous case for $\czero_w$, Problem \eqref{bvp-phione} cannot
be solved in a closed form for a generic function $\Gamma$. However, classical
theory for linear elliptic equations guarantees existence and uniqueness of a
solution $\phione\in H^1_{0,1}(\domain)$, which is actually continuous due to
Sobolev embedding theorems for $\domain\subseteq\R$. Furthermore, since
$\czero_w\in U$ and $\Gamma$ is continuous on $[0,\,1]$, we even get
$\partial_x^2\phione\in\C$, hence $\phione$ is a classical solution to Problem
\eqref{bvp-phione}. This allows us to evaluate its derivative for
$x\in\bar\domain$ by integrating once the differential equation in
\eqref{bvp-phione}:
$$ -\partial_x\phione(x)+\partial_x\phione(0)=
    Cg(\phi_0)\intgrl_0^x\Gamma(\czero_w)\,d\xi, $$
whence, owing to the boundary conditions for $x=0$, $x=1$,
\begin{equation}
    \intgrl_0^1\Gamma(\czero_w)\,dx=0.
    \label{rootsGamma}
\end{equation}
Solving the (approximate) free boundary problem corresponds therefore to
finding the roots $w>0$ of Eq. \eqref{rootsGamma}.

After some preliminary considerations about the dependence of $\czero_w$
on $w$:
\begin{proposition} \label{c0-w}
Let $\czero_w\in\U$ be the solution to Problem \eqref{bvp-c0}. Then:
\begin{enumerate}
\item [(i)] The mapping $w\mapsto\czero_w(x)$ for $w\in\R_+$, $x\in\bar\domain$
is continuous and nonincreasing.

\item [(ii)] For $w\to+\infty$ the functions $\czero_w$ tend pointwise to the
function
$$ \czero_\infty(x)=
    \begin{cases}
        1 & \text{if\ } x=0 \\
        \\[-0.3cm]
        0 & \text{if\ } x\ne 0.
    \end{cases}
$$

\item [(iii)] For all $c_0\in(0,\,1)$ there exists $w_\ast>0$ such that if
$w\geq w_\ast$ the equation
$$ \czero_w(x)=c_0 $$
in the unknown $x$ has exactly one solution $\bar x_w\in\bar\domain$.

\item [(iv)] For any fixed $c_0\in(0,\,1)$ one has $\bar x_w\to 0$ when
$w\to+\infty$.
\end{enumerate}
\end{proposition}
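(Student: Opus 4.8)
The plan is to argue throughout from the closed-form expression \eqref{czero}, writing $a:=\sqrt{\alpha\phi_0}>0$ and $s:=aw$, so that $\czero_w(x)=\cosh(s(1-x))/\cosh(s)$, and setting $\lambda:=1-x\in[0,\,1]$. For part (i), continuity of $w\mapsto\czero_w(x)$ is immediate, since $(w,\,x)\mapsto\czero_w(x)$ is a composition of continuous functions. To see that it is nonincreasing I would fix $x\in\bar\domain$ and examine $\psi(s):=\log\czero_w(x)=\log\cosh(\lambda s)-\log\cosh(s)$: one computes $\psi'(s)=\lambda\tanh(\lambda s)-\tanh(s)$, and since $\tanh$ is nonnegative and increasing on $[0,\,\infty)$ and $\lambda s\le s$, it follows that $\lambda\tanh(\lambda s)\le\tanh(\lambda s)\le\tanh(s)$, whence $\psi'\le 0$. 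Therefore $s\mapsto e^{\psi(s)}=\czero_w(x)$, and hence $w\mapsto\czero_w(x)$, is nonincreasing.

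For part (ii), observe that $\czero_w(0)=1$ for every $w$, whereas for $x\in(0,\,1]$ the elementary bounds $\cosh(\lambda s)\le e^{\lambda s}$ and $\cosh(s)\ge e^{s}/2$ yield
$$ 0<\czero_w(x)\le 2e^{-(1-\lambda)s}=2e^{-xs}, $$
which tends to $0$ as $w\to+\infty$; this gives precisely the pointwise limit $\czero_\infty$.

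Part (iii) will follow from the intermediate value theorem together with strict monotonicity in $x$. By Proposition \ref{wellpos-c} applied with $\varphi\equiv\phi_0\in\V$ (or directly by differentiating \eqref{czero} in $x$), $\czero_w$ is continuous and strictly decreasing on $\bar\domain$, with $\czero_w(0)=1$ and $\czero_w(1)=1/\cosh(aw)$; hence it is a bijection of $\bar\domain$ onto $[1/\cosh(aw),\,1]$. Since $\cosh(aw)\to+\infty$, there exists $w_\ast>0$ such that $1/\cosh(aw)\le c_0$ for every $w\ge w_\ast$, and then $c_0\in[\czero_w(1),\,1]$ admits exactly one preimage $\bar x_w\in\bar\domain$.

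Finally, for part (iv), I would fix an arbitrary $\delta\in(0,\,1)$: by the estimate established in part (ii), $\czero_w(\delta)\le 2e^{-\delta aw}<c_0$ once $w$ is large enough, and since $\czero_w$ is decreasing this forces $\czero_w(x)<c_0$ for all $x\in[\delta,\,1]$, hence $\bar x_w<\delta$ for all such $w$; as $\delta$ was arbitrary, $\bar x_w\to 0$ when $w\to+\infty$. The only step requiring a little care is the sign of $\psi'$ in part (i); all the rest is a routine combination of strict monotonicity, the intermediate value theorem and exponential bounds on ratios of hyperbolic cosines, so I do not expect a real obstacle.
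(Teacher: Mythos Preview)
Your argument is correct. Each of the four parts goes through as written: the sign computation $\psi'(s)=\lambda\tanh(\lambda s)-\tanh(s)\le 0$ is valid for all $\lambda\in[0,1]$ and $s\ge 0$; the exponential estimate $0<\czero_w(x)\le 2e^{-xaw}$ is fine; strict monotonicity in $x$ follows because $\cosh$ is strictly increasing on $[0,\infty)$; and the $\delta$-argument for (iv) is a clean direct proof.

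The route, however, differs from the paper's. You work entirely from the closed form \eqref{czero}, whereas the paper proves (i) via the differential equation: it writes the problem satisfied by $u=\czero_{w_2}-\czero_{w_1}$ and applies Lemma~\ref{lemma-uhk} to obtain $\|\czero_{w_2}-\czero_{w_1}\|_\infty\le\alpha\phi_0\,|w_2^2-w_1^2|$ (continuity), then multiplies that equation by $u^+$ and integrates to get $u^+\equiv 0$ (monotonicity). For (iv) the paper first shows $w\mapsto\bar x_w$ is nonincreasing, hence has a limit, and then argues by contradiction using (ii); your direct $\delta$-argument based on the exponential bound avoids both the monotonicity step and the contradiction. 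What your approach buys is brevity and elementarity, since the explicit formula is available; what the paper's approach buys is a quantitative Lipschitz-type estimate in (i), which is later reused in the proof of Theorem~\ref{w0-solfb}, and a pattern of argument that would survive if $\czero_w$ were not known in closed form (e.g.\ for the analogous Problem~\eqref{bvp-c} with nonconstant $\varphi$).
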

\noindent we have all we need to solve our free boundary problem. The following
result should be compared to the analogous one proved by Bueno \emph{et al.}
\cite{MR2150346}.
\begin{theorem} \label{w0-solfb}
Let Assumptions \ref{assGamma}, \ref{assGamma-2} hold. Then there exists a
positive solution $w_0$ to Eq. \eqref{rootsGamma}.
\end{theorem}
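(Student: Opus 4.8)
The plan is to recast Eq. \eqref{rootsGamma} as a scalar root-finding problem and solve it by the intermediate value theorem. Define
$$ H(w):=\intgrl_0^1\Gamma(\czero_w(x))\,dx, \qquad w>0, $$
where $\czero_w$ is the explicit function \eqref{czero}. By Proposition \ref{c0-w} we have $\czero_w\in\U$, so $\czero_w(x)\in[0,\,1]$ and the argument of $\Gamma$ always lies in the interval where $\Gamma$ is defined; moreover $|\Gamma(\czero_w(x))|\leq\Gmax$ uniformly in $w,\,x$ (Assumption \ref{assGamma}), so $H(w)$ is finite for every $w>0$. The whole proof then reduces to showing that $H$ is continuous on $(0,\,+\infty)$ and changes sign.

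For continuity I would argue as follows. From formula \eqref{czero} the map $(w,\,x)\mapsto\czero_w(x)$ is continuous on $\R_+\times\bar\domain$, hence uniformly continuous on compact subsets; composing with the Lipschitz (thus continuous) function $\Gamma$ shows that $w\mapsto\Gamma(\czero_w(x))$ is continuous, uniformly in $x$ over any bounded $w$-interval, and integrating in $x$ gives continuity of $H$ on $(0,\,+\infty)$. (Equivalently, one may invoke the pointwise continuity of $w\mapsto\czero_w(x)$ from Proposition \ref{c0-w}(i), the uniform bound $\Gmax$, and Lebesgue's dominated convergence theorem.)

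Next I would compute the behaviour of $H$ at the two ends of $(0,\,+\infty)$. As $w\to 0^+$, formula \eqref{czero} gives $\czero_w(x)\to 1$ for every $x\in\bar\domain$, so by dominated convergence $H(w)\to\intgrl_0^1\Gamma(1)\,dx=\Gamma(1)$, which is strictly positive by Assumption \ref{assGamma-2}. As $w\to+\infty$, Proposition \ref{c0-w}(ii) gives $\czero_w(x)\to\czero_\infty(x)$ pointwise, and $\czero_\infty$ vanishes on $\domain$ except at the single point $x=0$; hence, again by dominated convergence, $H(w)\to\intgrl_0^1\Gamma(0)\,dx=\Gamma(0)<0$. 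Therefore we may pick $w_1>0$ small with $H(w_1)>0$ and $w_2>w_1$ large with $H(w_2)<0$, and by continuity of $H$ on $[w_1,\,w_2]$ the intermediate value theorem yields $w_0\in(w_1,\,w_2)\subset(0,\,+\infty)$ with $H(w_0)=0$, i.e.\ $\intgrl_0^1\Gamma(\czero_{w_0})\,dx=0$, as claimed.

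I do not expect a serious obstacle here: the argument is a soft compactness/continuity plus sign-change argument. The one point to handle with some care is the passage to the limit at the two endpoints, which relies essentially on the uniform bound $|\Gamma|\leq\Gmax$ on $[0,\,1]$ (Assumption \ref{assGamma}) together with the strict sign conditions $\Gamma(0)<0<\Gamma(1)$ of Assumption \ref{assGamma-2}; the monotonicity statement in Proposition \ref{c0-w}(i) is not actually needed for \emph{existence} of $w_0$ (it would only be relevant if one wished to discuss uniqueness, which is not asserted in the theorem).
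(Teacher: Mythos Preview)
Your proof is correct and follows the same overall strategy as the paper: define the continuous function $w\mapsto\int_0^1\Gamma(\czero_w)\,dx$, check that it is positive for small $w$ and negative for large $w$, and conclude by the intermediate value theorem.

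The one genuine difference is in how you obtain negativity for large $w$. You pass to the limit directly via dominated convergence, using Proposition~\ref{c0-w}(ii) (pointwise convergence $\czero_w\to 0$ a.e.) together with the uniform bound $|\Gamma|\le\Gmax$, which immediately gives $H(w)\to\Gamma(0)<0$. The paper instead splits the integral at the point $\bar x_w$ furnished by Proposition~\ref{c0-w}(iii), argues that the contribution on $[0,\bar x_w]$ tends to zero (since $\bar x_w\to 0$ by (iv)) while the contribution on $[\bar x_w,1]$ is strictly negative and nonincreasing in $w$, and combines these to force $C(w)<0$ for $w$ large. Your argument is shorter and uses less of Proposition~\ref{c0-w} (only (i)--(ii), and as you note not even the monotonicity in (i)); the paper's decomposition, on the other hand, makes the mechanism more transparent and yields slightly more, namely an explicit threshold $w_{\ast\ast}$ beyond which $C(w)$ stays negative.
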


Notice that Theorem \ref{w0-solfb} does not guarantee that the solution $w_0$
actually satisfies $\beta w_0<1$. In practice, this condition has to be checked
\emph{a posteriori}, after solving Eq. \eqref{rootsGamma} explicitly. Concerning
this, we observe that, once the function $\Gamma$ has been specified, Eq.
\eqref{rootsGamma} often specializes in an algebraic equation in the unknown
$w$, that can be solved with arbitrary precision, possibly with the aid of
suitable numerical techniques. After getting the solution $w_0$ claimed by
Theorem \ref{w0-solfb}, two situations may arise: Condition $\beta w_0<1$ is
either satisfied, hence, in view of the reasonings above, one can take $w_0$ as
a good approximation of the solution to the free boundary problem, or it is
violated, in which case one should reject $w_0$ and conclude that, for the
specific model parameters at hand, no information on the existence of a
solution satisfying the free boundary condition can be obtained from the
present theory.

We show an example of application of this procedure in the next Sect.
\ref{notex}.

\subsection{A noticeable example}
\label{notex}
In order to illustrate how the previous theory applies to a specific model, we
consider for $\mu\geq 1$  the following class of intercellular stress functions
$\Sigma$ (cf. Eq. \eqref{sigma-pormed}):
\begin{equation}
    \Sigma(\phi)=
        \begin{cases}
            \dfrac{1}{\phi}\log{\dfrac{\phi}{\phi_0}} & \text{if\ } \mu=1 \\
            \\[-0.3cm]
            \dfrac{\mu}{\mu-1}\cdot\dfrac{\phi^{\mu-1}-\phi_0^{\mu-1}}{\phi}
                & \text{if\ } \mu>1, \\
        \end{cases}
        \label{sigma-pormed-2}
\end{equation}
whose behavior for several $\mu$ is illustrated in Fig. \ref{fig-sigma}. Notice
that for $\mu>2$ the function $\Sigma$ grows unboundedly with $\phi$, and is
thus compatible with the expected behavior of an intercellular pressure-like
stress, while for $1\leq\mu\leq 2$ it is bounded from above, and if $\mu\ne 2$
it even tends to zero for $\phi\to +\infty$. From the modeling point of view,
the meaningful cases correspond to $\mu>2$. However, the theory developed in
the previous sections also covers the range of values $1\leq\mu\leq 2$, which
may be of some theoretical interest due to the particular form of the equations
they originate (for instance, for $\mu=1$ the differential part of the equation
for $\phi$ is linear).

According to Eq. \eqref{F-Sigma}, the corresponding generalized stress function
$F$ is
$$ F(\phi)=\phi^{\mu}, $$
which gives rise to the stationary porous medium equation for the cell volume
ratio $\phi$. It is straightforward to check that such an $F$ complies with
Assumption \ref{assF} for all $\mu\geq 1$. Moreover, we choose for $g$
and $\Gamma$ the expressions given by Eqs. \eqref{g}, \eqref{Gamma}:
$$ g(\phi)=\phi(1-\phi), \qquad \Gamma(c)=\gamma(c-c_0), $$
where $0<c_0<1$ and $\gamma>0$ are constant, whence also Assumptions
\ref{assg}, \ref{assGamma}, \ref{assGamma-2} are satisfied.

\begin{figure}[t]
    \begin{center}
        \includegraphics[height=7cm]{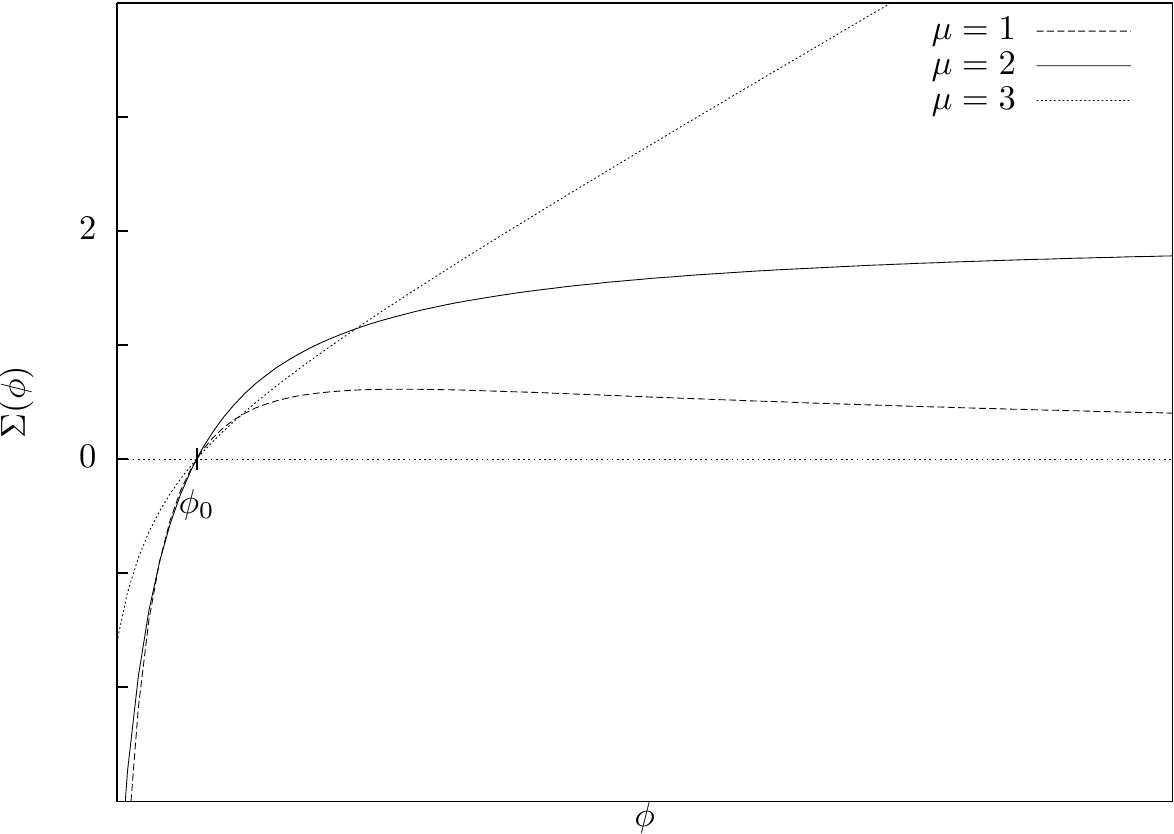}
    \end{center}
    \caption{Cell stress function $\Sigma(\phi)$ of Eq. \eqref{sigma-pormed-2}
        for $\mu=1,\,2,\,3$.}
    \label{fig-sigma}
\end{figure}

The inverse function $f$ of $F$ on $[0,\,1]$ is
$$ f(\phi)=\sqrt[\mu]{\phi}. $$
Notice that $f$ is actually Lipschitz on every compact subset of $(0,\,1]$ for
all $\mu\geq 1$, but for $\mu>1$ it is not Lipschitz on the whole interval
$[0,\,1]$ due to the vertical tangent at $\phi=0$. However, we know that the
interesting parameter is its Lipschitz constant $\Lf$ on $[F(\ep),\,1]$,
which exists for any $\mu\geq 1$ and equals
$$ \Lf=\max_{\xi\in[F(\ep),\,1]}f'(\xi)
    =\max_{\xi\in[\ep^\mu,\,1]}\frac{1}{\mu\sqrt[\mu]{\xi^{\mu-1}}}
    =\frac{1}{\mu\ep^{\mu-1}}. $$
Moreover,
\begin{gather*}
    \gmax=\max_{\xi\in[0,\,1]}\vert g(\xi)\vert=\frac{1}{4}, \qquad
        \Lg=\max_{\xi\in[0,\,1]}\vert g'(\xi)\vert=1, \\
        \Gmax=\max_{\xi\in[0,\,1]}\vert\Gamma(\xi)\vert=
        \gamma\max{(c_0,\,1-c_0)},
\end{gather*}
whence we compute
$$ \beta_1(\ep)=\sqrt{\frac{\gamma\max{(c_0,\,1-c_0)}}
    {2\pi\left(\phi_0^\mu-\ep^\mu\right)}}, \qquad
    \beta_2(\ep)=\frac{2}{\pi}\sqrt{\frac{\gamma\max{(c_0,\,1-c_0)}}
    {\mu\ep^{\mu-1}}}. $$
For all $w>0$ such that $\beta w<1$, that is
\begin{equation}
    w<\min{\left(\frac{1}{\beta_1},\,\frac{1}{\beta_2}\right)},
    \label{bound-w}
\end{equation}
Theorem \ref{solution} guarantees the existence of a solution
$(\phi,\,c)\in\V\times\U$ to the problem
\begin{equation}
    \left\{
    \begin{array}{ll}
        -\partial_x^2\phi^\mu =
            \gamma w^2\phi(1-\phi)(c-c_0) & \text{in\ } \domain \\
        \\[-0.3cm]
        -\partial_x^2c+\alpha w^2\phi c=0 & \text{in\ } \domain \\
        \\[-0.3cm]
        \partial_x\phi=0,\ c=1 & x=0 \\
        \\[-0.3cm]
        \phi=\phi_0,\ \partial_xc=0 & x=1.
    \end{array}
    \right.
    \label{bvp-porousmedium}
\end{equation}

\begin{figure}[t]
    \begin{center}
        \includegraphics[width=5.9cm]{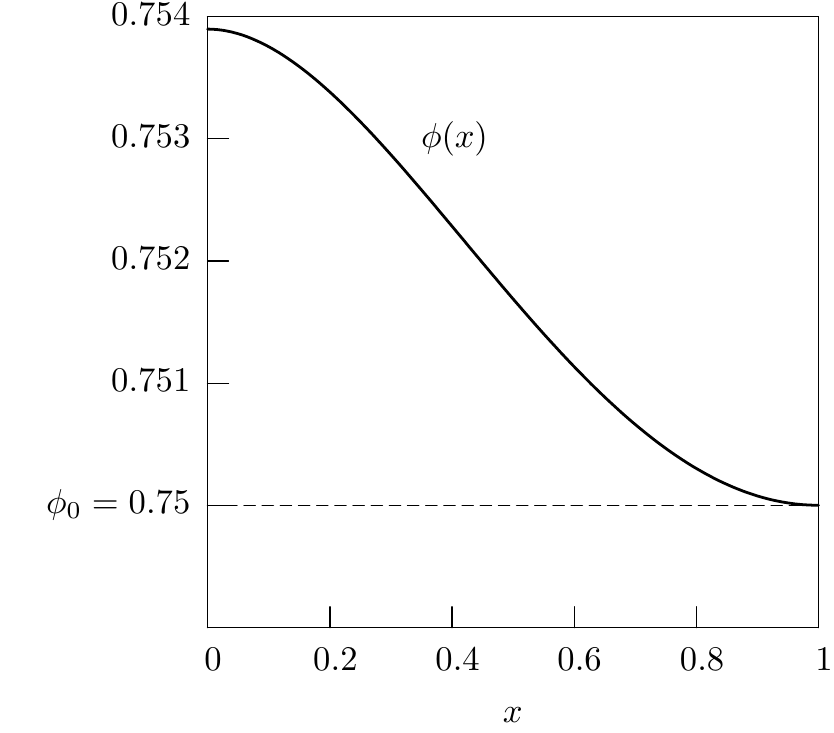}
        \qquad
        \includegraphics[width=5.9cm]{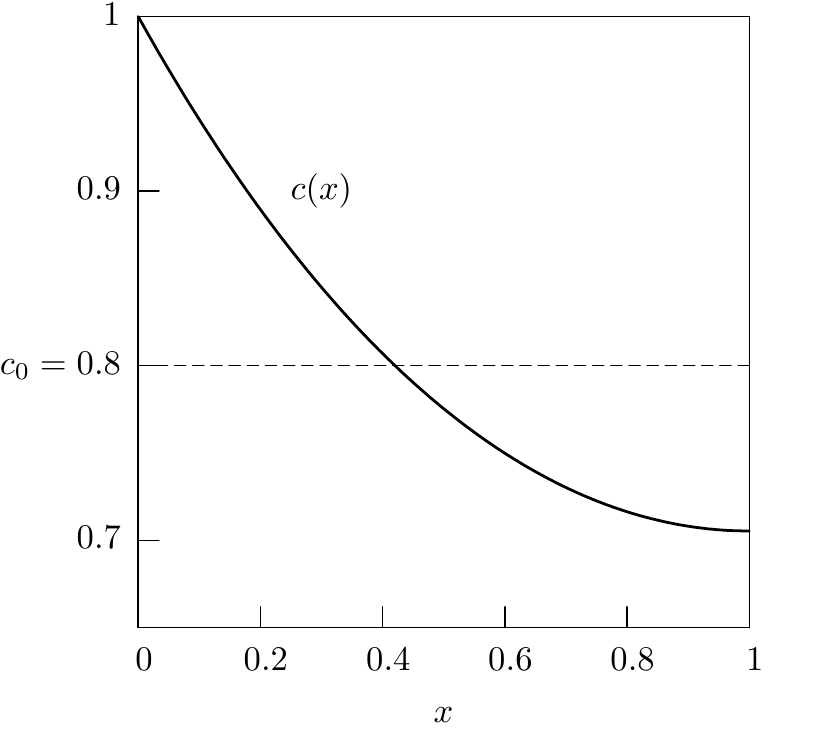}
    \end{center}
    \caption{Numerical solution (FEM) to Problem \eqref{bvp-porousmedium}
    fulfilling the free boundary condition \eqref{bc-freebound}. Cell density
    $\phi$ on the left, nutrient concentration $c$ on the right.}
    \label{fig-numsim}
\end{figure}

Let us now look for a solution to the free boundary problem \eqref{bc-freebound}
via the small parameter approximation discussed in Subsect. \ref{solfreebound}.
The specific form of the function $\Gamma$ we are using allows for an easy
estimate of the solutions to Eq. \eqref{rootsGamma}. Inserting the expression of
$\czero_{w}$ given by Eq. \eqref{czero} into the latter gives
\begin{equation}
    \tanh{\left(w\sqrt{\alpha\phi_0}\right)}=w\sqrt{\alpha\phi_0}c_0,
    \label{eq-w}
\end{equation}
thus, invoking McLaurin expansion of the hyperbolic tangent and considering
that $\tanh(x)\leq 1$ for all $x\in\R$,
\begin{equation}
    \sqrt{\frac{3(1-c_0)}{\alpha\phi_0}}\leq w_0
        \leq\frac{1}{c_0\sqrt{\alpha\phi_0}}.
    \label{estimate-w}
\end{equation}

For the sake of definiteness, let us fix the same parameters already used in Eq.
\eqref{nondim_param-numbers}, then let us determine the optimal $\ep$ by
solving the min-max problem
$$ \min_{\ep\in(0,\,\phi_0)}\max_{i=1,\,2}\beta_i(\ep). $$
This gives $\ep\simeq 0.5$ and correspondingly $\beta_1\simeq\beta_2\simeq
0.55$, whence it can be easily computed that the estimate \eqref{estimate-w}
agrees with the previous bound \eqref{bound-w}. Solving Eq. \eqref{eq-w} via an
iterative numerical procedure yields $w_0\simeq 1.45$. Since $\beta w_0\simeq
0.80$, the small parameter assumption and the related perturbative approach are
consistent with the problem at hand, as it is definitely confirmed by the value
of $w$ returned by the Finite Element solution (see Fig. \ref{fig-density_2D},
\ref{fig-numsim}) of the exact problem, i.e., $w\simeq 1.44$. The behaviors of
the relative errors on $\phi$ and $c$ (see Figure \ref{fig-relerr}):
\begin{equation*}
    E[\phi](x):=1-\frac{\phi_0+\nu\phione(x)}{\phi(x)}, \qquad
    E[c](x):=1-\frac{\czero_w(x)}{c(x)},
\end{equation*}
where $\phione$ is analytically computed for this particular case from Eq.
\eqref{bvp-phione} using $C=\Gmax^{-1}$, and where we have set $\nu={(\beta_2
w)}^2$, further prove the good quality of the small parameter approximation with
respect to the exact solution.

\begin{figure}[t]
    \begin{center}
        \includegraphics[width=5.9cm]{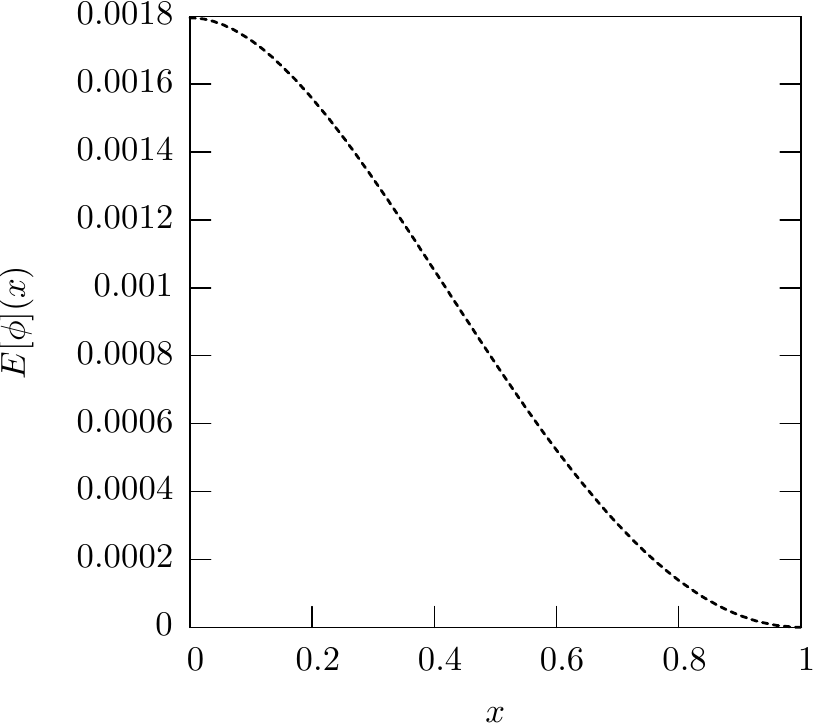}
        \qquad
        \includegraphics[width=5.9cm]{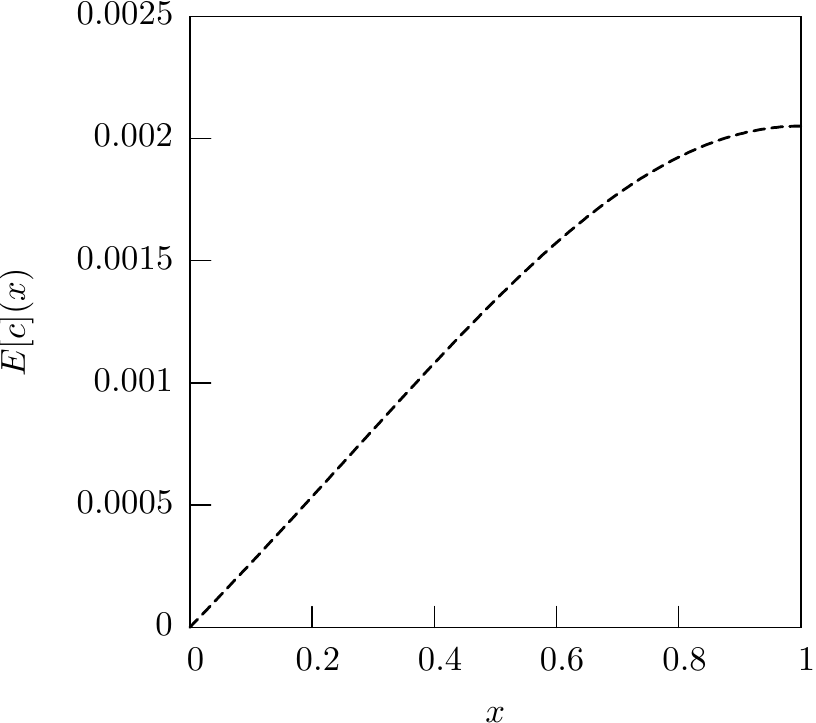}
    \end{center}
    \caption{Relative errors on $\phi$ (left) and $c$ (right) as they result
    from comparison between the approximate solution, analytically computed via
    Eqs. \eqref{czero}, \eqref{bvp-phione}, and the numerical solution (FEM) to
    the exact equations (Problem \eqref{bvp-porousmedium}). Notice that in
    both cases one has $E=O(10^{-3})$.}
    \label{fig-relerr}
\end{figure}

\section{Technical proofs}
\label{sect-proofs}
\setcounter{theorem}{-1}

Here we collect, without specific comments, the proofs of all results stated in
the previous Sect. \ref{sect-statprob}. Before going into technical details,
however, it is convenient to fix some basic facts and notations that we will
extensively use in the sequel.

We denote by $H^1_{0,0}(\domain)$ ($H^1_{0,1}(\domain)$, respectively) the
closed linear subspace of the Sobolev space $H^1(\domain)=W^{1,2}(\domain)$
consisting of all functions $u\in H^1(\domain)$ whose trace vanishes at $x=0$
($x=1$, respectively). We recall that Poincar\'e inequality holds true for
functions $u\in H^1_{0,0}(\domain)$, or $u\in H^1_{0,1}(\domain)$:
$$ \|u\|_{\L2}\leq C_P\|\partial_xu\|_{\L2}, $$
where the Poincar\'e constant $C_P$ equals $\frac{2}{\pi}$. As a consequence,
the quantity
$$ \|u\|_{\H}:=\|\partial_xu\|_{\L2} $$
defines a norm in both $H^1_{0,0}(\domain)$ and $H^1_{0,1}(\domain)$
equivalent to the usual $H^1$-norm. Notice that in referring to such a norm
we use generically the short subscript $\H$ for either of the spaces, the
meaning being recoverable each time from the context.

When dealing with continuous functions $u\in\C$, we indicate by $\|u\|_\infty$
their $\infty$-norm over $\bar\domain$:
$$ \|u\|_\infty=\max_{x\in\bar\domain}\vert u(x)\vert. $$

Finally, we use the symbols $u^+,\,u^-$ for the positive and negative
part of a function $u$:
$$ u^+=\max{(u,\,0)}, \qquad u^-=\max{(-u,\,0)}. $$
In view a theorem due to G. Stampacchia, it is known that $u\in H^1(\domain)$
implies $u^+,\,u^-\in H^1(\domain)$ with moreover
$$  \partial_xu^+=
    \begin{cases}
        \partial_xu & \text{if\ } u\geq 0 \\
        0 & \text{otherwise},
    \end{cases}
    \qquad
    \partial_xu^-=
    \begin{cases}
        -\partial_xu & \text{if\ } u\leq 0 \\
        0 & \text{otherwise}.
    \end{cases}
$$

Let us start by a mainly technical Lemma, which will be sometimes referenced in
the forthcoming proofs.
\begin{lemma}
\label{lemma-uhk}
Let $h,\,k\in\C$ with moreover $h(x)\geq 0$ for all $x\in\bar\domain$. If
$u\in\C$ solves the equation
$$ -\partial_x^2u+h(x)u=k(x) \quad \text{in\ } \domain $$
with either of the following sets of boundary conditions:
\begin{enumerate}
\item [(i)] $u=0$ for $x=0$; $\partial_xu=0$ for $x=1$, or;
\item [(ii)] $\partial_xu=0$ for $x=0$; $u=0$ for $x=1$,
\end{enumerate}
then
$$ \vert u(x)\vert\leq\|k\|_\infty, \quad \forall\,x\in\bar\domain. $$
\end{lemma}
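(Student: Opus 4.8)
The plan is to establish the two-sided bound via a maximum-principle argument adapted to the mixed boundary conditions. Write $M=\|k\|_\infty$, so that $-M\le k(x)\le M$ on $\bar\domain$, and consider the auxiliary functions $v=u-M$ and $\tilde v=u+M$. First I would show $u(x)\le M$ everywhere by testing the equation against $v^+=(u-M)^+\in H^1(\domain)$: since $-\partial_x^2 u+h u=k$ and $k\le M\le M+h(x)M$ (using $h\ge 0$ and, if necessary, $M\ge 0$; note $M=\|k\|_\infty\ge 0$ automatically), one finds $-\partial_x^2 v+h v=k-hM\le 0$ wherever $v>0$. Multiplying by $v^+$ and integrating by parts over $\domain$, the boundary terms vanish: in case (i) the term at $x=0$ dies because $v^+(0)=(u(0)-M)^+=0$ (as $u(0)=0\le M$), and the term at $x=1$ dies because $\partial_x u(1)=0$; in case (ii) it is the reverse. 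This yields
\[
    \int_\domain |\partial_x v^+|^2\,dx + \int_\domain h\,(v^+)^2\,dx \le 0,
\]
and since both integrands are nonnegative we get $\partial_x v^+\equiv 0$, so $v^+$ is constant; the vanishing trace at the appropriate endpoint forces that constant to be $0$, hence $u\le M$ on $\bar\domain$.

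The lower bound $u(x)\ge -M$ follows by the symmetric argument applied to $\tilde v=u+M$, testing against $\tilde v^-=(u+M)^-=-\min(u+M,0)$. Here $-\partial_x^2\tilde v+h\tilde v=k+hM\ge -M+hM\ge 0$, so wherever $\tilde v<0$ the function $-\tilde v^-$ is a subsolution; integrating by parts with the same cancellation of boundary terms gives $\int_\domain |\partial_x \tilde v^-|^2 + \int_\domain h\,(\tilde v^-)^2 \le 0$, whence $\tilde v^-$ is a constant that must vanish at the Dirichlet endpoint, so $u\ge -M$. Combining the two bounds gives $|u(x)|\le\|k\|_\infty$ on $\bar\domain$ as claimed.

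A couple of regularity remarks make the integration by parts legitimate. Since $u\in\C$ solves the ODE classically with $h,k\in\C$, we have $\partial_x^2 u=hu-k\in\C$, so $u\in C^2(\bar\domain)\subset H^1(\domain)$, and by Stampacchia's theorem (recalled above) $v^+,\tilde v^-\in H^1(\domain)$ with the stated weak derivatives; the one-dimensional setting means these are genuinely continuous, so pointwise evaluation of traces is unproblematic. The integration by parts $\int_\domain \partial_x^2 v\,v^+\,dx = [\partial_x v\,v^+]_0^1 - \int_\domain |\partial_x v^+|^2\,dx$ is then standard once one observes $\partial_x v\cdot\partial_x v^+=|\partial_x v^+|^2$ a.e.

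The main obstacle, such as it is, lies entirely in the bookkeeping of the boundary terms: one must check in each of the two cases (i) and (ii) that the product $\partial_x v(x)\,v^+(x)$ (and likewise $\partial_x\tilde v\,\tilde v^-$) vanishes at both $x=0$ and $x=1$, using the Neumann condition at one endpoint and the sign of the trace (forced by the $0$ Dirichlet data there) at the other. Everything else is the textbook weak maximum principle, and no part of it is genuinely delicate; the degeneracy issues that afflict the $\phi$-equation elsewhere in the paper do not enter here since this is a uniformly elliptic linear problem.
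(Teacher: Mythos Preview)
Your maximum-principle approach has a genuine gap. The key claim that $-\partial_x^2 v+hv=k-hM\le 0$ (with $v=u-M$, $M=\|k\|_\infty$) is false in general: the right-hand side $k(x)-h(x)M$ does not depend on the sign of $v$, and it can certainly be positive where $h$ is small. Concretely, take $h\equiv 0$ and $k\equiv 1$; then $M=1$ and $k-hM=1>0$ everywhere. Your chain $k\le M\le M+hM$ only yields $k-hM\le M$, not $k-hM\le 0$. In other words, the constant $M$ is \emph{not} a supersolution of the operator $-\partial_x^2+h\cdot$: applied to $M$ it gives $hM$, which need not dominate $k$. The lower-bound step has the symmetric flaw: from $k\ge -M$ and $hM\ge 0$ you get only $k+hM\ge -M$, not $k+hM\ge 0$; your written inequality $-M+hM\ge 0$ would require $h\ge 1$.

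The paper takes a different route that sidesteps this: it tests the equation against $u$ itself to obtain the energy estimate $\|\partial_x u\|_{\L2}^2+\int_\domain hu^2=\int_\domain ku$, whence $\|\partial_x u\|_{\L2}\le\|k\|_{\L2}\le\|k\|_\infty$, and then invokes the one-dimensional Morrey inequality $\|u\|_\infty\le\|\partial_x u\|_{\L2}$ (valid because $u$ vanishes at one endpoint). If you prefer to rescue a comparison argument, you need a genuine supersolution as barrier: for instance the function $\bar u$ solving $-\bar u''=M$ with the same boundary data satisfies $-\bar u''+h\bar u=M+h\bar u\ge M\ge k$ (since $\bar u\ge 0$), and one checks $\bar u\le M/2$ explicitly; testing $u-\bar u$ against $(u-\bar u)^+$ then works exactly as you intended. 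But that is a different proof from the one you wrote.
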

\begin{proof}
Using standard techniques for linear elliptic problems, it can be shown that the
proposed equation admits a unique solution $u\in H^1_{0,0}(\domain)$ in case
(i), and a unique solution $u\in H^1_{0,1}(\domain)$ in case (ii). In both
cases, $u\in\C$ due to Sobolev embedding theorems for $I\subseteq\R$. Moreover,
Lax-Milgram Theorem entails the following \emph{a priori} estimate:
$$ \|u\|_{\H}\leq\|k\|_{\L2}, $$
which, considering that
$$ \|k\|_{\L2}^2=\intgrl_0^1{\vert k(x)\vert}^2\,dx\leq
    \intgrl_0^1{\left(\max_{x\in\bar\domain}\vert k(x)\vert\right)}^2\,dx=
    \|k\|_\infty^2 $$
and, from Morrey inequality, that $\|u\|_\infty\leq\|u\|_{\H}$, yields
$$ \|u\|_\infty\leq\|k\|_\infty $$
and thus the thesis.
\end{proof}

\begin{proposition}
To every $\varphi\in\V$ there exists a unique solution $c\in\U$ of Problem
\eqref{bvp-c}, which is further twice continuously differentiable and
monotonically decreasing on $\bar\domain$, with a maximum value equal to $1$
attained for $x=0$.
\end{proposition}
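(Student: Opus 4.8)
The plan is to treat Problem \eqref{bvp-c} as a standard linear elliptic boundary value problem with mixed Dirichlet/Neumann data, and then extract the qualitative properties (sign, monotonicity, location of the maximum) from the maximum principle. First I would rewrite the problem in the variable $u=1-c$, so that $u$ satisfies $-\partial_x^2 u+\alpha w^2\varphi\,u=-\alpha w^2\varphi$ in $\domain$, with $u=0$ at $x=0$ and $\partial_x u=0$ at $x=1$; this places the problem exactly in the setting of case (i) of Lemma \ref{lemma-uhk}, with $h(x)=\alpha w^2\varphi(x)\geq\alpha w^2\ep>0$ (since $\varphi\in\V$) and $k(x)=-\alpha w^2\varphi(x)$. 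Existence and uniqueness of a weak solution $u\in H^1_{0,0}(\domain)$ then follow from the Lax--Milgram theorem applied to the bilinear form $a(u,v)=\int_\domain(\partial_x u\,\partial_x v+h\,u v)\,dx$, which is coercive on $H^1_{0,0}(\domain)$ thanks to the Poincar\'e inequality (or simply because $h\geq 0$ and $\|\partial_x\cdot\|_{\L2}$ is already a norm there). Undoing the substitution gives the unique $c=1-u\in H^1(\domain)$.

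Next I would upgrade the regularity. Since $\varphi\in\C$ and $c\in\C$ (by Sobolev embedding in one dimension, exactly as invoked in the proof of Lemma \ref{lemma-uhk}), the right-hand side of $-\partial_x^2 c=-\alpha w^2\varphi c$ is continuous; hence $\partial_x^2 c\in\C$, i.e. $c\in C^2(\bar\domain)$, and $c$ is a classical solution. The bound $0\leq c\leq 1$: the inequality $c\leq 1$ follows from $u=1-c\geq 0$, which I would get from the weak maximum principle — test the equation for $u$ with $u^-\in H^1_{0,0}(\domain)$ (using the Stampacchia calculus recalled above) to obtain $\|\partial_x u^-\|_{\L2}^2+\int h\,(u^-)^2\,dx=-\int k\,u^-\,dx=\int\alpha w^2\varphi\,u^-\,dx\geq 0$; but the left-hand side would also force, since $k\leq 0$, the reverse sign, so $u^-\equiv 0$. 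For $c\geq 0$: having $u\leq 1$, i.e. $c\geq 0$, follows symmetrically by testing $(u-1)^+$, or one may simply apply Lemma \ref{lemma-uhk} directly to $u$ to get $\|u\|_\infty\leq\|k\|_\infty=\alpha w^2\|\varphi\|_\infty\leq\alpha w^2$ — though that alone does not give $c\in[0,1]$ for all parameters, so the maximum-principle argument is the clean route. Thus $c\in\U$.

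Finally, for monotonicity and the location of the maximum I would argue on $\partial_x c$. Differentiating the equation, $v:=\partial_x c$ solves $-\partial_x^2 v+\alpha w^2\varphi\,v=-\alpha w^2(\partial_x\varphi)c$; this is less convenient since $\partial_x\varphi$ need not have a sign, so instead I would integrate the equation directly: for any $x\in\domain$, $\partial_x c(x)=\partial_x c(x)-\partial_x c(1)=\int_x^1\partial_x^2 c(\xi)\,d\xi=\int_x^1\alpha w^2\varphi(\xi)c(\xi)\,d\xi\geq 0$ only if $c\geq 0$ there; wait — this gives $\partial_x c(x)=-\int_x^1\alpha w^2\varphi c\,d\xi\leq 0$ using $\partial_x c(1)=0$ and $c\geq 0$, $\varphi\geq \ep>0$. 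Hence $\partial_x c\leq 0$ on $\bar\domain$, so $c$ is nonincreasing, and its maximum over $\bar\domain$ is attained at $x=0$, where $c=1$ by the boundary condition. (Strict monotonicity away from the trivial case: if $c(x_0)=c(x_1)$ for some $x_0<x_1$ then $\partial_x c\equiv 0$ on $(x_0,x_1)$, forcing $\varphi c\equiv 0$ there, impossible since $\varphi\geq\ep>0$ and $c\geq c(1)\geq 0$ — and in fact $c>0$ everywhere by the strong maximum principle applied to $c\geq 0$ with $c(0)=1$.) The main obstacle is essentially bookkeeping: one must be careful that the sign arguments for $c\geq 0$ and $c\leq 1$ only use $\varphi\geq\ep>0$ and not any smallness of $w$, and that the chain "weak solution $\Rightarrow$ continuous $\Rightarrow$ classical $\Rightarrow$ monotone" is assembled in the right order; none of the individual steps is deep.
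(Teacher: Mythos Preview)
Your plan follows essentially the same route as the paper: substitute to a homogeneous problem, apply Lax--Milgram in $H^1_{0,0}(\domain)$, use Sobolev embedding and then bootstrap to $C^2(\bar\domain)$, obtain the sign of $c$ by testing with truncations, and derive monotonicity by integrating the equation from $x$ to $1$ using $\partial_x c(1)=0$ and $c\geq 0$. The order of the last two items is where you and the paper differ slightly, and that is also where your one genuine slip lies.

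The step that fails is your argument for $c\leq 1$ (i.e.\ $u=1-c\geq 0$) via testing with $u^-$. Carrying it out gives
\[
\|\partial_x u^-\|_{\L2}^2+\int_\domain h\,(u^-)^2\,dx \;=\; -\int_\domain k\,u^-\,dx
\;=\;\int_\domain \alpha w^2\varphi\,u^-\,dx \;\geq\; 0,
\]
and that is all: both sides are nonnegative and there is no ``reverse sign'' to exploit, since $k=-\alpha w^2\varphi\leq 0$. Testing with $u^-$ yields $u\geq 0$ only when the source term is nonnegative; here it is nonpositive, so the argument is inconclusive. By contrast, your test with $(u-1)^+$ for $c\geq 0$ \emph{does} work (it is exactly the paper's test of the homogeneous equation for $c$ with $c^-$, rewritten). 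The good news is that your flawed step is redundant: once you have $c\geq 0$, your own integration argument gives $\partial_x c(x)=-\alpha w^2\int_x^1 \varphi c\,d\xi\leq 0$, hence $c(x)\leq c(0)=1$ --- which is precisely how the paper obtains $c\leq 1$. So the fix is simply to drop the $u^-$ test and let monotonicity do that job; with this correction your proof is complete and coincides with the paper's.
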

\begin{proof}
\begin{enumerate}
\item By means of the substitution $\bar{c}=c-1$, Problem \eqref{bvp-c} is
converted into the following linear elliptic problem with homogeneous boundary
conditions:
\begin{equation*}
    \left\{
    \begin{array}{rcll}
        -\partial_x^2\bar{c}+\alpha w^2\varphi\bar{c} & = &
            -\alpha w^2\varphi & \text{in\ }\domain \\
        \\[-0.3cm]
        \bar{c} & = & 0 & x=0 \\
        \\[-0.3cm]
        \partial_x\bar{c} & = & 0 & x=1.
    \end{array}
    \right.
\end{equation*}
By putting it in weak form as
\begin{equation}
    \begin{cases}
        \text{find\ } \bar{c}\in H^1_{0,0}(I)\ \text{such that} \\
        \\[-0.3cm]
        a(\bar{c},\,v)=b(v),\quad \forall\,v\in H^1_{0,0}(I),
    \end{cases}
    \label{bvp-weak-sigma}
\end{equation}
where the bilinear form $a:H^1_{0,0}(\domain)\times H^1_{0,0}(\domain)\to\R$ and
the linear form $b:H^1_{0,0}(\domain)\to\R$ are defined by
$$ a(u,\,v)=\intgrl_0^1\partial_xu\,\partial_xv\,dx+\alpha w^2\intgrl_0^1
    \varphi uv\,dx, \qquad
    b(v)=-\alpha w^2\intgrl_0^1\varphi v\,dx $$
respectively, classical theory for linear elliptic PDEs can be applied up to
checking that $a$ is continuous and coercive on
$H^1_{0,0}(\domain)\times H^1_{0,0}(\domain)$ and that $b$ is continuous on
$H^1_{0,0}(\domain)$. Owing to Lax-Milgram Theorem, we get existence and
uniqueness of a solution $\bar{c}\in H^1_{0,0}(\domain)$ to Problem
\eqref{bvp-weak-sigma}, and by consequence of a solution $c\in H^1(\domain)$ to
Problem \eqref{bvp-c}. Moreover, for $\domain\subseteq\R$ Sobolev embedding
theorems imply $H^1(\domain)\subset\C$, thus $c$ is in fact continuous
on $\bar{\domain}$.

\item We check now that $c\geq 0$ on $\bar\domain$. For this, observe first
that
$c^-\in H^1_{0,0}(\domain)$, then multiply Eq. \eqref{bvp-c} by $c^-$ and
integrate by parts over $\domain$:
$$ -\|c^-\|_{\H}^2-\alpha w^2\intgrl_0^1\varphi{(c^-)}^2\,dx=0 $$
whence
$$ \|c^-\|_{\H}^2=-\alpha w^2\intgrl_0^1\varphi{(c^-)}^2\,dx\leq 0. $$
We deduce $\|c^-\|_{\H}=0$, which yields $c^-=0$ almost everywhere in $\domain$
and, by continuity, $c(x)\geq 0$ for all $x\in\bar\domain$.

\item Finally, we show that $c\leq 1$ on $\bar\domain$. Since $c\in
C(\bar\domain)$, the second order derivative $\partial_x^2 c$ exists as a
distribution. But from Eq. \eqref{bvp-c} we discover that it actually coincides
with a continuous function:
$$ \partial_x^2c=\alpha w^2\varphi c\in \C, $$
so that $c$ turns out to be a classical solution of Problem \eqref{bvp-c}. This
allows us to compute, for every $x\in\bar\domain$,
$$ (\partial_xc)(1)-(\partial_xc)(x)=\alpha w^2\intgrl_x^1\varphi c\,d\xi, $$
whence, using the boundary condition at $x=1$,
$$ (\partial_xc)(x)=-\alpha w^2\intgrl_x^1\varphi c\,d\xi\leq 0, \qquad
    \forall\,x\in\bar\domain. $$
Therefore we have that $c$ is nonincreasing on $\bar\domain$, and consequently
we deduce
$$ \max_{x\in\bar\domain}c(x)=c(0)=1, $$
which completes the proof. \qedhere
\end{enumerate}
\end{proof}

\begin{proposition}
Set
\begin{equation*}
    \beta_1=\beta_1(\ep):=
        \sqrt{\frac{C_P\gmax\Gmax}{F(\phi_0)-F(\ep)}}, \qquad
    \beta_2=\beta_2(\ep):=C_P\sqrt{\Lg\Gmax\Lf}
\end{equation*}
and define
\begin{equation*}
    \beta=\beta(\ep):=\max{\left(\beta_1(\ep),\,
        \beta_2(\ep)\right)}.
\end{equation*}
If $\beta w<1$, then for every $\sigma\in\U$ Problem \eqref{bvp-phi} admits a
unique solution $\phi\in\V$.
\end{proposition}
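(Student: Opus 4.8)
The plan is to reduce Problem \eqref{bvp-phi} to a semilinear elliptic problem whose principal part is the ordinary Laplacian, and then to solve it by a fixed point argument in $\V$, using Schauder's theorem for existence and a direct energy estimate for uniqueness; the constant $\beta_1(\ep)$ will control the invariance of the set $\V$ under the relevant operator, while $\beta_2(\ep)$ will control uniqueness, whence the condition $\beta w<1$ with $\beta$ as in \eqref{beta}. Throughout, $\sigma\in\U$ is frozen, so $\Gamma(\sigma)\in\C$ is a given datum with $\|\Gamma(\sigma)\|_\infty\leq\Gmax$. The first step is the change of unknown $u:=F(\phi)$: by Assumption \ref{assF}, $F$ is a $C^1$ increasing bijection of $[0,1]$ onto itself with inverse $f$, so Problem \eqref{bvp-phi} becomes
\[
 -\partial_x^2u=w^2g(f(u))\Gamma(\sigma)\ \text{in}\ \domain,\qquad
 \partial_xu=0\ \text{at}\ x=0,\qquad u=F(\phi_0)\ \text{at}\ x=1 .
\]
Since $F'>0$ on $[\ep,1]$, the Neumann condition at $x=0$ is equivalent to $\partial_x\phi=0$ there (cf. Remark \ref{remark-bc}), and $\phi\in\V$ if and only if $F(\ep)\leq u\leq 1$ on $\bar\domain$; it thus suffices to produce a unique such $u$.

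Next I would introduce the fixed point operator $B:\V\to\C$ defined by $B(\varphi):=f(u_\varphi)$, where $u_\varphi$ is the unique weak solution of the \emph{linear} problem $-\partial_x^2u_\varphi=w^2g(\varphi)\Gamma(\sigma)$ with the same boundary data, existing in $F(\phi_0)+H^1_{0,1}(\domain)$ by Lax--Milgram, exactly as in the proof of Proposition \ref{wellpos-c}. A fixed point of $B$ in $\V$ yields, via $\phi=f(u_\phi)$, a solution of Problem \eqref{bvp-phi}. Writing $u_\varphi=F(\phi_0)+\psi_\varphi$ with $\psi_\varphi\in H^1_{0,1}(\domain)$, testing the equation against $\psi_\varphi$, and combining Poincar\'e's inequality with the Sobolev embedding $\H\hookrightarrow\C$ gives the a priori bound $\|\psi_\varphi\|_\infty\leq C_P\|w^2g(\varphi)\Gamma(\sigma)\|_\infty\leq C_Pw^2\gmax\Gmax$. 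Hence, if $\beta_1w<1$ — that is, $C_Pw^2\gmax\Gmax<F(\phi_0)-F(\ep)$ — the function $u_\varphi$ cannot leave $[F(\ep),1]$ (the lower bound is immediate; the upper one uses in addition $F(\phi_0)<1$), so $B(\varphi)\in\V$ and $\V$ is invariant under $B$.

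For existence, I would observe that, integrating the linearized equation twice and using $\partial_xu_\varphi(0)=0$, the solutions $u_\varphi$ are uniformly bounded in $C^{1}(\bar\domain)$ (indeed with uniformly Lipschitz derivative), the bounds being independent of $\varphi$; since $f$ is Lipschitz on $[F(\ep),1]$ with constant $\Lf$, the family $\{B(\varphi):\varphi\in\V\}$ is equibounded and equicontinuous, hence relatively compact in $\C$ by Ascoli--Arzel\`a. The a priori estimate also gives $\|u_{\varphi_1}-u_{\varphi_2}\|_\infty\leq C_Pw^2\Gmax\Lg\|\varphi_1-\varphi_2\|_\infty$, whence $B$ is (Lipschitz) continuous. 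As $\V$ is a bounded, closed, convex subset of $\C$, Schauder's fixed point theorem provides a fixed point $\phi\in\V$, i.e. a solution of Problem \eqref{bvp-phi}. For uniqueness, if $\phi_1,\phi_2\in\V$ both solve it, then $u_i:=F(\phi_i)$ satisfy $-\partial_x^2(u_1-u_2)=w^2(g(\phi_1)-g(\phi_2))\Gamma(\sigma)$ with $u_1-u_2\in H^1_{0,1}(\domain)$; testing against $u_1-u_2$, using the Lipschitz bounds on $g$ and on $f$ (the latter on $[F(\ep),1]$, which contains the ranges of $u_1,u_2$) and Poincar\'e's inequality yields
\[
 \|u_1-u_2\|_\H^2\leq w^2\Lg\Gmax\|\phi_1-\phi_2\|_{\L2}\|u_1-u_2\|_{\L2}
 \leq w^2\Lg\Gmax\Lf C_P^2\|u_1-u_2\|_\H^2=(\beta_2w)^2\|u_1-u_2\|_\H^2,
\]
and since $\beta_2w\leq\beta w<1$ this forces $u_1=u_2$, hence $\phi_1=\phi_2$.

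The step I expect to be the main obstacle is the invariance of $\V$ under $B$: one must control the linearized solution \emph{from both sides} so that it stays inside $[F(\ep),1]$, and this is exactly what dictates the sharp form of $\beta_1(\ep)$ and the appearance of the Poincar\'e constant $C_P$ there; by contrast, once $\V$ is invariant, the compactness feeding Schauder's theorem and the energy estimate behind uniqueness (which, being purely $L^2$-based on $H^1_{0,1}$, naturally produces the factor $C_P$ in $\beta_2$) are routine. One should also keep in mind that the argument is insensitive to the potential degeneracy of $F$ at $\phi=0$ precisely because all estimates live on $[\ep,1]$, where $F'$ is bounded below and $f$ is Lipschitz with constant $\Lf$ as in \eqref{Lf}.
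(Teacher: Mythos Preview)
Your uniqueness argument is correct and is essentially the same energy estimate the paper uses. The gap is in the existence step, precisely at the point you flag as ``the main obstacle'': the invariance of $\V$ under your operator $B$. From $\|\psi_\varphi\|_\infty\leq C_Pw^2\gmax\Gmax$ and $\beta_1w<1$ you get $u_\varphi\geq F(\phi_0)-(F(\phi_0)-F(\ep))=F(\ep)$, which is fine; but the upper bound you claim does \emph{not} follow. You obtain only $u_\varphi\leq F(\phi_0)+C_Pw^2\gmax\Gmax<2F(\phi_0)-F(\ep)$, and this can easily exceed $1$ (e.g.\ $F(\phi_0)=0.7$, $F(\ep)=0.1$). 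The remark ``the upper one uses in addition $F(\phi_0)<1$'' is not an argument: nothing in the hypothesis $\beta w<1$ forces $C_Pw^2\gmax\Gmax\leq 1-F(\phi_0)$. Consequently $f(u_\varphi)$ need not lie in $\V$ (indeed $f$ is not even defined for arguments above $1$), and the Schauder step collapses.

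The paper gets around this by a genuinely different mechanism: it does not linearize, but instead replaces $g\circ f$ by its positive part $\tilde g(u)=[g(f(u+u_0))]^+$ and minimizes the associated energy functional on $H^1_{0,1}(\domain)$. The crucial input is the sign structure of $g$ in Assumption~\ref{assg} ($g\geq0$ on $[0,1]$, $g<0$ outside), which makes $\tilde g$ vanish whenever $u\notin[-u_0,1-u_0]$; testing against $(u+u_0)^-$ and $(1-u_0-u)^-$ then forces any minimizer into that range automatically, i.e.\ $0\leq\phi\leq1$ for \emph{every} $w>0$. Only afterwards is the lower bound sharpened to $\phi\geq\ep$ via the estimate $\|F(\phi)-F(\phi_0)\|_\infty\leq C_Pw^2\gmax\Gmax$, which is where $\beta_1$ enters. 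Your linearized fixed-point route ignores the sign of $g$ entirely and therefore cannot produce the upper barrier; if you want to rescue it, you would need to truncate $g$ (or equivalently $f$) from above and argue, as the paper does, that the truncation is inactive on the solution.
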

\begin{proof}
\begin{enumerate}
\item First we rewrite Problem \eqref{bvp-phi} making the substitution
\begin{equation}
    u=F(\phi)-F(\phi_0),
    \label{subst}
\end{equation}
which in turn implies $\phi=f(u+u_0)$ for $u_0:=F(\phi_0)\in (0,\,1)$:
\begin{equation}
    \left\{
    \begin{array}{rcll}
        -\partial_x^2u & = & w^2g(f(u+u_0))\Gamma(\sigma)
        & \text{in\ } \domain \\
        \\[-0.3cm]
        \partial_xu & = & 0 & x=0 \\
        \\[-0.3cm]
        u & = & 0 & x=1.
    \end{array}
    \right.
    \label{bvp-phi-2}
\end{equation}
Notice that for $\phi\in\V$ the boundary conditions at $x=0$ of Problems
\eqref{bvp-phi} and \eqref{bvp-phi-2} agree, since
$\partial_xu=F'(\phi)\partial_x\phi$ with $F'(\phi)>0$.

\item Now we prove that for all $w>0$ Problem \eqref{bvp-phi} admits continuous
solutions $\phi$ ranging in $[0,\,1]$. In view of Eq. \eqref{subst} and of the
continuity and monotonicity properties of $F$, this amounts to showing that
Problem \eqref{bvp-phi-2} possesses solutions $u\in\C$ bounded between $-u_0$
and $1-u_0$.

\begin{enumerate}
\item Let us introduce the function
$$ \tilde g(u)={[g(f(u+u_0))]}^+ $$
and consider then the auxiliary problem
\begin{equation}
    \left\{
    \begin{array}{rcll}
        -\partial_x^2u & = & w^2\tilde g(u)\Gamma(\sigma)
        & \text{in\ } \domain \\
        \\[-0.3cm]
        \partial_xu & = & 0 & x=0 \\
        \\[-0.3cm]
        u & = & 0 & x=1.
    \end{array}
    \right.
    \label{bvp-phi-aux}
\end{equation}
The function $\tilde g(u)$ being continuous, if $P(u)$ denotes any of its
antiderivatives on $\R$ then Problem \eqref{bvp-phi-aux} can be viewed as the
Euler-Lagrange equation associated to the energy functional
$E:H^1_{0,1}(\domain)\to\R$,
$$ E(u)=\frac{1}{2}\intgrl_0^1{(\partial_xu)}^2\,dx-
    w^2\intgrl_0^1P(u)\Gamma(\sigma)\,dx. $$
Due to the assumptions on the sign of $g$ and the monotonicity of $f$, it
results $\tilde g(u)=0$ for $u\leq -u_0$ or $u\geq 1-u_0$, hence $P(u)$ is
bounded: There exists a constant $C>0$ such that $\vert P(u)\vert\leq C$ for all
$u\in\R$. This entails first that $E$ is finite on $H^1_{0,1}(\domain)$:
$$ E(u)\leq\frac{1}{2}\|u\|_{\H}^2+w^2C\Gmax<+\infty, $$
and moreover that it is coercive:
\begin{align*}
    E(u) &\geq \frac{1}{2}\|u\|_{\H}^2-w^2\intgrl_0^1\vert P(u)\vert\cdot
        \vert\Gamma(\sigma)\vert\,dx \\
    &\geq \frac{1}{2}\|u\|_{\H}^2-w^2C\Gmax \\
    &\geq \|u\|_{\H}-C',
\end{align*}
where $C'>0$ is a new suitable constant. Thus any minimizing sequence
is bounded in $H^1_{0,1}(\domain)$, and has therefore weakly convergent
subsequences. If we can show that $E$ is sequentially weakly lower
semicontinuous on $H^1_{0,1}(\domain)$, we will deduce the existence of a
minimizer $u\in H^1_{0,1}(\domain)$, that is a solution to the auxiliary problem
\eqref{bvp-phi-aux}. For this, notice that $E$ belongs to the class of
functionals of the form
$$ E(u)=\intgrl_0^1 L(\partial_xu,\,u,\,x)\,dx, $$
where $L:\R^2\times\domain\to\R$ is the function given by
$$ L(p,\,z,\,x)=\frac{p^2}{2}-w^2P(z)\Gamma(\sigma(x)). $$
Clearly, $L$ is convex in $p$ for each $z\in\R$, $x\in\domain$, and moreover
it is bounded from below as it results $L(p,\,z,\,x)\geq -w^2C\Gmax$ for all
$(p,\,z,\,x)\in\R^2\times\domain$. These two features are sufficient in order
for $E$ to be sequentially weakly lower semicontinuous on $H^1(\domain)$ (see
Evans \cite{MR1625845}, Chapter 8, p. 446 for further details), hence also on
$H^1_{0,1}(\domain)$ as desired.

We conclude that the auxiliary problem \eqref{bvp-phi-aux} has solutions $u\in
H^1_{0,1}(\domain)$ which, as a consequence of Sobolev embedding theorems for
$I\subseteq\R$, are actually continuous functions $u\in\C$.

\item We claim that any solution $u\in H^1_{0,1}(\domain)$ to Problem
\eqref{bvp-phi-aux} satisfies the inequalities $-u_0\leq u(x)\leq 1-u_0$ for all
$x\in\bar\domain$. To see this, let us multiply Eq. \eqref{bvp-phi-aux} by
${(u+u_0)}^-\in H^1_{0,1}(\domain)$ and integrate by parts over $\domain$:
$$ \intgrl_0^1\partial_xu\,\partial_x{(u+u_0)}^{-}\,dx=
    w^2\intgrl_0^1\tilde g(u){(u+u_0)}^{-}\Gamma(\sigma)\,dx. $$
Since $\tilde g(u)=0$ for $u\leq -u_0$ while ${(u+u_0)}^-=0$ for $u\geq -u_0$,
this relation implies
$$ \|{(u+u_0)}^-\|_{\H}^2=0, $$
hence, by continuity, $u(x)\geq -u_0$ for all $x\in\bar\domain$.

Analogously, multiplying Eq. \eqref{bvp-phi-aux} by ${(1-u_0-u)}^-\in
H^1_{0,1}(\domain)$ and integrating by parts over $\domain$ gives
$$ \intgrl_0^1\partial_xu\,\partial_x{(1-u_0-u)}^-\,dx=
    w^2\intgrl_0^1\tilde g(u){(1-u_0-u)}^-\Gamma(\sigma)\,dx. $$
But $\tilde g(u)=0$ for $u\geq 1-u_0$ while ${(1-u_0-u)}^-=0$ for $u\leq 1-u_0$,
which entails
$$ \|{(1-u_0-u)}^-\|_{\H}^2=0 $$
and finally $u(x)\leq 1-u_0$ for all $x\in\bar\domain$.

\item We observe now that for $u\in[-u_0,\,1-u_0]$ one has
$\tilde g(u)=g(f(u+u_0))$, thus any solution to the auxiliary problem
is also a solution to Problem \eqref{bvp-phi-2}. Therefore, the latter admits
solutions $u\in\C$ bounded between $-u_0$ and $1-u_0$, which, via Eq.
\eqref{subst}, originate solutions $\phi\in\C$ to Problem \eqref{bvp-phi} such
that $0\leq\phi\leq 1$ on $\bar\domain$.
\end{enumerate}

\item Finally, we show that it is possible to get solutions $\phi\in\V$ to
Problem \eqref{bvp-phi} provided $w$ is sufficiently small. For this, let
$\phi\in\C$, $0\leq\phi\leq 1$, be a solution to Problem \eqref{bvp-phi}. Since
$F(\phi_0)$ is constant, we can rewrite the equation for $\phi$ in the
equivalent form
$$ -\partial_x^2\left(F(\phi)-F(\phi_0)\right)=w^2g(\phi)\Gamma(\sigma). $$
Observe now that $F(\phi)-F(\phi_0)\in H^1_{0,1}(\domain)$, with moreover
$$ \partial_x\left(F(\phi)-F(\phi_0)\right)=F'(\phi)\partial_x\phi=0 $$
at $x=0$. Thus, multiplying both sides of the above equation by
$F(\phi)-F(\phi_0)$ and integrating by parts over $\domain$, we discover
$$ \|F(\phi)-F(\phi_0)\|_{\H}\leq w^2C_P\gmax\Gmax, $$
where we have used the fact that, for $\phi\in[0,\,1]$, the function $g$ is
bounded by $\gmax$. In view of Morrey inequality we further deduce
$$ \|F(\phi)-F(\phi_0)\|_\infty\leq w^2C_P\gmax\Gmax, $$
whence
$$ F(\phi_0)-w^2C_P\gmax\Gmax\leq F(\phi)\leq F(\phi_0)+w^2C_P\gmax\Gmax $$
on $\bar\domain$. Condition $\phi\geq\ep$ is equivalent to $F(\phi)\geq
F(\ep)$ due to the monotonicity of $F$. Imposing
$$ F(\ep)\leq F(\phi_0)-w^2C_P\gmax\Gmax, $$
we obtain $\phi\in\V$ provided $\beta_1 w\leq 1$, where
$$ \beta_1=\beta_1(\ep)=
    \sqrt{\frac{C_P\gmax\Gmax}{F(\phi_0)-F(\ep)}}. $$
Notice that $\beta_1$ is well defined, since $\ep<\phi_0$ implies
$F(\ep)<F(\phi_0)$.

\item Regarding uniqueness, suppose, under the hypothesis $\beta_1w\leq 1$, that
$\phi_1,\,\phi_2\in\V$ are two solutions to Problem \eqref{bvp-phi}. Subtracting
the respective equations we find
$$ -\partial_x^2(F(\phi_2)-F(\phi_1))=
    w^2[g(\phi_2)-g(\phi_1)]\Gamma(\sigma), $$
where $F(\phi_2)-F(\phi_1)\in H^1_{0,1}(\domain)$ is such that
$$ \partial_x(F(\phi_2)-F(\phi_1))=F'(\phi_2)\partial_x\phi_2-
F'(\phi_1)\partial_x\phi_1=0 $$ for $x=0$. Multiplying both sides by
$F(\phi_2)-F(\phi_1)$ and integrating by parts over $\domain$ we get
\begin{align*}
    \|F(\phi_2)-F(\phi_1)\|_{\H}^2 &=w^2\intgrl_0^1
    [g(\phi_2)-g(\phi_1)](F(\phi_2)-F(\phi_1))\Gamma(\sigma)\,dx \\
    &\leq w^2\Lg\Gmax \intgrl_0^1\vert\phi_2-\phi_1\vert\cdot
        \vert F(\phi_2)-F(\phi_1)\vert\, dx \\
    &= w^2\Lg\Gmax\intgrl_0^1\vert fF(\phi_2)-fF(\phi_1)\vert\cdot
        \vert F(\phi_2)-F(\phi_1)\vert\, dx \\
    &\leq w^2\Lg\Gmax\Lf C_P^2\|F(\phi_2)-F(\phi_1)\|_{\H}^2,
\end{align*}
where we have used the Lipschitz continuity of $g$ on $[0,\,1]$. Setting
$$ \beta_2=\beta_2(\ep)=C_P\sqrt{\Lg\Gmax\Lf}, $$
we deduce therefore
$$ \left(1-\beta_2^2w^2\right)\|F(\phi_2)-F(\phi_1)\|_{\H}^2\leq 0, $$
thus if $w$ satisfies the further constraint $\beta_2w<1$ we conclude
$$ \|F(\phi_2)-F(\phi_1)\|_{\H}=0, $$
that is $F(\phi_2)=F(\phi_1)$ for all $x\in\bar\domain$, and the uniqueness
follows from the bijectivity of $F$.

\item In view of the previous results and given the definition of $\beta$, we
conclude that if $\beta w<1$ then there exists a unique solution $\phi\in\V$ to
Problem \eqref{bvp-phi} as desired. \qedhere
\end{enumerate}
\end{proof}

\begin{proposition}
The operators $A_1:\V\to\U$, $A_2:\U\to\V$ are Lipschitz continuous, and so is
$A:\V\to\V$.
\end{proposition}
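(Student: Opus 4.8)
The plan is to prove the Lipschitz continuity of $A_1$ and $A_2$ separately, using in each case the same energy estimate already exploited for the uniqueness parts of Propositions \ref{wellpos-c} and \ref{wellpos-phi}, and then to obtain the statement for $A$ for free, as a composition of Lipschitz maps between subsets of $\C$ endowed with the $\infty$-norm.

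For $A_1$ I would pick $\varphi_1,\,\varphi_2\in\V$, set $c_i=A_1(\varphi_i)\in\U$, and observe that $c_1-c_2\in H^1_{0,0}(\domain)$ solves
$$ -\partial_x^2(c_1-c_2)+\alpha w^2\varphi_1(c_1-c_2)=-\alpha w^2(\varphi_1-\varphi_2)c_2 \quad\text{in\ }\domain. $$
Testing against $c_1-c_2$ and integrating by parts, the zeroth order term $\alpha w^2\intgrl_0^1\varphi_1(c_1-c_2)^2\,dx$ is nonnegative (because $\varphi_1\geq\ep>0$) and can be discarded, so that $\|c_1-c_2\|_{\H}^2\leq\alpha w^2\|c_2\|_\infty\intgrl_0^1|\varphi_1-\varphi_2||c_1-c_2|\,dx$. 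Since $\|c_2\|_\infty\leq 1$, Cauchy--Schwarz, the Poincar\'e inequality, and $\|\varphi_1-\varphi_2\|_{\L2}\leq\|\varphi_1-\varphi_2\|_\infty$ give $\|c_1-c_2\|_{\H}\leq\alpha w^2C_P\|\varphi_1-\varphi_2\|_\infty$, whence $\|c_1-c_2\|_\infty\leq\alpha w^2C_P\|\varphi_1-\varphi_2\|_\infty$ by Morrey's inequality $\|\cdot\|_\infty\leq\|\cdot\|_{\H}$. This step is essentially routine, as the equation for $c$ is linear in $c$.

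For $A_2$ I would take $\sigma_1,\,\sigma_2\in\U$, set $\phi_i=A_2(\sigma_i)\in\V$, and argue as in the uniqueness part of Proposition \ref{wellpos-phi}: the function $F(\phi_1)-F(\phi_2)\in H^1_{0,1}(\domain)$ solves
$$ -\partial_x^2(F(\phi_1)-F(\phi_2))=w^2\left[g(\phi_1)(\Gamma(\sigma_1)-\Gamma(\sigma_2))+(g(\phi_1)-g(\phi_2))\Gamma(\sigma_2)\right]. $$
Testing against $F(\phi_1)-F(\phi_2)$, I would estimate the first right-hand term by $|g(\phi_1)|\leq\gmax$ together with the Lipschitz continuity of $\Gamma$, Cauchy--Schwarz and Poincar\'e, obtaining a contribution $w^2\gmax L_\Gamma C_P\|\sigma_1-\sigma_2\|_\infty\|F(\phi_1)-F(\phi_2)\|_{\H}$; and the second by the Lipschitz continuity of $g$, then $|\phi_1-\phi_2|=|f(F(\phi_1))-f(F(\phi_2))|\leq\Lf|F(\phi_1)-F(\phi_2)|$ (valid since $F(\phi_i)\in[F(\ep),\,1]$) and Poincar\'e, obtaining a contribution $(\beta_2w)^2\|F(\phi_1)-F(\phi_2)\|_{\H}^2$. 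Collecting terms yields $(1-(\beta_2w)^2)\|F(\phi_1)-F(\phi_2)\|_{\H}^2\leq w^2\gmax L_\Gamma C_P\|\sigma_1-\sigma_2\|_\infty\|F(\phi_1)-F(\phi_2)\|_{\H}$, and here the standing hypothesis $\beta w<1$ enters decisively: since $\beta_2w\leq\beta w<1$, the factor $1-(\beta_2w)^2$ is strictly positive, so I can divide, apply Morrey, and then use $\|\phi_1-\phi_2\|_\infty\leq\Lf\|F(\phi_1)-F(\phi_2)\|_\infty$ to conclude $\|\phi_1-\phi_2\|_\infty\leq\tfrac{\Lf w^2\gmax L_\Gamma C_P}{1-(\beta_2w)^2}\|\sigma_1-\sigma_2\|_\infty$.

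Finally, $A=A_2\circ A_1$ is Lipschitz as a composition of Lipschitz maps, with constant no larger than the product of the two constants above. The only delicate point is the estimate for $A_2$: the nonlinearity forces one to work with the increment of $F(\phi)$ rather than with $\phi$ itself and to absorb the $(\beta_2w)^2$-term on the left, which is precisely the mechanism — and the same constraint $\beta_2w<1$ — already used for uniqueness in Proposition \ref{wellpos-phi}; nothing genuinely new is required.
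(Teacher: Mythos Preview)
Your proposal is correct and follows essentially the same route as the paper: for $A_1$ the paper writes the linear equation satisfied by the difference $c_2-c_1$ and invokes its Lemma~\ref{lemma-uhk} (whose proof is exactly the energy-plus-Morrey argument you spell out), and for $A_2$ the paper performs verbatim the computation you describe---splitting the right-hand side, absorbing the $(\beta_2w)^2$ term, dividing, and passing from $F(\phi)$ back to $\phi$ via the Lipschitz constant $\Lf$---arriving at the same constant $\dfrac{w^2\gmax L_\Gamma C_P\Lf}{1-(\beta_2w)^2}$. The only cosmetic difference is that the paper's use of Lemma~\ref{lemma-uhk} yields $\alpha w^2$ rather than your $\alpha w^2 C_P$ for the $A_1$ constant, which is immaterial.
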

\begin{proof}
\begin{enumerate}
\item Let us begin by considering $A_1$. Given $\varphi_1,\,\varphi_2\in\V$, let
$c_1=A_1(\varphi_1)$, $c_2=A_1(\varphi_2)$ be the corresponding solutions in
$\U$ to Problem \eqref{bvp-c} and set $u=c_2-c_1\in\C$. Then $u$ solves the
problem
\begin{equation*}
    \left\{
    \begin{array}{rcll}
        -\partial_x^2u+\alpha w^2\varphi_2u & = &
            -\alpha w^2c_1(\varphi_2-\varphi_1) & \text{in\ } \domain \\
        \\[-0.3cm]
        u & = & 0 & x=0 \\
        \\[-0.3cm]
        \partial_xu & = & 0 & x=1.
    \end{array}
    \right.
\end{equation*}
Setting
$$ h(x)=\alpha w^2\varphi_2(x), \qquad
    g(x)=-\alpha w^2c_1(x)(\varphi_2(x)-\varphi_1(x)), $$
we have $h,\,g\in\C$, $h\geq 0$ on $\bar\domain$, and moreover
$$ \|g\|_\infty\leq\alpha w^2\|\varphi_2-\varphi_1\|_\infty $$
because $c_1\in\U$ satisfies $\vert c_1\vert\leq 1$ on $\bar\domain$. >From Lemma
\ref{lemma-uhk} we deduce then
$$ \|A_1(\varphi_2)-A_1(\varphi_1)\|_\infty=\|c_2-c_1\|_\infty=\|u\|_\infty
    \leq\|g\|_\infty\leq\alpha w^2\|\varphi_2-\varphi_1\|_\infty, $$
whence the thesis follows.

\item Concerning $A_2$, let $\sigma_1,\,\sigma_2\in\U$ and let
$\phi_1=A_2(\sigma_1)$, $\phi_2=A_2(\sigma_2)$ be the solutions in $\V$ to
Problem \eqref{bvp-phi}, respectively. Subtracting the corresponding equations
we find
\begin{align*}
    -\partial_x^2(F(\phi_2)-F(\phi_1)) &= w^2\left[g(\phi_2)
        \Gamma(\sigma_2)-g(\phi_1)\Gamma(\sigma_1)\right] \\
    &= w^2\left[g(\phi_2)-g(\phi_1)\right]\Gamma(\sigma_2)+
        w^2g(\phi_1)[\Gamma(\sigma_2)-\Gamma(\sigma_1)].
\end{align*}
After observing that $F(\phi_2)-F(\phi_1)\in H^1_{0,1}(\domain)$,
$\partial_x(F(\phi_2)-F(\phi_1))=0$ for $x=0$, we multiply both sides by
$F(\phi_2)-F(\phi_1)$ and integrate by parts over $\domain$, like in the proof
of Proposition \ref{wellpos-phi}, to obtain
\begin{align*}
\|F(\phi_2)-F(\phi_1)\|_{\H}^2 &\leq
    w^2\Gmax\intgrl_0^1\vert g(\phi_2)-g(\phi_1)\vert\cdot
        \vert F(\phi_2)-F(\phi_1)\vert\,dx \\
    &\phantom{\leq}+w^2\intgrl_0^1\vert g(\phi_1)\vert\cdot
        \vert\Gamma(\sigma_2)-\Gamma(\sigma_1)\vert\cdot\vert
F(\phi_2)-F(\phi_1)\vert\,
        dx \\
    &\leq \beta_2^2w^2\|F(\phi_2)-F(\phi_1)\|_{\H}^2 \\
    &\phantom{\leq} +w^2\gmax L_\Gamma C_P\|\sigma_2-\sigma_1\|_{\L2}
        \|F(\phi_2)-F(\phi_1)\|_{\H}.
\end{align*}
Notice that in handling the second integral at the right-hand side we have used
Cauchy-Schwarz inequality in $\L2$. Thus
\begin{multline*}
    \left(1-\beta_2^2w^2\right)\|F(\phi_2)-F(\phi_1)\|_{\H}^2\leq \\
        w^2\gmax L_\Gamma C_P\|\sigma_2-\sigma_1\|_{\L2}
        \|F(\phi_2)-F(\phi_1)\|_{\H},
\end{multline*}
whence, recalling that $\beta_2w\leq\beta w<1$ and dividing by
$\|F(\phi_2)-F(\phi_1)\|_{\H}$ if $\phi_1\ne\phi_2$,
$$ \|F(\phi_2)-F(\phi_1)\|_{\H}\leq
    \frac{w^2\gmax L_\Gamma C_P}{1-\beta_2^2w^2}
    \|\sigma_2-\sigma_1\|_{\L2}. $$
Furthermore, by Morrey inequality and using $\sigma_1,\,\sigma_2\in\U\subset\C$
we get
$$ \|F(\phi_2)-F(\phi_1)\|_\infty\leq
    \frac{w^2\gmax L_\Gamma C_P}{1-\beta_2^2w^2}\|\sigma_2-\sigma_1\|_\infty. $$
But
$$ \vert\phi_2-\phi_1\vert=\vert fF(\phi_2)-fF(\phi_1)\vert\leq
    \Lf\vert F(\phi_2)-F(\phi_1)\vert, $$
so that we finally have
$$ \|A_2(\sigma_2)-A_2(\sigma_1)\|_\infty=\|\phi_2-\phi_1\|_\infty \\
    \leq\frac{w^2\gmax L_\Gamma C_P\Lf}{1-\beta_2^2w^2}
        \|\sigma_2-\sigma_1\|_\infty $$
as desired.

\item Lipschitz continuity of $A$ follows immediately by composition. \qedhere
\end{enumerate}
\end{proof}

\begin{proposition}
The operator $A_1:\V\to\U$ is compact, and so is $A:\V\to\V$.
\end{proposition}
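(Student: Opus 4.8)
The plan is to prove that $A_1$ sends the bounded set $\V$ into a relatively compact subset of $\C$; since $A_1$ is continuous by Proposition~\ref{lipschitz-A12}, this yields the compactness of $A_1$, and that of $A$ will then follow by composition.

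First I would establish a bound on the first (and second) derivatives of $c=A_1(\varphi)$ that is \emph{uniform} with respect to $\varphi\in\V$. From the proof of Proposition~\ref{wellpos-c} we already know that every such $c$ is a classical solution, so that $\partial_x^2c=\alpha w^2\varphi c$ on $\domain$ and
$$ \partial_xc(x)=-\alpha w^2\intgrl_x^1\varphi c\,d\xi,\qquad x\in\bar\domain. $$
Since $0\le c\le 1$ (because $c\in\U$) and $0\le\varphi\le 1$ (because $\varphi\in\V$), these identities give at once $\|\partial_xc\|_\infty\le\alpha w^2$ and $\|\partial_x^2c\|_\infty\le\alpha w^2$, with constants not depending on the particular choice of $\varphi$. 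Consequently the family $\{A_1(\varphi):\varphi\in\V\}$ is equibounded in $\C$ (by $1$) and uniformly Lipschitz, hence equicontinuous; the Ascoli--Arzel\`a theorem then yields that it is relatively compact in $\C$. As $\U$ is closed in $\C$, the set $\overline{A_1(\V)}$ is a compact subset of $\U$, which is precisely the compactness of $A_1:\V\to\U$.

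For $A=A_2\circ A_1$ I would then argue as follows. Put $K:=\overline{A_1(\V)}$, a compact subset of $\U$; since $A_2:\U\to\V$ is continuous (again Proposition~\ref{lipschitz-A12}), the image $A_2(K)$ is compact in $\V$. Because $A(\V)=A_2(A_1(\V))\subseteq A_2(K)$, the set $A(\V)$ is relatively compact, and together with the continuity of $A$ this proves that $A:\V\to\V$ is compact.

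I expect no genuine obstacle here: the one point deserving a little care is that the derivative bound for $c$ must hold uniformly over all admissible data $\varphi$, which is immediate from the integral representation of $\partial_xc$ combined with the uniform bounds $0\le c\le 1$, $0\le\varphi\le 1$ encoded in the definitions of $\U$ and $\V$. The remaining ingredients are the Ascoli--Arzel\`a theorem and the elementary fact that the continuous image of a compact set is compact.
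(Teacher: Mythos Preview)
Your proof is correct and follows essentially the same route as the paper: the uniform bound $\|\partial_x c\|_\infty\le\alpha w^2$ is obtained from the integral representation of $\partial_x c$ (via the boundary condition at $x=1$) together with $0\le\varphi,c\le 1$, and then Ascoli--Arzel\`a gives relative compactness of $A_1(\V)$. For $A=A_2\circ A_1$ the paper uses a sequential argument (extract a convergent subsequence of $\{c_n\}$ by compactness of $A_1$, then apply continuity of $A_2$), whereas you phrase it topologically via ``continuous image of a compact set is compact''; the two formulations are equivalent.
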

\begin{proof}
\begin{enumerate}
\item Continuity of $A_1$ is implied by Proposition \ref{lipschitz-A12}, hence
in order to have compactness we need to prove that $A_1(\V)$ is relatively
compact in $\U$. We do it by showing that for any $\{\varphi_n\}\subset\V$ the
sequence $\{c_n\}=\{A_1(\varphi_n)\}\subset\U$ contains a convergent
subsequence.

Due to $\|\varphi_n\|_\infty,\,\|c_n\|_\infty\leq 1$, Eq. \eqref{bvp-c} yields
$\|\partial_x^2c_n\|_\infty\leq\alpha w^2$ for all $n$, hence, using the
boundary condition at $x=1$,
$$ \vert\partial_xc_n(x)\vert\leq\intgrl_x^1\vert
    \partial_x^2c_n(\xi)\vert\,d\xi\leq\alpha w^2. $$
In view of these estimates, we conclude that the sequence $\{c_n\}$ is uniformly
bounded and equicontinuous. Owing to Ascoli-Arzel\`a compactness criterion, we
can therefore extract a subsequence $\{c_{n_k}\}$ converging uniformly on
$\bar\domain$ to some $c\in\overline{A_1(\V)}$:
$$ \lim_{k\to\infty}\|c_{n_k}-c\|_\infty=0, $$
whence the compactness of $A_1$ follows.

\item To show the compactness of $A$ we rely on that $A=A_2\circ A_1$.
Continuity of $A$ is implied by Proposition \ref{lipschitz-A12}. Moreover, if
$\{\varphi_n\}$ is a sequence in $\V$ and we let $c_n=A_1(\varphi_n)$,
$\phi_n=A_2(c_n)=A(\varphi_n)$, the compactness of $A_1$ previously proved
allows us to assume, passing to a subsequence if necessary, that $\{c_n\}$
converges in $U$. But then the continuity of $A_2$ implies that also
$\{\phi_n\}$ converges in $\V$, and we have the thesis. \qedhere
\end{enumerate}
\end{proof}

\begin{theorem}
Let Assumptions \ref{assF}, \ref{assg}, \ref{assGamma} hold with $\beta w<1$,
where $\beta=\beta(\ep)$ is the constant defined by Eq. \eqref{beta}.
Then there exists a solution $(\phi,\,c)\in\V\times\U$ to Problem \eqref{bvp}
satisfying the a priori estimate
$$ \|1-c\|_\infty\leq\alpha w^2\|\phi\|_{\L2}. $$
\end{theorem}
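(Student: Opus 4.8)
The plan is to obtain the solution as a fixed point of the operator $A=A_2\circ A_1:\V\to\V$, via Schauder's fixed point theorem, and then to read the a priori estimate for $c$ directly off Problem \eqref{bvp-c}.

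First I would check that the set $\V$ has the structure required by Schauder's theorem. As a subset of the Banach space $\C$, $\V$ is convex and closed, being cut out by the two continuous constraints $\ep\le\phi\le 1$; it is bounded (by $1$ in the $\infty$-norm) and nonempty, since the constant function $\phi\equiv\phi_0$ belongs to $\V$ because $\ep<\phi_0<1$. Under the standing assumption $\beta w<1$, Propositions \ref{wellpos-c} and \ref{wellpos-phi} guarantee that $A_1:\V\to\U$ and $A_2:\U\to\V$, hence $A:\V\to\V$, are well defined; Proposition \ref{lipschitz-A12} gives their (Lipschitz) continuity, and Proposition \ref{compact-A12} gives the compactness of $A$. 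Schauder's fixed point theorem therefore applies and produces a function $\phi\in\V$ with $A(\phi)=\phi$. Setting $c:=A_1(\phi)$, Proposition \ref{wellpos-c} ensures $c\in\U$, and by the very definition of $A=A_2\circ A_1$ the pair $(\phi,\,c)\in\V\times\U$ solves Problem \eqref{bvp}.

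It then remains to establish the a priori bound. Since $c=A_1(\phi)$ solves Problem \eqref{bvp-c} with the choice $\varphi=\phi$, the shifted unknown $\bar c:=c-1$ satisfies $-\partial_x^2\bar c+\alpha w^2\phi\bar c=-\alpha w^2\phi$ in $\domain$, together with $\bar c=0$ at $x=0$ and $\partial_x\bar c=0$ at $x=1$; this is precisely the setting of Lemma \ref{lemma-uhk}, case (i), with $h=\alpha w^2\phi\ge 0$ and $k=-\alpha w^2\phi$. Retracing the proof of that lemma, the Lax--Milgram estimate gives $\|\bar c\|_\H\le\|k\|_{\L2}=\alpha w^2\|\phi\|_{\L2}$, and Morrey's inequality $\|\bar c\|_\infty\le\|\bar c\|_\H$ then yields $\|1-c\|_\infty=\|\bar c\|_\infty\le\alpha w^2\|\phi\|_{\L2}$, which is the asserted estimate; one could in fact gain an extra factor $C_P$ here, but this refinement is not needed.

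I would expect no genuine obstacle in this argument: all the analytic content --- solvability of the two decoupled subproblems, the invariance $A(\V)\subseteq\V$ secured by the condition $\beta w<1$, continuity, and compactness --- has already been established in Propositions \ref{wellpos-c}--\ref{compact-A12}. The only points deserving attention are the (routine) verification that $\V$ is a legitimate Schauder set and the care needed to extract the a priori estimate in the sharper $\L2$-form, rather than the cruder bound $\|1-c\|_\infty\le\alpha w^2\|\phi\|_\infty$ that a direct appeal to Lemma \ref{lemma-uhk} would provide.
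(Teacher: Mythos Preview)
Your proposal is correct and follows essentially the same route as the paper: verify that $\V$ is closed and convex in $\C$, invoke Propositions \ref{wellpos-c}--\ref{compact-A12} to apply Schauder's fixed point theorem to $A$, and read off the a priori estimate from the Lax--Milgram bound for the shifted unknown $\bar c=c-1$ in Problem \eqref{bvp-c} with $\varphi=\phi$. Your added remarks (nonemptiness and boundedness of $\V$, and the possible extra $C_P$ factor) are accurate refinements that the paper leaves implicit.
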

\begin{proof}
We know from Proposition \ref{compact-A12} that $A:\V\to\V$ is compact. We claim
now that $\V$ is convex and closed in $\C$.
\begin{enumerate}
\item Convexity: Let $u,\,v\in\V$, $\lambda\in[0,\,1]$, and define
$$ s_\lambda(x)=\lambda u(x)+(1-\lambda)v(x), \quad x\in\bar\domain. $$
Since $u,\,v$ are continuous so is $s_\lambda$, and moreover:
\begin{enumerate}
\item [(i)] $s_\lambda(x)\geq\lambda\ep+(1-\lambda)\ep=\ep$
\item [(ii)] $s_\lambda(x)\leq\lambda+1-\lambda=1$
\end{enumerate}
for all $x\in\bar\domain$, hence $s_\lambda\in\V$ for all $0\leq\lambda\leq 1$.

\item Closure: Let $\{u_n\}\subset\V$ be a sequence converging to some $u\in\C$,
that is $\|u_n-u\|_\infty\to 0$ when $n\to\infty$. Then $\{u_n\}$ converges
pointwise to $u$ on $\bar\domain$, hence we must have
$$ u(x)-\ep=\lim_{n\to\infty}\left(u_n(x)-\ep\right)\geq 0, $$
and analogously
$$ 1-u(x)=\lim_{n\to\infty}\left(1-u_n(x)\right)\geq 0. $$
We conclude therefore $\ep\leq u(x)\leq 1$ for all $x\in\bar\domain$, that
is $u\in\V$.
\end{enumerate}

According to Schauder Fixed Point Theorem, $A$ has a fixed point $\phi\in\V$.
Setting $c=A_1(\phi)$, we deduce that the pair $(\phi,\,c)\in\V\times\U$ solves
Problem \eqref{bvp}.

The \emph{a priori} estimate on $\phi,\,c$ readily follows from Lax-Milgram
Theorem applied to Problem \ref{wellpos-c} with $\varphi=\phi$.
\end{proof}

\begin{theorem}
Let $\phi\in\V$ be any solution to Problem \eqref{bvp}. Then
$$ \|\phi-\phi_0\|_\infty\leq\frac{\gmax}{\Lg C_P}(\beta_2w)^2. $$
\end{theorem}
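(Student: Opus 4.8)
The plan is to reuse, almost verbatim, the \emph{a priori} bound on $F(\phi)-F(\phi_0)$ that was already extracted inside the proof of Proposition \ref{wellpos-phi}, and then to transfer it to a bound on $\phi-\phi_0$ by invoking the Lipschitz constant $\Lf$ of the inverse function $f=F^{-1}$ on $[F(\ep),\,1]$. No genuinely new idea is needed: the estimate is essentially a by-product of the existence theory.

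First I would test the equation $-\partial_x^2F(\phi)=w^2g(\phi)\Gamma(c)$ against $v:=F(\phi)-F(\phi_0)$. Since $\phi\in\V$ solves Problem \eqref{bvp}, we have $v\in H^1_{0,1}(\domain)$ because $v$ vanishes at $x=1$, and $\partial_xv=F'(\phi)\partial_x\phi=0$ at $x=0$; hence both boundary terms generated by the integration by parts drop out and one is left with
$$\|F(\phi)-F(\phi_0)\|_{\H}^2=w^2\intgrl_0^1 g(\phi)\Gamma(c)\bigl(F(\phi)-F(\phi_0)\bigr)\,dx.$$
Bounding $\vert g\vert\leq\gmax$ and $\vert\Gamma\vert\leq\Gmax$, then using Cauchy--Schwarz on $\domain$ (where $\|1\|_{\L2}=1$) followed by the Poincar\'e inequality $\|v\|_{\L2}\leq C_P\|v\|_{\H}$, I get $\|F(\phi)-F(\phi_0)\|_{\H}\leq w^2C_P\gmax\Gmax$, and by Morrey inequality $\|F(\phi)-F(\phi_0)\|_\infty\leq w^2C_P\gmax\Gmax$.

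Next, since $\phi\in\V$ forces $F(\phi)(x)\in[F(\ep),\,1]$ for every $x\in\bar\domain$, and likewise $F(\phi_0)\in[F(\ep),\,1]$, I apply the Lipschitz estimate for $f$ on that interval:
$$\vert\phi(x)-\phi_0\vert=\bigl\vert f(F(\phi(x)))-f(F(\phi_0))\bigr\vert\leq\Lf\,\bigl\vert F(\phi(x))-F(\phi_0)\bigr\vert\leq\Lf w^2C_P\gmax\Gmax.$$
It then only remains to rewrite the right-hand side: recalling $\beta_2=C_P\sqrt{\Lg\Gmax\Lf}$, one has $(\beta_2w)^2=C_P^2\Lg\Gmax\Lf w^2$, hence $\Lf w^2C_P\gmax\Gmax=\frac{\gmax}{\Lg C_P}(\beta_2w)^2$, which is precisely the claimed bound. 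The only places deserving a line of care are the vanishing of the boundary contributions in the integration by parts and the final algebraic rearrangement; there is no real obstacle to overcome here.
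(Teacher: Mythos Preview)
Your proof is correct and follows essentially the same approach as the paper: the paper simply cites the bound $\|F(\phi)-F(\phi_0)\|_{\H}\leq w^2C_P\gmax\Gmax$ from the proof of Proposition \ref{wellpos-phi} (which you re-derive here), then applies Morrey's inequality and the Lipschitz estimate for $f$ on $[F(\ep),\,1]$, and finally performs the same algebraic rearrangement via the definition of $\beta_2$.
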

\begin{proof}
>From Proposition \ref{wellpos-phi} we know
$$ \|F(\phi)-F(\phi_0)\|_{\H}\leq w^2C_P\gmax\Gmax. $$
Rearranging the coefficients according to the definition of $\beta_2$ given by
Eq. \eqref{beta12} we get
$$ \Lg\Lf\|F(\phi)-F(\phi_0)\|_{\H}\leq\frac{\gmax}{C_P}(\beta_2w)^2, $$
whence the thesis follows using Morrey inequality at the left-hand side and then
considering that
$$ \vert\phi(x)-\phi_0\vert\leq\Lf\vert F(\phi(x))-F(\phi_0)\vert $$
for all $x\in\bar\domain$ because $\phi\geq\ep$ on $\bar\domain$.
\end{proof}

\begin{theorem}
Let Assumption \ref{assGamma-2} hold and assume that a solution
$(\phi,\,c)\in\V\times\U$ exists to Problem \eqref{bvp}, such that $\phi$
fulfills the free boundary condition \eqref{bc-freebound}. Then
$\phi(x)\geq\phi_0$ for all $x\in\bar\domain$.
\end{theorem}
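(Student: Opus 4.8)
The plan is to reduce the statement to a one–dimensional monotonicity argument for the ``generalized'' density $u:=F(\phi)-F(\phi_0)$. Since $\phi\in\V$ and $c\in\U$ are continuous and $g,\,\Gamma$ are continuous, the right–hand side $h(x):=w^2g(\phi(x))\Gamma(c(x))$ is continuous on $\bar\domain$; hence from the equation $-\partial_x^2F(\phi)=h$ one gets $F(\phi)\in C^2(\bar\domain)$, and so $u\in C^2(\bar\domain)$ solves $-\partial_x^2u=h$ in $\domain$ with $u(1)=0$. Moreover, because $\partial_xu=F'(\phi)\partial_x\phi$ with $F'(\phi)>0$ on $\V$ (Assumption \ref{assF}, recalling $\phi\geq\ep>0$), the boundary condition $\partial_x\phi=0$ at $x=0$ of Problem \eqref{bvp} together with the free boundary condition \eqref{bc-freebound} give $\partial_xu(0)=\partial_xu(1)=0$. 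It therefore suffices to prove $u\geq 0$ on $\bar\domain$, which yields $F(\phi)\geq F(\phi_0)$ and hence, by monotonicity of $F$, the desired inequality $\phi\geq\phi_0$.

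Next I would pin down the sign of $h$. By Assumption \ref{assg} one has $g(\phi)\geq 0$ since $\phi\in[\ep,\,1]\subset[0,\,1]$, so the sign of $h$ coincides with that of $\Gamma(c)$. By Proposition \ref{wellpos-c} the function $c$ is nonincreasing on $\bar\domain$ with $c(0)=1$, and by Assumption \ref{assGamma-2} $\Gamma$ is nondecreasing with $\Gamma(1)>0$; hence $x\mapsto\Gamma(c(x))$ is continuous, nonincreasing on $\bar\domain$, and strictly positive at $x=0$. Consequently there exists $\bar x\in(0,\,1]$ such that $\Gamma(c(x))\geq 0$ for $x\in[0,\,\bar x]$ and $\Gamma(c(x))\leq 0$ for $x\in[\bar x,\,1]$, i.e. $h\geq 0$ on $[0,\,\bar x]$ and $h\leq 0$ on $[\bar x,\,1]$ (this is precisely the inner ``viable''/outer ``necrotic'' splitting discussed in Sect. \ref{sect-numres}).

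Then I would integrate the equation twice, exploiting both Neumann conditions. Integrating $-\partial_x^2u=h$ and using $\partial_xu(0)=0$ gives $\partial_xu(x)=-\int_0^x h\,d\xi$; in particular $\partial_xu(x)\leq 0$ for $x\in[0,\,\bar x]$. Integrating over the whole interval and using in addition $\partial_xu(1)=0$ yields $\int_0^1 h\,d\xi=0$, whence $\partial_xu(x)=\int_x^1 h\,d\xi$, and for $x\in[\bar x,\,1]$ the integrand is $\leq 0$, so again $\partial_xu(x)\leq 0$. Thus $\partial_xu\leq 0$ on all of $\bar\domain$, i.e. $u$ is nonincreasing; since $u(1)=0$ we conclude $u(x)\geq 0$, that is $F(\phi(x))\geq F(\phi_0)$, for every $x\in\bar\domain$, and the monotonicity of $F$ closes the argument.

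I expect the only delicate point to be the sign bookkeeping for $h$, namely checking that $\{h\geq 0\}$ and $\{h\leq 0\}$ are \emph{ordered} subintervals of $\domain$; this is exactly where the monotonicity of $c$ (Proposition \ref{wellpos-c}) and of $\Gamma$ (Assumption \ref{assGamma-2}) enter crucially. It is also worth stressing that both Neumann conditions $\partial_xu(0)=\partial_xu(1)=0$ are needed — the second one being precisely the free boundary condition \eqref{bc-freebound} — so that the conclusion genuinely pertains to the free boundary problem and not to Problem \eqref{bvp} alone; note moreover that no smallness assumption on $\beta w$ is required for this particular argument.
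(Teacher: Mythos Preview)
Your proof is correct and follows essentially the same approach as the paper's: both arguments use the monotonicity of $c$ and of $\Gamma$ to split $\bar\domain$ into an interval where $g(\phi)\Gamma(c)\geq 0$ and one where $g(\phi)\Gamma(c)\leq 0$, then combine the two Neumann conditions $\partial_xF(\phi)(0)=\partial_xF(\phi)(1)=0$ to conclude that $F(\phi)$ is nonincreasing, whence $F(\phi)\geq F(\phi_0)$ from the Dirichlet condition at $x=1$. The only cosmetic differences are that the paper phrases the monotonicity step via concavity/convexity of $F(\phi)$ rather than via explicit integration, and argues the final inequality by contradiction rather than directly.
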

\begin{proof}
\begin{enumerate}
\item Let us integrate once the differential equation for $\phi$ on
$\bar\domain$. Owing to the boundary condition $\partial_x\phi(0)=0$ and to the
free boundary condition $\partial_x\phi(1)=0$ we find
$$ \intgrl_0^1 g(\phi)\Gamma(c)\,dx=0. $$
Since $g(\phi)\geq 0$ is continuous and cannot identically vanish on
$\bar\domain$ (basically because the constant $\phi\equiv 1$ does not solve
Problem \eqref{bvp}, as it does not match the boundary condition
$\phi(1)=\phi_0<1$), we deduce that either $\Gamma(c)\equiv 0$ on $\bar\domain$
or $\Gamma(c)$ changes sign on $\domain$. Observing that, due to the properties
of $c$ (cf. Proposition \ref{wellpos-c}) and of $\Gamma$ (cf. Assumption
\ref{assGamma-2}), the function $x\mapsto \Gamma(c(x))$ is continuous and
nonincreasing on $\bar\domain$ with $\Gamma(c(0))=\Gamma(1)>0$, we conclude that
there must exist $\bar x\in\domain$ such that $\Gamma(c)>0$ in $[0,\,\bar x)$
and $\Gamma(c)\leq 0$ in $[\bar x,\,1]$.

\item From the equation
$$ -\partial^2_xF(\phi)=w^2g(\phi)\Gamma(c) \quad \text{in\ } \domain $$
we discover that
$$ \partial_x^2F(\phi)\ \text{is\ }
    \begin{cases}
        \leq 0 & \text{for\ } x\in[0,\,\bar x) \\
        \\[-0.3cm]
        \geq 0 & \text{for\ } x\in[\bar x,\,1],
    \end{cases}
$$
hence that $F(\phi)$ is concave in the interval $[0,\,\bar x]$ and convex in
the interval $(\bar x,\,1]$. Since $\partial_xF(\phi)=F'(\phi)\partial_x\phi$,
we further obtain $(\partial_xF(\phi))(0)=0$, whence $\partial_xF(\phi)\leq 0$
in $[0,\,\bar x)$, and $(\partial_xF(\phi))(1)=0$, thus $\partial_xF(\phi)\leq
0$ also in $[\bar x,\,1]$. In conclusion, we have $\partial_xF(\phi)\leq 0$ on
the whole $\bar\domain$.

\item Assume now by contradiction that $(F(\phi))(x)<F(\phi_0)$ for some
$x\in[0,\,1)$. Then
$$ (F(\phi))(1)=(F(\phi))(x)+\intgrl_x^1\partial_xF(\phi)\,d\xi<F(\phi_0), $$
which however is incompatible with the boundary condition $\phi(1)=\phi_0$ of
Problem \eqref{bvp}. Hence $F(\phi)\geq F(\phi_0)$ on $\bar\domain$.

\item Finally, we see that it is impossible to have $\phi(x)<\phi_0$ at any
$x\in\bar\domain$, for this would imply $(F(\phi))(x)<F(\phi_0)$ which
contradicts the previous result. Therefore $\phi\geq\phi_0$ on $\bar\domain$
and we have the thesis. \qedhere
\end{enumerate}
\end{proof}

\begin{proposition}
Let $\czero_w\in\U$ be the solution to Problem \eqref{bvp-c0}. Then:
\begin{enumerate}
\item [(i)] The mapping $w\mapsto\czero_w(x)$ for $w\in\R_+$, $x\in\bar\domain$
is continuous and nonincreasing.

\item [(ii)] For $w\to+\infty$ the functions $\czero_w$ tend pointwise to the
function
$$ \czero_\infty(x)=
    \begin{cases}
        1 & \text{if\ } x=0 \\
        \\[-0.3cm]
        0 & \text{if\ } x\ne 0.
    \end{cases}
$$

\item [(iii)] For all $c_0\in(0,\,1)$ there exists $w_\ast>0$ such that if
$w\geq w_\ast$ the equation
$$ \czero_w(x)=c_0 $$
in the unknown $x$ has exactly one solution $\bar x_w\in\bar\domain$.

\item [(iv)] For any fixed $c_0\in(0,\,1)$ one has $\bar x_w\to 0$ when
$w\to+\infty$.
\end{enumerate}
\end{proposition}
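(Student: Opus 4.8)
The plan is to work directly from the closed-form expression \eqref{czero}. Throughout, set $\lambda:=\sqrt{\alpha\phi_0}>0$, so that $\czero_w(x)=\cosh(\lambda w(1-x))/\cosh(\lambda w)$. This is a ratio of smooth functions of $(w,x)$ whose denominator never vanishes, which immediately gives the continuity of $w\mapsto\czero_w(x)$ asserted in (i). For the monotonicity in $w$, I would fix $x\in\bar\domain$, put $s:=1-x\in[0,1]$, and differentiate: after dividing numerator and denominator of $\partial_w\czero_w(x)$ by $\cosh(\lambda w s)\cosh(\lambda w)>0$, one sees that $\partial_w\czero_w(x)$ has the sign of $s\tanh(\lambda w s)-\tanh(\lambda w)$. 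Since $0\le s\le 1$, $\lambda w s\le\lambda w$, and $\tanh$ is nondecreasing, one has $s\tanh(\lambda w s)\le\tanh(\lambda w s)\le\tanh(\lambda w)$, hence $\partial_w\czero_w(x)\le 0$, which proves (i).

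For (ii), note first that $\czero_w(0)=1$ for every $w$. For $x>0$ I would use $\cosh t\le e^{\lvert t\rvert}$ in the numerator and $\cosh t\ge\tfrac12 e^{\lvert t\rvert}$ in the denominator to obtain
$$ 0<\czero_w(x)=\frac{\cosh(\lambda w(1-x))}{\cosh(\lambda w)}\le 2e^{-\lambda w x}, $$
whose right-hand side tends to $0$ as $w\to+\infty$; this yields the claimed pointwise convergence to $\czero_\infty$.

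For (iii), I would observe that, for fixed $w>0$, the map $x\mapsto\czero_w(x)$ is strictly decreasing on $\bar\domain$, since $\partial_x\czero_w(x)=-\lambda w\,\sinh(\lambda w(1-x))/\cosh(\lambda w)<0$ for $x\in[0,1)$. Hence $\czero_w$ maps $\bar\domain$ bijectively onto the interval $[\,1/\cosh(\lambda w),\,1\,]$, and the equation $\czero_w(x)=c_0$ has a (necessarily unique) solution $\bar x_w\in\bar\domain$ if and only if $1/\cosh(\lambda w)\le c_0$, i.e. $\cosh(\lambda w)\ge 1/c_0$. Since $c_0\in(0,1)$ gives $1/c_0>1$ and $\cosh(\lambda\,\cdot)$ is continuous, strictly increasing, and unbounded on $\R_+$, there is a unique $w_\ast>0$ with $\cosh(\lambda w_\ast)=1/c_0$; for every $w\ge w_\ast$ one then has $\cosh(\lambda w)\ge 1/c_0$, giving (iii). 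Finally, for (iv) I would substitute $x=\bar x_w$ (which is positive, since $\czero_w(0)=1>c_0$) into the bound from (ii): $c_0=\czero_w(\bar x_w)\le 2e^{-\lambda w\bar x_w}$ forces $\bar x_w\le\ln(2/c_0)/(\lambda w)\to 0$ as $w\to+\infty$.

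I do not expect a serious obstacle: the whole statement reduces to elementary estimates on hyperbolic functions once \eqref{czero} is used. The only points demanding a little care are the sign discussion in (i) — where the two facts $0\le s\le 1$ and the monotonicity of $\tanh$ must be combined — and, in (iii), keeping track of strict versus non-strict inequalities so that "exactly one solution in $\bar\domain$" is correctly encoded by the threshold $\cosh(\lambda w)\ge 1/c_0$ (with the borderline case $\cosh(\lambda w)=1/c_0$ corresponding to the solution $x=1$).
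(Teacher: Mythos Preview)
Your argument is correct. Every step checks out: the sign analysis in (i) is right (the factorisation of the numerator of $\partial_w\czero_w(x)$ as $\lambda\cosh(\lambda ws)\cosh(\lambda w)\bigl[s\tanh(\lambda ws)-\tanh(\lambda w)\bigr]$ is exactly what you describe), the exponential sandwich in (ii) is valid, the range argument in (iii) is clean, and the explicit rate $\bar x_w\le\ln(2/c_0)/(\lambda w)$ in (iv) is a nice bonus.

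Your route, however, differs from the paper's. You work entirely from the closed form \eqref{czero} and reduce everything to calculus with hyperbolic functions. The paper instead proves (i) by PDE methods: it writes the equation satisfied by $u=\czero_{w_2}-\czero_{w_1}$, applies its Lemma~\ref{lemma-uhk} for continuity, and multiplies by $u^+$ and integrates for monotonicity; only for (ii) does it fall back on the explicit formula, and for (iv) it argues by monotonicity of $w\mapsto\bar x_w$ followed by a contradiction with (ii), without extracting a rate. The paper's approach has the advantage that the proof of (i) would survive unchanged if $\phi_0$ were replaced by a non-constant $\varphi\in\V$ (where no closed form is available), keeping it aligned with the rest of Section~\ref{sect-proofs}. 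Your approach is shorter and more elementary for this specific constant-coefficient problem, and in (iv) it actually yields quantitative information the paper does not obtain.
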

\begin{proof}
Some of these properties are more easily proved by referring to Problem
\eqref{bvp-c0} rather than to its explicit solution \eqref{czero}.
\begin{enumerate}
\item To show continuity of the mapping $w\mapsto\czero_w(x)$, let us fix
$w_1,\,w_2\geq 0$ and set $u=\czero_{w_2}-\czero_{w_1}\in\C$. Then $u$ solves
the problem
\begin{equation}
    \left\{
    \begin{array}{rcll}
        -\partial_x^2u+\alpha w_2^2\phi_0u & = &
            -\alpha(w_2^2-w_1^2)\phi_0\czero_{w_1} & \text{in\ } \domain \\
        \\[-0.3cm]
        u & = & 0 & x=0 \\
        \\[-0.3cm]
        \partial_xu & = & 0 & x=1
    \end{array}
    \right.
    \label{continuity-w}
\end{equation}
so that, owing to Lemma \ref{lemma-uhk}, we have
$$ \|\czero_{w_2}-\czero_{w_1}\|_\infty=\|u\|_\infty\leq
    \alpha\phi_0\vert w_2^2-w_1^2\vert\|\czero_{w_1}\|_\infty\leq
    \alpha\phi_0\vert w_2^2-w_1^2\vert. $$
Consequently $\czero_{w_2}(x)\to\czero_{w_1}(x)$ for all $x\in\bar\domain$ as
$w_2\to w_1$, and continuity follows.

\item We address now monotonicity of $w\mapsto\czero_w(x)$. Assuming $w_1\leq
w_2$, multiply the differential equation in \eqref{continuity-w} by
$u^+\in H^1_{0,0}(\domain)$, then integrate by parts over $\domain$:
$$ \|u^+\|_{\H}^2+\alpha w_2^2\phi_0\|u^+\|_{\L2}^2=
    -\alpha(w_2^2-w_1^2)\phi_0\intgrl_0^1\czero_{w_1}u^+\,dx\leq 0. $$
We deduce $\|u^+\|_{\H}=0$, whence $u\leq 0$ on $\bar\domain$ and
finally $\czero_{w_2}(x)\leq\czero_{w_1}(x)$ for all $x\in\bar\domain$, thus
completing the proof of (i).

\item Regarding (ii), it is customary to use the expression of $\czero_w$ given
by Eq. \eqref{czero}. In particular, it is evident that we must have
$\czero_\infty(0)=1$ because $\czero_w(0)=1$ for all $w\geq 0$ due to the
boundary conditions of Problem \eqref{bvp-c0}. Moreover, if $x\in(0,\,1]$ then
$$ \czero_w(x)\sim e^{-w\sqrt{\alpha\phi_0}x} \quad (w\to +\infty), $$
therefore $\czero_w(x)\to 0$ for $w\to +\infty$.

\item Fix now any $c_0\in(0,\,1)$. We observe that for $w=0$ we have
$\czero_0(x)\equiv 1$ while, as a consequence of (ii), for $w\to+\infty$ it
results $\czero_w(1)\to 0$. Since we know from (i) that $w\mapsto\czero_w(1)$ is
continuous, we conclude that $\czero_w(1)$ takes all values in $(0,\,1]$ as $w$
ranges in $\R_+$, hence there exists $w_\ast>0$ such that
$\czero_{w_\ast}(1)=c_0$. But (i) also tells us that $w\mapsto\czero_w(1)$ is
nonincreasing, thus $\czero_w(1)\leq c_0$ for all $w\geq w_\ast$. Finally, since
$x\mapsto\czero_w(x)$ is continuous on $\bar\domain$ with $\czero_w(0)=1>c_0$,
for all $w\geq w_\ast$ there must exist a point $\bar x_w\in\bar\domain$ such
that $\czero_w(\bar x_w)=c_0$. Uniqueness of $\bar x_w$ follows from the strict
monotonicity of $\czero_w(x)$ with respect to $x\in\bar\domain$ (use Proposition
\ref{wellpos-c} along with $\czero_w(x)>0$ for all $x\in\bar\domain$ to
discover $\partial_x\czero_w(x)<0$ over $\bar\domain$). This gives (iii).

We observe that, under the hypothesis $w\geq w_\ast$, it results
$\czero_w(x)\geq c_0$ for $x\in[0,\,\bar x_w]$ and $\czero_w(x)<c_0$ for
$x\in(\bar x_w,\,1]$.

\item Finally, let us consider statement (iv). First we check that the limit of
$\bar x_w$ for $w\to+\infty$ exists by observing that for $w_2\geq
w_1\geq w_\ast$ it results
$$ \czero_{w_1}(\bar x_{w_2})\geq \czero_{w_2}(\bar x_{w_2})=c_0=
    \czero_{w_1}(\bar x_{w_1}), $$
where we have used the monotonicity of $w\mapsto\czero_w$ and the definition
of $\bar x_{w_1},\,\bar x_{w_2}$ respectively. Since $x\mapsto\czero_{w_1}(x)$
is nonincreasing, this says $\bar x_{w_2}\leq \bar x_{w_1}$, hence
$w\mapsto\bar x_w$ is monotone nonincreasing and admits therefore a limit at
$+\infty$. Also notice that
\begin{equation}
    \lim_{w\to+\infty}\bar x_w=\inf_{w\geq w_\ast}\bar x_w\in\bar\domain.
    \label{limit-xw}
\end{equation}
Assume now by contradiction that $\lim_{w\to+\infty}\bar x_w=\xi>0$. Fix then
$w\geq w_\ast$ and consider the point $x=\xi\in\domain$. Owing to
\eqref{limit-xw} we have $\xi\leq\bar x_w$, whence $\czero_w(\xi)\geq
\czero(\bar x_w)=c_0$, for all $w\geq w_\ast$, so
$$ \czero_\infty(\xi)=\lim_{w\to+\infty}\czero_w(\xi)\geq c_0>0, $$
which is in contrast with (ii). Thus $\bar x_w\to 0$ and we obtain (iv).
\qedhere
\end{enumerate}
\end{proof}

\begin{theorem}
Let Assumptions \ref{assGamma}, \ref{assGamma-2} hold. Then there exists a
positive solution to Eq. \eqref{rootsGamma}.
\end{theorem}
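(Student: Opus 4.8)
The plan is to rephrase Eq.~\eqref{rootsGamma} as the search for a zero of a single real-valued function of the parameter $w$, and then to invoke the intermediate value theorem, exploiting the structural information about $\czero_w$ collected in Proposition~\ref{c0-w}. Concretely, I would introduce
$$ H(w):=\intgrl_0^1\Gamma(\czero_w(x))\,dx, \qquad w\in[0,\,+\infty), $$
so that a positive solution to Eq.~\eqref{rootsGamma} is precisely a positive zero of $H$.

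First I would check that $H$ is continuous on $[0,\,+\infty)$. If $w_n\to w$, part (i) of Proposition~\ref{c0-w} gives $\czero_{w_n}(x)\to\czero_w(x)$ for every $x\in\bar\domain$; since $\Gamma$ is continuous (Assumption~\ref{assGamma}) and $\vert\Gamma(\czero_{w_n}(x))\vert\leq\Gmax$ uniformly in $n$ and $x$, the dominated convergence theorem yields $H(w_n)\to H(w)$. Next I would evaluate the two relevant limiting behaviors of $H$. At $w=0$ the solution of Problem~\eqref{bvp-c0} is $\czero_0\equiv 1$, hence $H(0)=\Gamma(1)$, which is strictly positive by Assumption~\ref{assGamma-2}. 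For $w\to+\infty$, part (ii) of Proposition~\ref{c0-w} tells us that $\czero_w(x)\to 0$ for every $x\in(0,\,1]$, i.e. for almost every $x\in\domain$; therefore $\Gamma(\czero_w(x))\to\Gamma(0)$ a.e. on $\domain$, and another application of dominated convergence (again with the uniform bound $\Gmax$) gives $\lim_{w\to+\infty}H(w)=\Gamma(0)<0$, the sign being once more furnished by Assumption~\ref{assGamma-2}. Since $H$ is continuous with $H(0)>0$ and $H(w)<0$ for all sufficiently large $w$, the intermediate value theorem produces some $w_0>0$ with $H(w_0)=0$, which is the claimed positive solution of Eq.~\eqref{rootsGamma}.

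The argument is short once Proposition~\ref{c0-w} is in hand, so there is no obstacle of principle; the only point requiring a modicum of care is the passage to the limit $w\to+\infty$, where one must observe that the pointwise limit $\czero_\infty$ disagrees with the constant $0$ only on the null set $\{x=0\}$, so that the integrand $\Gamma(\czero_w(\cdot))$ converges almost everywhere and the dominated convergence theorem indeed applies. As an easy by-product I would also record that $H$ is nonincreasing on $\R_+$: for each fixed $x$ the map $w\mapsto\czero_w(x)$ is nonincreasing by Proposition~\ref{c0-w}(i) while $\Gamma$ is nondecreasing by Assumption~\ref{assGamma-2}, so $w\mapsto\Gamma(\czero_w(x))$ is nonincreasing; this does not affect the existence statement but helps localize the root.
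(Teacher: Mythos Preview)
Your proof is correct and follows the same overall scheme as the paper: define the same auxiliary function of $w$, check continuity, observe positivity at $w=0$, negativity for large $w$, and conclude by the intermediate value theorem. Where you differ is in the treatment of the large-$w$ regime. The paper splits the integral at the point $\bar x_w$ where $\czero_w=c_0$, writing $C(w)=C_1(w)+C_2(w)$ with $C_1$ over $[0,\bar x_w]$ and $C_2$ over $[\bar x_w,1]$; it then uses the mean value theorem plus Proposition~\ref{c0-w}(iv) to show $C_1(w)\to 0$, and monotonicity in $w$ to show $C_2(w)$ is eventually negative and nonincreasing, so that $C(w)<0$ for $w$ large. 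You bypass this entirely by applying dominated convergence directly to the whole integral, obtaining $\lim_{w\to\infty}H(w)=\Gamma(0)<0$ from Proposition~\ref{c0-w}(ii) alone.

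Your route is shorter and uses only parts (i) and (ii) of Proposition~\ref{c0-w}, whereas the paper's argument also requires (iii) and (iv). The paper's decomposition does make the mechanism (shrinking positive contribution near the vessel versus a growing negative bulk) more transparent, which could be useful if one later wanted effective estimates on $w_0$; but for the bare existence statement your dominated-convergence argument is cleaner. Your closing remark that $H$ is nonincreasing (from the monotonicity of $w\mapsto\czero_w(x)$ together with Assumption~\ref{assGamma-2}) is a nice bonus, since it immediately implies that the set of roots is an interval.
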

\begin{proof}
Let us denote
$$ C(w)=\intgrl_0^1\Gamma(\czero_w)\,dx; $$
since $\czero_0(x)\equiv 1$, we deduce $C(0)=\Gamma(1)>0$ in view of Assumption
\ref{assGamma-2}.
\begin{enumerate}
\item We claim that $C$ is continuous on $\R_+$. Indeed, let $w_1,\,w_2\geq
0$. Using Assumption \ref{assGamma} and Proposition \ref{c0-w} we get
\begin{align*}
    \vert C(w_2)-C(w_1)\vert &\leq \intgrl_0^1\vert\Gamma(\czero_{w_2})-
        \Gamma(\czero_{w_1})\vert\,dx \\
    & \leq L_\Gamma\|\czero_{w_2}-\czero_{w_1}\|_\infty\leq
        L_\Gamma\alpha\phi_0\vert w_2^2-w_1^2\vert,
\end{align*}
so that for $w_2\to w_1$ it results $C(w_2)\to C(w_1)$.
\setcounter{savedenumi}{\value{enumi}}
\end{enumerate}
Due to the positivity of $C(0)$, existence of a positive root of Eq.
\eqref{rootsGamma} is simply obtained by showing that $C(w)<0$ for $w$ large
enough: Continuity will do the rest.

For $c_0\in(0,\,1)$ fixed by Assumption \ref{assGamma-2}, let $w_\ast>0$
be the value of $w$ mentioned by Proposition \ref{c0-w}(iii): We prove that
there exists $w_{\ast\ast}>w_\ast$ such that $C(w)<0$ for all $w\geq
w_{\ast\ast}$. To this end, we will henceforth assume $w\geq w_\ast$ and rewrite
$C$ as
$$ C(w)=\intgrl_0^{\bar x_w}\Gamma(\czero_w)\,dx+
    \intgrl_{\bar x_w}^1\Gamma(\czero_w)\,dx=:C_1(w)+C_2(w). $$
\begin{enumerate}
\setcounter{enumi}{\value{savedenumi}}
\item Notice that $C_1(w)>0$ for all $w\geq w_\ast$, indeed
$\Gamma(\czero_w(x))>0$ for $x\in[0,\,\bar x_w)$. The Mean Value Theorem implies
the existence of $x_w\in(0,\,\bar x_w)$ such that
$$ C_1(w)=\bar x_w\Gamma(\czero_w(x_w)). $$
By consequence, using Proposition \ref{c0-w}(iv) we get $C_1(w)\leq\Gmax
\bar x_w\to 0$ for $w\to+\infty$.

\item Concerning $C_2$, we begin by observing that there exists $\bar w_\ast\geq
w_\ast$ such that $\Gamma(\czero_w(x))\leq 0$ for all $x\in[\bar x_w,\,1]$ and
all $w\geq\bar w_\ast$, with $\Gamma(\czero_w(x))$ not identically zero on
$[\bar x_w,\,1]$. Indeed, we know from Proposition \ref{c0-w}(ii) that the
$\czero_w$'s tend pointwise to zero almost everywhere in $[0,\,1]$ as
$w\to+\infty$, and from Assumption \ref{assGamma-2} that $\Gamma(0)<0$.
Consequently we also deduce $C_2(w)<0$ for all $w\geq \bar w_\ast$.

\item Next we claim that $C_2$ is a nonincreasing function of $w$. Given
$w_2\geq w_1\geq w_\ast$, we use the
monotonicity of $w\mapsto\czero_w$ and of $\Gamma$ to discover indeed
$$ C_2(w_1)=\intgrl_{\bar x_{w_1}}^1\Gamma(\czero_{w_1})\,dx\geq
    \intgrl_{\bar x_{w_1}}^1\Gamma(\czero_{w_2})\,dx\geq
    \intgrl_{\bar x_{w_2}}^1\Gamma(\czero_{w_2})\,dx=C_2(w_2). $$

\item Fix $\ep>0$. Since $C_1$ tends to zero for $w\to+\infty$, there
exists $w_{\ast\ast}\geq \bar w_\ast$ such that $C_1(w)<\ep$ for all $w\geq
w_{\ast\ast}$. Moreover, $C_2(w_{\ast\ast})<0$. Let us now choose
$\ep=\frac{1}{2}\vert C_2(w_{\ast\ast})\vert=
-\frac{1}{2}C_2(w_{\ast\ast})$. If $w\geq w_{\ast\ast}$, we find
$$ C(w)=C_1(w)+C_2(w)<\ep+C_2(w_{\ast\ast})=
    \frac{1}{2}C_2(w_{\ast\ast})<0 $$
and the thesis follows. \qedhere
\end{enumerate}
\end{proof}

\section{Conclusions and perspectives}
\label{sect-conclusions}
In this paper we have developed a macroscopic model which describes the
evolution of a tumor cord using the theory of mixtures. The tumor mass has been
modeled as a three-phasic porous medium composed by cells, extracellular fluid,
and extracellular matrix, growing along a blood vessel whence cells get the
necessary nutrient for their vital functions. As a matter of fact, since the
main interest was in investigating the growth process in connection with the
availability of nutrient, the extracellular matrix has been regarded as a rigid
non-remodeling scaffold in which cells live and proliferate and extracellular
fluid flows, therefore it has not played a primary role in the dynamics of the
system. By means of suitable principles of mixture theory, in particular the
reference to a Darcy-like framework to model interactions among different
components, the problem has been reduced to a system of two partial differential
equations. The first, for the cell volume ratio, is derived from the cell mass
balance in which the velocity is explicitly expressed in terms of the internal
stress state of the cells. It also includes a source/sink term describing
proliferation or death of the cells according to the current size of the cell
population itself, as well as to the available amount of oxygen present in the
mixture. The second, accounting for the dynamics of the nutrient, is a linear
parabolic equation modeling the diffusion of oxygen from the vessel wall toward
the mixture and its contemporaneous uptake by the cells. This simple setting has
to be understood in the light of the experimental evidence that the contribution
of a possible transport by the extracellular fluid to the motility of oxygen
molecules is definitely negligible with respect to their high diffusivity, hence
the corresponding advection terms can be rightly dropped from the equations.
Interaction of the growing tumor cord with a surrounding host tissue composed by
normal non-proliferating cells has also been considered.

The mathematical formalization of these ideas results in a free boundary
problem constituted by a system of two partial differential equations
supplemented by suitable boundary and interface conditions. The free boundary
is represented in this context by the interface separating the cord from the
host tissue, which changes in time according to the growth of the tumor in
interaction with the external normal cells and has to be determined as a
further unknown of the problem.

The model is intrinsically multidimensional. In addition, it is not linked to
a particular geometry because it does not assume any special configuration of
the domain. In view of this, it is in principle applicable to a wide range of
different scenarios. In this work, a two-dimensional problem has been
specifically considered, focusing on the axial growth of the cord along the
blood vessel and on its simultaneous expansion outward the vessel. Numerical
simulations have shown ability of the model to reproduce specific features of
vascular tumors, like the formation of an inner vital zone, constituted by
sufficiently fed cells that duplicate, and an outer necrotic zone at the
periphery of the cord, where conversely cells are not reached by a proper
quantity of nutrient and cannot hence proliferate nor survive. It is worth
pointing out that some different models of tumor growth assume \emph{a priori}
the existence of such zones and describe them invoking specific equations. In
the present model, their formation is instead recovered \emph{a posteriori} as a
consequence of the overall dynamics predicted by a unique set of equations, by
appealing to relatively general guidelines. Another interesting feature made
evident by the numerical results concerns the evolution of the system toward a
steady state. The cord clearly exhibits two different behaviors in the front
part, the head, which is vital since always globally fed by a sufficient amount
of oxygen, and in the rear part, the tail, which elongates as the head moves
forward along the blood vessel, keeping a rectilinear shape with a nearly
constant width in the transverse direction.

These considerations suggest that, as far as axial growth is concerned, two
main questions have to be addressed, namely the dynamics of the head in terms
of shape and velocity of propagation and the equilibrium of the tail in terms
of maximum penetration inside the host tissue. In the present work, we have
mathematically investigated this second issue, studying existence and
regularity of physically significant solutions that describe the distribution
of cells and nutrient across the tail at the steady state. By addressing
furthermore the free boundary problem via a perturbative expansion of the
solution, we have also been able to characterize qualitatively and
quantitatively the steady growth width of the cord.

The model developed in this paper should be regarded as minimal, in the sense
that it is deduced from few outline principles of a general well-coded theory,
and takes into account only those really fundamental aspects to obtain a
qualitative mathematical and biological description of the macroscopic
phenomenon under consideration. Of course, it is liable to many improvements,
which still at a basic level may involve, for instance, a more accurate
description of both host cells and extracellular matrix, that here are
essentially passive and lacking in their own physiology, and a more
sophisticated coupling between the dynamics of the tumor cells and the nutrient,
accounting for different uptake rates on the basis of the specific functions
(survival, proliferation) carried out by cells. However, we believe that a
minimal model, for which both a theoretical comprehension and validation are
possible in view of its relatively simple structure, constitutes a solid
foundation, hence a safe starting point, to tackle more complicated situations.

\section*{Acknowledgements}
The author is deeply indebted to prof. Luigi Preziosi, who had the early idea
of this research. This work would have never been possible without his
invaluable suggestions and his constant support. Furthermore, the author wants
to address special thanks to prof. Paolo Tilli for his kind friendship
demonstrated during several useful discussions about the contents of Sects.
\ref{sect-statprob}, \ref{sect-proofs} and in reading their first proofs.

\providecommand{\bysame}{\leavevmode\hbox
to3em{\hrulefill}\thinspace}
\providecommand{\MR}{\relax\ifhmode\unskip\space\fi MR }
% \MRhref is called by the amsart/book/proc definition of \MR.
\providecommand{\MRhref}[2]{%
  \href{http://www.ams.org/mathscinet-getitem?mr=#1}{#2}
} \providecommand{\href}[2]{#2}

\end{document}